\pgfplotsset{compat=1.10}
\newcommand{\calR}{\mathcal{R}}
\newcommand{\CC}{\mathcal{C}}
\newcommand{\I}{{\bf{I}}}
\newcommand{\hook}{\text{hook}}
\newcommand{\sched}{\textbf{sched}}
\newcommand{\RR}{\mathbb{R}}
\newcommand{\QQ}{\mathbb{Q}}
\newcommand{\NN}{\mathbb{N}}
\newcommand{\ZZ}{\mathbb{Z}}
\newcommand{\ass}{\text{assign}}
\newcommand{\clp}{\text{clp}}
\newcommand{\lex}{\text{lex}}
\newcommand{\config}{\text{conf}}
\newcommand{\conf}{\text{conf}}
\newcommand{\sos}{\text{SoS}}
\newcommand{\SA}{\text{SA}}
\newcommand{\LSP}{\text{LS}_+}
\newcommand{\largo}{\text{long}}
\newcommand{\corto}{\text{short}}
\newcommand{\pseudo}{\widetilde{\mathbb{E}}}
\newcommand{\row}{\text{row}}
\newcommand{\rowspace}{{\textbf{W}}_{\tau_{\lambda}}}
\newcommand{\rowgroup}{{\textbf{R}}_{\tau_{\lambda}}}
\newcommand{\tail}{\text{tail}}
\newcommand{\sym}{\text{sym}}
\newtheorem{theorem}{Theorem}
\newtheorem{lemma}{Lemma}
\newtheorem{definition}{Definition}
\newtheorem{example}{Example}
\newtheorem{proposition}{Proposition}
\newtheorem{claim}{Claim}
\newlength{\algofontsize}
\global\long\def\nrounds{(1/\varepsilon)^{2/\varepsilon^{2}}}%
\global\long\def\pnrounds{4/\varepsilon^{5}}%
\global\long\def\assign{\mathrm{assign}}%
\global\long\def\conf{\mathrm{conf}}%
\global\long\def\order{\mathrm{order}}%
\global\long\def\Jl{J_{\mathrm{long}}}%
\global\long\def\Jll{J_{\mathrm{long},\ell}}%
\global\long\def\N{\mathbb{N}}%
\begin{document}
	
	\algrenewcommand\algorithmicrequire{\textbf{Input:}}
	\algrenewcommand\algorithmicensure{\textbf{Output:}}
	
	\title{\vspace{0em} Breaking symmetries to rescue Sum of Squares\\ in the case of makespan scheduling\thanks{This work has been partially funded by Fondecyt Projects Nr. 1170223 and 1181527.}}
	
	\author{
	Victor Verdugo\thanks{Department of Mathematics, London School of Economics and Political Science.  {\tt v.verdugo@lse.ac.uk}}
	\thanks{Institute of Engineering Sciences, Universidad de O'Higgins. {\tt victor.verdugo@uoh.cl}}
	\and
	Jos\'{e} Verschae\thanks{Institute of Mathematical and Computational Engineering, Pontificia Universidad Católica. {\tt jverschae@uc.cl}}
	\and
	Andreas Wiese\thanks{Department of Industrial Engineering, Universidad de Chile. {\tt awiese@dii.uchile.cl}}
	}
\date{\vspace{-1em}}
\maketitle
\begin{abstract}
The Sum of Squares (\sos{}) hierarchy gives an automatized technique to create a family of increasingly tight convex relaxations for binary programs. There are several problems for which a constant number of rounds of this hierarchy give integrality gaps matching the best known approximation algorithms. 
For many other problems, however, ad-hoc techniques give better approximation ratios than \sos{} in the worst case, as shown by corresponding lower bound instances.
Notably, in many cases these instances are invariant under the action of a large permutation group. 
This yields the question how 
symmetries in a formulation degrade the performance of the relaxation obtained by the \sos{} hierarchy. 
In this paper, we study this for the case of the minimum makespan problem on identical machines. 
Our first result is to show that $\Omega(n)$ rounds of \sos{} applied over the \emph{configuration linear program} yields an integrality gap of at least $1.0009$, where $n$ is the number of jobs. 
This improves on the recent work by Kurpisz et al.~\cite[Math. Prog. 2018]{KMMMVW18} that shows an analogous result for the weaker LS$_+$ and SA hierarchies. Our result 
is based on tools from representation theory of symmetric groups. 
Then, we consider the weaker \emph{assignment linear program} and add a well chosen set of symmetry breaking inequalities that removes a subset of the machine permutation symmetries. 
We show that applying $2^{\tilde{O}(1/\varepsilon^2)}$ rounds of the $\SA$ hierarchy to this stronger linear program reduces the integrality gap to $1+\varepsilon$,
which yields a linear programming based polynomial time approximation scheme.
Our results suggest that for this classical problem, symmetries were the main barrier preventing the \sos{}/$\SA$ hierarchies to give relaxations of polynomial complexity 
with an integrality gap of~$1+\varepsilon$. 
We leave as an open question whether this phenomenon occurs for other symmetric problems.

\end{abstract}

\thispagestyle{empty}
\newpage

\tableofcontents

\section{Introduction}

The lift-and-project methods are powerful techniques for deriving convex relaxations of integer programs. The lift-and-project \emph{hierarchies}, such as Sherali-Adams (SA), Lovász-Schrijver (LS), and Sum of Squares (\sos{}), are systematic methods for obtaining a family of increasingly
tight relaxations, parameterized by the number of \emph{rounds} of the hierarchy. For all of them, applying $r$ rounds on a formulation with $n$ variables yields a convex relaxation with $n^{O(r)}$ variables in the lifted space. Taking $r=n$ rounds gives an exact description of the integer hull~\cite{Las01}, at the cost
of having an exponential number of variables. On the other hand, taking $r=O(1)$ rounds yields a description with only a polynomial number of variables. Arguably, it is not well understood for which problems these hierarchies, with a constant number of rounds, yield relaxations that match the respective best possible approximation algorithm. Indeed, there are some positive results, but there are also many other strong negative results for algorithmically easy problems. 
These lower bounds show a natural limitation on the power of hierarchies as one-fits-all techniques. Quite remarkably, the instances used for obtaining lower bounds often have a very symmetric structure~\cite{laurent2003lower,grigoriev2001,potechin17,KMMMVW18,RSS18}, which suggests a connection between the tightness of the relaxation
given by these hierarchies and symmetries. The primary purpose of this article is to study this connection for a specific relevant problem, namely, the minimum \emph{makespan scheduling on identical machines}.

\paragraph{Minimum makespan scheduling.}

\noindent This is one of the first problems considered under the lens
of approximation algorithms~\cite{graham66}, and it has been studied
extensively. The input of the problem consists of a set $J$ of $n$
jobs, each having an integral processing time $p_{j}>0$, and a set
$[m]=\{1,\ldots,m\}$ of $m$ identical machines. Given an assignment $\sigma:J\rightarrow [m]$,
the load of a machine $i$ is the total processing time of jobs assigned
to $i$, that is, $\sum_{j\in\sigma^{-1}(i)}p_{j}$. The objective
is to find an assignment of jobs to machines that minimizes the \emph{makespan},
that is, the maximum load. The problem is strongly NP-hard and admits
several \textit{polynomial-time approximation schemes} (PTAS) based on different
techniques, such as dynamic programming, integer programming on fixed dimension,
and integer programming under a constant number of constraints~\cite{HS87,AAWY97,AAW98,hochbaum96,jansen10,jansen16,EisenbrandW18}.

\paragraph{Integrality gaps.}

\noindent The minimum makespan problem has two natural linear relaxations,
which have been extensively studied in the literature. The \emph{assignment
linear program} uses binary variables $x_{ij}$ which indicates whether
a job $j$ is assigned to a machine $i$, for each $i\in [m],j\in J$.
The stronger \emph{configuration linear program} uses a variable $y_{iC}$
for each machine $i$ and multiset of processing times $C$, which
indicates whether $C$ is the multiset of processing times of jobs
assigned to $i$. Kurpisz et al.~\cite{KMMMVW18} showed that the
configuration linear program has an integrality gap of at least $1024/1023\approx1.0009$
even after $\Omega(n)$ of rounds of the $\LSP$ or $\SA$ hierarchies.
On the other hand, Kurpisz et al.~\cite{KMMMVW18}
leave open whether the \sos{} hierarchy applied to the configuration
linear program has an integrality gap of $1+\varepsilon$ after applying
a number of rounds that depends only on the constant $\varepsilon>0$,
i.e., $O_{\varepsilon}(1)$ rounds. Our first main contribution is
a negative answer to this question. 
\begin{theorem}\label{thm:negative} Consider the minimum makespan
problem on identical machines. For each $n\in\mathbb{N}$, there exists
an instance with $n$ jobs such that, after applying $\Omega(n)$
rounds of the $\sos$ hierarchy over the configuration linear program,
the obtained semidefinite relaxation has an integrality gap of at
least $1.0009$.
\end{theorem}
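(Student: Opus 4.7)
The plan is to lift the $\Omega(n)$-round $\SA/\LSP$ lower bound of Kurpisz et al.\ to the stronger $\sos$ hierarchy by constructing, for each $n$, a valid degree-$\Omega(n)$ pseudo-expectation $\pseudo$ on the configuration LP of a highly symmetric instance whose makespan value is roughly $1023$ while its integer optimum is at least $1024$. I would start from (a variant of) the Kurpisz et al.\ instance, which is invariant under a large group $G$ arising as a product of the symmetric group permuting identically-sized jobs and the symmetric group permuting identical machines. Since both the configuration LP and the $\sos$ cone are $G$-invariant, any candidate $\pseudo$ can be averaged into its $G$-symmetrization without altering feasibility or objective value, which collapses an otherwise enormous degree-$2r$ moment matrix into one whose entries depend only on orbit types of pairs of monomials in the $y_{iC}$ variables.

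The explicit construction of $\pseudo$ proceeds by prescribing its values on monomials in $y_{iC}$ according to their orbit under $G$. A natural template is to take the orbit-average of integer solutions on sub-instances supported on at most $r$ machines or $r$ jobs, which ensures local consistency with all configuration LP constraints of arity $\leq r$, while leaving enough freedom to drive the objective below the integer optimum. The resulting $\pseudo$ is then uniquely determined by a small number of parameters, one per orbit type, and these are tuned so that the makespan variable attains value $1023$ in $\pseudo$.

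The central technical step is to prove that the moment matrix $M_{\pseudo}$ is positive semidefinite. Since $M_{\pseudo}$ is $G$-equivariant, Schur's lemma implies that in a representation-theoretic basis it decomposes as a direct sum of blocks indexed by the irreducible representations of $G$ appearing in the polynomial algebra of degree $\leq r$. For the symmetric group these irreducibles are the Specht modules labelled by partitions $\lambda$, and for the product group $G$ they are tensor products of such. I would use (a variant of) Young's orthogonal basis to express each block as a small matrix whose entries are explicit linear combinations of the orbit parameters, and verify positivity block by block via direct eigenvalue estimates combined with standard hook-length and character bounds.

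The hardest step I expect is controlling the Specht-module blocks indexed by partitions $\lambda$ with many rows, both because their dimensions grow quickly and because their entries involve alternating sums in which cancellations must be handled carefully. This is precisely where the threshold $r = \Omega(n)$ becomes delicate: one must balance the slack coming from the small objective gap $1/1023$ against the norms of the blocks across the full range of partitions that contribute at round $r$. Once all blocks are shown to be PSD and the configuration LP constraints are verified from the orbit-consistency of $\pseudo$, evaluating $\pseudo$ on the objective yields the claimed integrality gap of at least $1024/1023 \approx 1.0009$ and finishes the proof of Theorem~\ref{thm:negative}.
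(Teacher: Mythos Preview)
Your high-level framework---reuse the Kurpisz et al.\ instance, exploit the $S_m$-invariance, and block-diagonalise the moment matrix via the isotypic decomposition indexed by partitions---is exactly what the paper does. Where your proposal diverges from the paper, and where I think it has a real gap, is in the two concrete steps: the construction of $\pseudo$ and the PSD verification.

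\textbf{Construction of $\pseudo$.} You propose to build $\pseudo$ as ``the orbit-average of integer solutions on sub-instances supported on at most $r$ machines or $r$ jobs.'' But the whole point of the instance at $T=1023$ is that it is \emph{integrally infeasible}: there are no integer solutions to average over, not even locally (the obstruction is a Petersen-graph $1$-factorisation argument). The paper instead writes down an explicit closed-form pseudoexpectation supported on the six matching configurations,
\[
\pseudo(y_S)=\frac{1}{(3k)_{|S|}}\prod_{j=1}^{6}(k/2)_{\delta_S(C_j)},
\]
with $(a)_b$ the falling factorial. Note the half-integer $k/2$ for odd $k$: this is not a mixture of integer points at all, and that is essential.

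\textbf{PSD verification.} You plan to verify positivity ``block by block via direct eigenvalue estimates combined with standard hook-length and character bounds,'' and you flag partitions with many rows as the delicate regime. The paper avoids all of this. Working in the quotient by the scheduling ideal (which adds $\sum_C y_{iC}=1$ on top of the Boolean ideal), it shows that every relevant monomial is a partial schedule, and then for each partition $\lambda$ it produces a tailored \emph{spanning set} of the row subspace of the form $y_T\,\mathcal{B}_{T,\gamma}$, where $T$ is a partial schedule over $\tail(\tau_\lambda)$ and $\mathcal{B}_{T,\gamma}=\sum_{A\in\mathcal{F}(T,\gamma)}y_A$ aggregates all extensions with a given configuration profile $\gamma$. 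The crucial technical fact is a \emph{pseudoindependence} identity for the conditional operator $\pseudo_T$:
\[
\pseudo_T(\mathcal{B}_{T,\gamma}\,\mathcal{B}_{T,\mu})=\pseudo_T(\mathcal{B}_{T,\gamma})\cdot\pseudo_T(\mathcal{B}_{T,\mu}),
\]
proved via a Chu--Vandermonde computation. This makes each block $\pseudo(Z^{\tau_\lambda})$ block-diagonal in $T$ and, within each $T$-block, equal to $\pseudo(y_T)$ times a rank-one matrix $vv^\top$ with $v_\gamma=\pseudo_T(\mathcal{B}_{T,\gamma})$. Positive semidefiniteness is then immediate---no eigenvalue or character estimates are needed, and there is no delicate balancing at the $\Omega(n)$ threshold. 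Your plan, as written, does not identify this factorisation, and without it the ``direct eigenvalue estimates'' route is not a proof.
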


Naturally, since the configuration linear program is stronger than the assignment linear program, our result holds if we apply $\Omega(n)$ rounds of $\sos$ over the assignment linear program. The proof of the lower bound relies on tools from representation theory of symmetric groups over polynomials rings,and it is inspired on the
recent work by Raymond et al.~\cite{RSST18} for symmetric sums of squares in hypercubes. It is based on constructing high-degree \textit{pseudoexpectations} on the one hand, and by obtaining symmetry-reduced decompositions of the polynomial ideal defined by the configuration linear program, on the other hand. The machinery from representation theory allows to restrict attention to invariant polynomials, and we combine this with a strong \textit{pseudoindependence} result for
a well chosen polynomial spanning set. Our analysis is also connected to the work of Razborov on flag algebras and graph densities, and we believe it can be of independent interest for analyzing lower bounds in the context of $\sos$ in presence of symmetries~\cite{razborov07,razborov10,RSS18}.

\paragraph{Symmetries and Hierarchies.}

Given the relation between hierarchies and symmetries above, it is natural to explore whether symmetry handling techniques might help to overcome the limitation given by Theorem~\ref{thm:negative}. 
A natural source of symmetry of the problem comes from the fact that the machines are identical: Given a schedule, we obtain another schedule with the same makespan by permuting the machines.
The same symmetries are encountered in the assignment and configuration linear programs, namely, if $\sigma:[m]\rightarrow [m]$ is a permutation and $(x_{ij})$ is a feasible solution to the assignment linear program then $(x_{\sigma(i)j})$ is also feasible. The same holds for solutions $(y_{iC})$ and $(y_{\sigma(i)C})$ for the configuration-LP. In other words, these linear programs are \emph{invariant} under the action of the symmetric group on the set of machines. 
The question we study is the following: \textit{Is it possible to obtain a polynomial size linear or semidefinite program with an integrality gap of at most $1+\varepsilon$ that is not invariant under the machine symmetries?} We aim to understand if these symmetries deteriorate the quality of the relaxations obtained from the \sos{} or SA hierarchies. This time, we provide a positive answer.

\begin{theorem}\label{thm:positive} Consider the problem of scheduling
identical machines to minimize the makespan. After adding linearly
many inequalities to the assignment linear program (for breaking symmetries),
$2^{\tilde{O}(1/\varepsilon^{2})}$
rounds of the \SA{} hierarchy yield a linear program with an integrality
gap of at most $1+\varepsilon$, for any $\varepsilon>0$. 
\end{theorem}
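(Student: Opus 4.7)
My strategy combines a standard PTAS-style reduction of makespan scheduling to a structured instance with a symmetry-breaking scheme designed specifically for the $\SA$ hierarchy. First I would binary-search over $O(\log n/\varepsilon)$ candidate values of the target makespan $T$. Fixing $T$, classify jobs as \emph{long} ($p_j \ge \varepsilon T$) or \emph{short} ($p_j < \varepsilon T$), and geometrically round every long processing time to a nearby multiple of $\varepsilon^{2} T$. This leaves at most $L = O(\tfrac{1}{\varepsilon}\log\tfrac{1}{\varepsilon})$ distinct long-job sizes, and since any machine in a $(1+\varepsilon)T$-feasible schedule carries at most $1/\varepsilon$ long jobs, the number of possible long-job \emph{configurations} per machine is at most $K = \binom{L+1/\varepsilon}{1/\varepsilon} = 2^{\tilde O(1/\varepsilon^{2})}$. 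I choose the number of $\SA$ rounds to be precisely $K$.

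\textbf{Symmetry-breaking inequalities.} Fix an arbitrary total order $C_1 \prec C_2 \prec \cdots \prec C_K$ on the set of configurations. The symmetry-breaking system enforces that the configuration of machine $i$ is $\succeq$-at least that of machine $i+1$. Concretely, for each $k$ and each $i$, the number of long jobs of signature ``$\succeq C_k$'' on machine $i$ should be at least the corresponding count on machine $i+1$. Each of these $O(mK)$ inequalities is expressible as a linear combination of monomials of degree $\le 1/\varepsilon$ in the $x_{ij}$, which become genuine linear variables after $1/\varepsilon \ll K$ rounds of lifting; hence they are bona fide linear constraints in the $K$-round program. The pruned polytope loses no integral solution: every schedule can be relabeled so its machines are $\succeq$-sorted.

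\textbf{Analysis of the integrality gap.} Given any feasible $K$-round pseudo-distribution $\pseudo$ on the symmetry-broken assignment LP, the sorting constraints force, for each $k$, the fractional support of configuration $C_k$ to concentrate on a contiguous block of machine indices $[a_k,b_k]$. I would then perform $\SA$ conditioning sequentially at one representative machine per block; there are at most $K$ blocks, which fits the round budget exactly. After conditioning, each machine is pinned to a definite long-job configuration of load $\le (1+\varepsilon)T$, and the resulting restricted pseudo-distribution still satisfies the short-job assignment constraints. Short jobs are then split fractionally by a standard LP and rounded greedily: since each has $p_j < \varepsilon T$, rounding inflates each machine's load by at most an additive $\varepsilon T$. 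Combined, the final integral makespan is $(1+O(\varepsilon))T$, which after rescaling $\varepsilon$ yields the claimed bound.

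\textbf{Main obstacle.} The principal technical difficulty is justifying the sequential-conditioning argument: I must show that, after pinning the configuration of one machine at the boundary of each of the at most $K$ blocks, the residual pseudo-distribution continues to make every machine in a given block carry \emph{identical} long-job configuration. This is a form of pseudo-symmetry that must be extracted from the combination of $\SA$ validity and the sorting inequalities; without it, the pseudo-distribution could spread fractional mass across multiple configurations within a block and destroy the reduction to short-job assignment. A careful choice of $\prec$ (for instance, a lexicographic order prioritizing configurations that use rare processing times) together with ordering the conditioning steps from the rarest configuration to the most common, should contain the round-accounting and keep the analysis tight throughout the $K$ conditioning phases.
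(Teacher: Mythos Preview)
Your high-level strategy---add symmetry-breaking constraints that totally order the machine configurations, then use $\SA$ conditioning to pin configurations down block by block---is the same as the paper's. Two concrete gaps, however, keep the argument from going through as written.

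\textbf{The inequalities are not linear.} The theorem promises genuinely \emph{linear} inequalities added to the assignment LP. The paper achieves this with a single inequality per consecutive pair of machines via a base-$B$ encoding,
\[
\sum_{q=1}^{s} B^{s-q}\sum_{j\in J_q}\bigl(x_{ij}-x_{(i+1)j}\bigr)\ \ge\ 0,
\]
where $B$ is large enough that this single linear constraint exactly encodes lexicographic order on the count vectors $\bigl(\sum_{j\in J_q}x_{ij}\bigr)_{q}$. Your constraints are described as degree-$(1/\varepsilon)$ polynomials, and the phrase ``number of long jobs of signature $\succeq C_k$'' is not well defined: jobs carry \emph{types}, not configuration signatures, and the indicator that a machine has a specified configuration is a polynomial whose degree depends on $|J_q|$, not on $1/\varepsilon$. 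Without a linear formulation you cannot add these to the base LP as the theorem requires.

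\textbf{The conditioning step lacks the sandwich argument.} You propose to condition at ``one representative machine per block'' over at most $K$ blocks. But prior to any conditioning the pseudo-distribution has no block structure---distinct machines may carry arbitrarily different fractional profiles---so the bound ``at most $K$ blocks'' is unjustified. The paper builds the block structure \emph{level by level}. Once all machines are $(\tilde q,a_{\tilde q})$-stable for $\tilde q\le q$, within each resulting interval it locates the \emph{rightmost} machine $i_0$ that can receive $1/\varepsilon$ jobs from $J_{q+1}$ with positive probability and conditions there; it then shows, using the lexicographic inequality, that the \emph{leftmost} machine $i_L$ can also receive $1/\varepsilon$ such jobs and conditions there as well. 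The lex constraint now forces every machine in $\{i_L,\dots,i_0\}$ to carry exactly $1/\varepsilon$ jobs from $J_{q+1}$. Conditioning at both ends is essential---pinning a single boundary machine does not propagate across the interval. Iterating this for $1/\varepsilon-1,\dots,0$ within each interval, and then over all levels $q=1,\dots,s$ (with at most $(1/\varepsilon+1)^q$ intervals at level $q$), yields the $2^{\tilde O(1/\varepsilon^{2})}$ round count. You correctly flag this propagation as the principal obstacle, but the proposal does not supply the mechanism that overcomes it.
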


Notice that the same result is obtained by applying the SoS hierarchy instead of SA. The proof of Theorem~\ref{thm:positive} is based on introducing a formulation that \textit{breaks} the symmetries in the assignment program by adding new constraints. The symmetry breaking constraints enforce that any feasible integral solution of the formulation respects a \textit{lexicographic}
order over the machine configurations. 
We show how to exploit this to obtain a polynomial time approximation scheme (PTAS) based on the $\SA$ hierarchy. 
Additionally, we show that by adding a polynomial number of new constraints, we can obtain a faster approximation scheme, such
that $\text{poly}(1/\varepsilon)$ rounds of $\SA$ suffice. The extra
constraints correspond to symmetry breaking inequalities for a modified
instance with rounded job sizes. In particular, the added
constraints are not necessarily valid for the original formulation
(which considers the original job sizes). However, we can show that
increasing the optimal makespan by a factor $1+\varepsilon$ maintains
the feasibility of at least one integral solution. Thus, by breaking
more symmetries, we make it easier for the hierarchies to produce good
relaxations.
We remark that the framework we use for the minimum makespan problem can be, in principle, studied in other settings where symmetries are present in standard integer programming relaxations. 
This strategy opens the possibility of analyzing the effect of applying symmetry breaking techniques and hierarchies in order to generate strong linear or semidefinite relaxations.  

\subsection{Related work}

\paragraph{Upper bounds.}

The first application of semidefinite programming in the
context of approximation algorithms is due to Goemans and Williamson
for the Max-Cut problem~\cite{GW95}. Of particular interest to our work is the
SoS based approximation scheme by Karlin et al. to the Max-Knapsack
problem~\cite{KMN11}. They use a structural \textit{decomposition
theorem} satisfied by the SoS hierarchy. 
For a constant number of machines, Levey and Rothvoss design an approximation
scheme with a sub-exponential number of rounds in the weaker SA hierarchy~\cite{LeveyR15},
which is improved to a quasi-PTAS by Garg~\cite{Garg18}. The SoS method
has received a lot of attention for high-dimensional problems. Among
them we find matrix and tensor completion~\cite{barak2016noisy,potechin2017exact},
tensor decomposition~\cite{ma2016polynomial} and clustering~\cite{kothari2018robust,RSS18}.

\paragraph{Lower bounds.}

The first lower bound obtained in the context of \textit{positivstellensatz
certificates} is by Grigoriev~\cite{grigoriev2001}, showing the
necessity of a linear number of SoS rounds to refute an easy Knapsack
instance. Similar results are obtained by Laurent~\cite{laurent2003lower}
for Max-Cut and by Kurpisz et al.~\cite{kurpisz2016hardest} for unconstrained polynomial optimization.
The same authors show that for a certain polynomial-time single machine
scheduling problem, the SoS hierarchy exhibits an unbounded integrality
gap even in a high-degree regime~\cite{kurpisz2016hardest,kurpisz2017unbounded}.
Remarkable are the work of Grigoriev~\cite{grigoriev2001linear}
and Schoenebeck~\cite{schoenebeck2008linear} exhibiting the difficulty
for SoS to certify the insatisfiability of random 3-SAT instances
in subexponential time, 
and recently there have been efforts on unifying frameworks to show
lower bounds on random CSPs~\cite{BCK15,KMOW17,KOS18}. For estimation
and detection problems, lower bounds have been shown for the planted
clique problem, $k$-densest subgraph and tensor PCA, among others~\cite{HKPRSS16,barak2016nearly}.

\paragraph{Invariant Sum of Squares.}

Gatermann and Parrilo study how to obtain reduced sums of squares
certificates of non-negativity when the polynomial is invariant under
the action of a group, using tools from representation theory~\cite{GP04}.
Raymond et al.~\cite{RSST18} develop on the Gatermann and Parrilo method to construct
symmetry-reduced sum of squares certificates for polynomials over
$k$-subset hypercubes. Furthermore, the authors make
an interesting connection with the Razborov method and flag algebras~\cite{razborov07,razborov10}.
Blekherman et al.~\cite{blekherman2016sums} and Laurent~\cite{laurent2007semidefinite} provide degree bounds on rational representations
for certificates over the hypercube, recovering as a corollary known
lower bounds for combinatorial optimization problems like Max-Cut.
Kurpisz et al.~\cite{klm16} provide a method for proving SoS lower bounds when
the formulations exhibits a high degree of symmetry.

\section{Preliminaries: Sum of Squares (SoS) and Pseudoexpectations}
\label{sec:prelim}

In what follows we denote by $\RR[x]$ the ring of polynomials with real coefficients. 
Binary integer programming belongs to a larger class of problems in {\it polynomial optimization}, where the constraints are defined by polynomials in the variables indeterminates.
More specifically, consider the set ${\cal K}$ of feasible solutions to the polinomial optimization program defined by
\begin{alignat}{2}
	g_i(x)               	&\ge 0    	&&\quad\text{for all}\; i\in M,\label{eq:semi1}\\
	h_j(x)   	&=0    	&&\quad\text{for all}\; j \in J,\label{eq:semi2}\\
    x_e^2-x_e      		&= 0  	&&\quad\text{for all}\; e\in E\label{eq:semi3}.
\end{alignat}
where $g_i,h_{j}\in \RR[x]$ for all $i\in M$ and for all $j\in J$.
In particular, for binary integer programming the equality and inequality constraints are affine functions.\\

\noindent{\it Ideals, quotients and square-free polynomials.} In what follows we give a minimal introduction to the algebraic elements for polynomial optimization, for a comprehensive treatment see~\cite{cox_ideals_2007}. We denote by ${\I}_E$ the ideal of polynomials in $\RR[x]$ generated by $\{x_e^2-x_e:e\in E\}$, and let $\RR[x]/{\I}_E$ be the quotient ring of polynomials with respect to the vanishing ideal ${\I}_E$.
That is, $f,g\in \RR[x]$ are in the same equivalence class of the quotient ring if $f-g\in {\I}_E$, that we denote $f\equiv g\mod {\I}_E$. Alternatively, $f\equiv g\mod {\I}_E$ if and only if the polynomials evaluate to the same values on the vertices of the hypercube, that is, $f(x)=g(x)$ for all $x\in \{0,1\}^E$.
Observe that the equivalence classes in the quotient ring are in bijection with the {\it square-free} polynomials in $\RR[x]$, that is, polynomials where no variable appears squared. In what follows we identify elements of $\RR[x]/{\I}_E$ in this way, that is, for $p\in \RR[x]$ we denote by $\overline{p}$ the unique square-free representation of $p$, which can be obtained as the result of applying the polynomial division algorithm by the Gr\"{o}bner basis $\{x_e^2-x_e:e\in E\}$.
Given $S\subseteq E$, we denote by $x_S$ the square-free monomial that is obtained from the product of the variables indexed by the elements in $S$, that is,
$x_S=\prod_{e\in S}x_e.$
The degree of a polynomial $f\in \RR[x]/{\I}_E$ is denoted by $\deg(f)$. 
We say that $f$ is a {\it sum of squares} polynomial, for short $\sos$, if there exist polynomials $\{s_{\alpha}\}_{\alpha\in \mathcal{A}}$ for a finite family $\mathcal{A}$ in the quotient ring such that $f\equiv \sum_{\alpha \in \cal{A}}s_{\alpha}^2 \mod {\I_E}$.\\

\noindent{\it Certificates and $\sos$ method.} The question of certifying the emptiness of $\mathcal{K}$ is hard in general but sometimes it is possible to find {\it simple} certificates.
We say that there exists a degree-$\ell$ $\sos$ certificate of infeasibility for $\cal K$ if there exist $\sos$ polynomials $s_0$ and $\{s_i\}_{i\in M}$, and polynomials $\{r_j\}_{j\in J}$
such that
\begin{equation}
\label{eq:psatz}
-1\equiv s_0+\sum_{i\in M}s_ig_i+\sum_{j\in J}r_jh_j \mod {\I}_E,
\end{equation}
and the degree of every polynomial in the right hand side is at most $\ell$.
Observe that if $\cal K$ is non-empty, then the right hand side is guaranteed to be non negative for at least one assignment of $x$ in $\{0,1\}^E$, which contradicts  the equality above.  
In the case of binary integer programming, if $\cal K$ is empty there exists a degree-$\ell$ $\sos$ certificate, for some $\ell\le |E|$~\cite{lau09,Parr03}.
The $\sos$ algorithm iteratively checks the existence of a $\sos$ certificate, parameterized in the degree, and each step of the algorithm is called a {\it round}.
Since $|E|$ is an upper bound on the certificate degree, the method is guaranteed to terminate~\cite{Parr03,lau09}.
Furthermore, the existence of a degree-$\ell$ $\sos$ certificate can be decided by solving a semidefinite program.
This approach can be seen as the dual of the hierarchy proposed by Lasserre, which has been studied extensively in the optimization and algorithms community~\cite{Las01,Lau03,TR13,CT12}.\\

\noindent{\it Pseudoexpectations.} 
To determine the existence of a $\sos$ certificate one solves a semidefinite program, and the solutions of this program determine the coefficients of elements in the dual space of linear operators. 
We say that a linear functional $\widetilde{\mathbb{E}}:\RR[x]/\I_E\to \RR$ is a degree-$\ell$ $\sos$ {\it pseudoexpectation} for the polynomial system (\ref{eq:semi1})-(\ref{eq:semi3}), if it satisfies the following properties:
\begin{enumerate}[label=\text{(SoS.\arabic*)},align=right,leftmargin=1.8\labelwidth]
	\item \label{prop:normal} $\pseudo(1)=1$,
	\item \label{prop:sos1} $\pseudo(\overline{f^2})\ge 0$ for all $f\in \RR[x]/{\I}_E$ with $\deg(\overline{f^2})\le \ell$, 
	\item \label{prop:sos2} $\pseudo(\overline{f^2g_i})\ge 0$ for all $i\in M$, for all $f\in \RR[x]/{\I}_E$ with $\deg(\overline{f^2g_i})\le \ell$,
	\item \label{prop:SA} $\pseudo(\overline{fh_j})=0$ for all $j\in J$, for all $f\in \RR[x]/{\I}_E$ with $\deg(\overline{fh_j})\le \ell$.
\end{enumerate}  
In what follows, every time we evaluate a polynomial in the pseudoexpectation we are doing it over the square-free representation. We omit the bar notation for simplicity. The next lemma shows that there is a duality relation between degree-$\ell$ $\sos$ pseudoexpectation and $\sos$ certificates of infeasibility of the same degree.

\begin{lemma}\label{lem:infeasible-pseudo}
Suppose that $\cal K$ is empty. 
If there exists a degree-$\ell$ $\sos$ pseudoexpectation then there is no degree-$\ell$ $\sos$ certificate of infeasibility.
\end{lemma}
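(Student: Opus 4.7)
The plan is to argue by contradiction: suppose, for the sake of contradiction, that both a degree-$\ell$ $\sos$ pseudoexpectation $\pseudo$ and a degree-$\ell$ $\sos$ certificate of infeasibility
\[
-1 \;\equiv\; s_0 + \sum_{i \in M} s_i g_i + \sum_{j \in J} r_j h_j \mod \I_E
\]
exist simultaneously. The entire proof then amounts to applying $\pseudo$ to both sides of this identity and showing that the two sides end up with incompatible signs.

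First, I would handle the left-hand side: since $\pseudo$ is linear and property~\ref{prop:normal} gives $\pseudo(1)=1$, we obtain $\pseudo(-1)=-1$. Next, I would analyse the right-hand side term by term, writing each $\sos$ polynomial as an explicit sum of squares $s_0 \equiv \sum_\alpha s_{0,\alpha}^2$ and $s_i \equiv \sum_\alpha s_{i,\alpha}^2$ inside $\RR[x]/\I_E$, so that after reduction modulo $\I_E$ the identity decomposes into a linear combination of terms of the form $\overline{s_{0,\alpha}^2}$, $\overline{s_{i,\alpha}^2 g_i}$ and $\overline{r_j h_j}$. Applying $\pseudo$ and using linearity, property~\ref{prop:sos1} gives $\pseudo(\overline{s_{0,\alpha}^2})\ge 0$, property~\ref{prop:sos2} gives $\pseudo(\overline{s_{i,\alpha}^2 g_i})\ge 0$, and property~\ref{prop:SA} gives $\pseudo(\overline{r_j h_j})=0$. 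Summing everything, the right-hand side evaluates to a nonnegative number, so we conclude $-1\ge 0$, which is the desired contradiction.

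The one step that needs genuine care — and which I expect to be the main (though minor) obstacle — is the degree bookkeeping required to invoke \ref{prop:sos1}--\ref{prop:SA}. The axioms only guarantee nonnegativity/vanishing for polynomials whose square-free reductions are of degree at most $\ell$, so I must verify that each individual summand $\overline{s_{0,\alpha}^2}$, $\overline{s_{i,\alpha}^2 g_i}$ and $\overline{r_j h_j}$ has square-free degree at most $\ell$. This is where I would appeal to the standard convention that a degree-$\ell$ $\sos$ certificate is one in which every squared piece $s_{0,\alpha}^2$ already satisfies $\deg(\overline{s_{0,\alpha}^2})\le \ell$ and every piece $s_{i,\alpha}^2 g_i$ satisfies $\deg(\overline{s_{i,\alpha}^2 g_i})\le \ell$ (and likewise $\deg(\overline{r_j h_j})\le \ell$); this is what the informal requirement ``the degree of every polynomial in the right hand side is at most~$\ell$'' in~\eqref{eq:psatz} is meant to encode, and it is precisely what is needed to make the axioms applicable term by term.

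Once this degree compatibility is in place, the proof is essentially one line of algebra: linearity of $\pseudo$ turns the Positivstellensatz identity into $-1 = \pseudo(-1) = \pseudo(\text{RHS}) \geq 0$, contradicting the existence of the pseudoexpectation. Hence, under the hypothesis that a valid degree-$\ell$ pseudoexpectation exists, no such degree-$\ell$ certificate can exist, proving the lemma.
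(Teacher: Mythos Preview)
Your proposal is correct and is exactly the ``simple check'' the paper alludes to (the paper does not spell out the argument, merely remarking that it is straightforward and referring to~\cite{mastrolilli2017high}). The degree bookkeeping you flag is indeed the only point requiring care, and your reading of the phrase ``the degree of every polynomial in the right hand side is at most~$\ell$'' matches the intended convention.
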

%
The proof of this lemma is a simple check, see also~\cite{mastrolilli2017high}. 
The minimum value of $\ell$ for which there exists a $\sos$ certificate of infeasibility tells how hard is determining the emptiness of $\mathcal{K}$ for the $\sos$ method. 
Lemma \ref{lem:infeasible-pseudo} provides a way of finding lower bounds on the minimum degree of a certificate, which we use in Section~\ref{sec:lb} for the minimum makespan problem.
There are many examples of problems that are extremely easy to certificate for humans, but not for the $\sos$ method.
For example, given a positive $k\in \QQ\setminus \ZZ$, consider the program $\sum_{e\in E}x_e=k$ and $x_e^2-x_e=0$ for all $e\in E$.
This problem is clearly infeasible, but there is no degree-$\ell$ $\sos$ certificate of infeasibility for $\ell\le \min\{2\lfloor k\rfloor+3, 2\lfloor n-k\rfloor+3,n\}$, as shown originally by Grigoriev and others recently using different approaches~\cite{grigoriev2001,potechin17}.\\

\noindent{\it The Sherali-Adams Hierarchy.}
There is a weaker hierarchy obtained using linear programming due to Sherali \& Adams (\SA)~\cite{SA90}.
Given disjoint subsets $S,R\subseteq E$, consider the polynomial $\varphi_{S,R}=\prod_{i\in S}x_i\prod_{j\in R}(1-x_j)$, and for every $\ell\in \{1,\ldots,|E|\}$, let $\mathcal{E}_{\ell}=\{(S,R):S,R\subseteq E \text{ with } |S\cup R| = \ell \text{ and } S\cap R=\emptyset\}$.
We say that a linear functional $\widetilde{\mathbb{E}}:\RR[x]/\I_E\to \RR$ is a degree-$\ell$ $\SA$ pseudoexpectation for (\ref{eq:semi1})-(\ref{eq:semi3}), if it satisfies the following properties:
\begin{enumerate}[label=\text{(SA.\arabic*)},align=right,leftmargin=1.8\labelwidth]
	\item \label{prop:SAnormal} $\pseudo(1)=1$,
	\item \label{prop:SA1} $\pseudo(\overline{\varphi_{S,R}})\ge 0$ for all $(S,R)\in \mathcal{E}_{\ell}$, 
	\item \label{prop:SA2} $\pseudo(\overline{\varphi_{S,R}g_i})\ge 0$ for all $i\in M$ and $(S,R)\in \mathcal{E}_{\ell}$ with $\deg(\overline{\varphi_{S,R}g_i})\le \ell$,
	\item \label{prop:SAA} $\pseudo(\overline{\varphi_{S,R}h_j})=0$ for all $j\in J$ and $(S,R)\in \mathcal{E}_{\ell}$ with $\deg(\overline{\varphi_{S,R}h_j})\le \ell$.
\end{enumerate}  
Observe that by construction it holds that every degree-$(\ell+1)$ $\SA$ pseudoexpectation is a degree-$\ell$ $\SA$ pseudoexpectation as well.
Furthermore, it follows directly from the linearity of $\pseudo$ that deciding whether a degree-$\ell$ $\SA$ pseudoexpectation exists,
and computing one if it exists,
can be done by solving a linear program of size $|E|^{O(\ell)}$ over the variables $y_S=\pseudo(x_{S})$, for every $S\subseteq E$ with $|S|\le \ell$.
This linear program is usually known as the {\it $\ell$-round or $\ell$-level} of the $\SA$ hierarchy.
For a detailed exposition of this hierarchy we refer to~\cite{Lau03}.
In the following we refer to {\it low-degree} when the degree ($\sos$ or $\SA$) of a certificate or pseudoexpectation is $O(1)$.

\section{Lower Bound: Symmetries are Hard for $\sos$}
\label{sec:lb}

In this section we show that the $\sos$ method fails to provide a low-degree certificate of infeasibility for a certain family of scheduling instances. 
The program we analize in this section is the configuration linear program, that has proven to be powerful for different scheduling and packing problems~\cite{ola12,GR15}.
Given a value $T>0$, a \emph{configuration}
corresponds to a multiset of processing times such that its total sum does not exceed $T$.
The multiplicity $m(p,C)$ indicates the number of times that the processing time $p$ appears in the multiset $C$.
The {\it load} of a configuration $C$ is just the total processing time, $\sum_{p\in \{p_j:j\in J\}}m(p,C)\cdot p$ and
let $\CC$ denote the set of all configurations with load at most $T$.
For each combination of a machine $i\in [m]$ and a configuration
$C\in \CC$, the program has a variable $y_{iC}$ that models whether
machine $i$ is scheduled with jobs with processing times according to configuration $C$.
Letting $n_p$ denote the number of jobs in $J$ with processing time $p$, we can write the following binary linear program, $\clp(T)$, 
\begin{alignat}{2}
	 \sum_{C\in \CC} y_{iC}                       & =   1   &&\quad\text{for all}\; i \in [m]\text{, } \\
  \sum_{i \in [m]} \sum_{C\in \CC} m(p,C)y_{iC}  & =   n_p &&\quad\text{for all}\; p \in \{  p_j: j\in J \} \text{, }\\
						     y_{iC}              & \in \{0,1\}   &&\quad\text{for all}\; i\in [m],\text{ for all } C\in \CC.
\end{alignat}

\noindent{\it Hard instances.} We briefly describe the construction of a family of hard instances $\{I_k\}_{k\in \NN}$ for the configuration linear program introduced in~\cite{KMMMVW18}.
Let $T=1023$, and for each odd $k\in \NN$ we have $15k$ jobs and $3k$ machines. 
There are 15 different job-sizes with value $O(1)$, each one with multiplicity $k$.
There exist a set of special configurations $\{C_1,\ldots,C_6\}$, called {\it matching configurations}, such that the program above is feasible if and only if the program restricted to the matching configurations is feasible.
The infeasibility of the latter program comes from the fact that there is no 1-factorization of a regular multigraph version of the Petersen graph~\cite[Lemma 2]{KMMMVW18}. 

\begin{theorem}[\hspace{-0.01cm}\cite{KMMMVW18}]
\label{thm:salb}
For each odd $k\in \NN$, there exists a degree-$\lfloor k/2\rfloor$ $\SA$ pseudoexpectation for the configuration linear program.
In particular, there is no low-degree $\SA$ certificate of infeasibility.
\end{theorem}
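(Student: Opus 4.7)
The plan is to construct an explicit degree-$\lfloor k/2\rfloor$ $\SA$ pseudoexpectation $\pseudo$ for the configuration linear program $\clp(T)$ on the hard instance $I_k$. I would lean heavily on the strong symmetry of $I_k$: the instance is invariant under the full permutation group $G = S_{3k}\times \prod_{p}S_{n_p}$ acting on machines and within each job-size class. Since the $\SA$ hierarchy respects this group action, it is natural to look for a $\pseudo$ whose value on any monomial $y_S = \prod_{(i,C)\in S}y_{iC}$ depends only on the combinatorial \emph{type} of $S$ (which matching configurations appear and which machines are repeated).

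First I would invoke the reduction from \cite{KMMMVW18} that lets one restrict attention to the six matching configurations $C_1,\ldots,C_6$, each encoding a perfect matching in a regular multigraph built from the Petersen graph. Integral feasibility then corresponds to a 1-factorization of this graph, which does not exist; however, a \emph{fractional} 1-factorization does, obtained by averaging uniformly over its perfect matchings. The crucial strengthening I would establish is a \textbf{local} version: for any subset $S'$ of machines with $|S'|\le \lfloor k/2\rfloor$ and any assignment of matching configurations to $S'$ that is combinatorially extensible, there is a fractional completion on the remaining $\ge 2k$ machines. This is plausible because the slack in the demand constraints and in the degree constraints of the underlying Petersen-like graph scales with $k$, while we only commit $\lfloor k/2\rfloor$ machines.

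Building on this, I would define $\pseudo(y_S)$ for every $|S|\le \lfloor k/2\rfloor$ as the $G$-average of the probabilities of the events ``$y_{iC}=1$ for all $(i,C)\in S$'' under local fractional completions. Verifying the $\SA$ conditions is then routine: \ref{prop:SAnormal} is the normalization of a distribution; \ref{prop:SA1} holds because each $\pseudo(\varphi_{S,R})$ is the probability of an event under a convex combination of non-negative distributions; \ref{prop:SAA} for both equality constraints ($\sum_C y_{iC}=1$ and the demand constraints $\sum_{i,C}m(p,C)y_{iC}=n_p$) follows because each local completion is a feasible fractional solution, so the equalities hold termwise before averaging. Property~\ref{prop:SA2} is automatic in a $0/1$ formulation.

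The main obstacle will be the \emph{coherence} of the family: for $S\subseteq T$ with $|T|\le \lfloor k/2\rfloor$, the marginal of the distribution used to define $\pseudo(y_T)$ on the variables indexed by $S$ must agree with $\pseudo(y_S)$. Equivalently, one must check that averaging local completions is unambiguous. The $G$-symmetrization reduces this to a counting identity over combinatorial types of subsets of $S$ and $T$, and the identity ultimately rests on a ``robustness'' claim for fractional 1-factorizations of the Petersen-like multigraph: fixing up to $\lfloor k/2\rfloor$ matching-configuration assignments never breaks fractional completability, and the number of completions (when weighted by the group action) scales polynomially in a way that is independent of which extension of $S$ to $T$ one chose. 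Proving this robustness, and showing it suffices to enforce coherence for all $|T|\le \lfloor k/2\rfloor$, is where I expect the real work to lie; the rest of the argument is bookkeeping once the local structure is in place.
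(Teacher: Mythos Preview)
This theorem is quoted from \cite{KMMMVW18} and not proved in the present paper, but the paper does record the explicit pseudoexpectation used there (see~(\ref{eq:pseudo-old})): for every partial schedule $S$ with $|S|\le k/2$,
\[
\pseudo(y_S)=\frac{1}{(3k)_{|S|}}\prod_{j=1}^{6}(k/2)_{\delta_S(C_j)},
\]
and $\pseudo$ vanishes on monomials that are not partial schedules. The proof in \cite{KMMMVW18} is a direct verification of \ref{prop:SAnormal}--\ref{prop:SAA} from this closed form.

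Your plan has a genuine gap at its core. You propose to set $\pseudo(y_S)$ equal to a $G$-average of ``the probability of the event $\{y_{iC}=1:(i,C)\in S\}$ under local fractional completions.'' But a fractional LP point only carries \emph{marginals} $y_{iC}\in[0,1]$; it does not determine a joint law on the $0/1$ variables, so that probability is simply undefined. This is not a technicality to be cleaned up later: manufacturing consistent higher-order moments is exactly the content of an $\SA$ pseudoexpectation, and it cannot be read off from degree-$1$ data. The ``coherence'' problem you flag is therefore not residual bookkeeping that symmetrization will dissolve --- it is the entire difficulty, and your proposal does not address it. Relatedly, your verification of \ref{prop:SAA} only covers the case $\varphi_{S,R}\equiv 1$: a fractional completion satisfies the linear equalities themselves, not their multilinear liftings by $\varphi_{S,R}$.

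The closed form above has a clean heuristic that is close in spirit to what you were reaching for: it is what the moments \emph{would} be under a uniformly random assignment of the $3k$ machines to the six matching configurations with each configuration used exactly $k/2$ times (a hypergeometric-type law). For odd $k$ no such distribution exists --- consistent with the infeasibility of $I_k$ --- but the falling factorials $(k/2)_a$ still make sense for half-integer $k/2$, and the $\SA$ axioms are then checked by direct algebra. If you want to rescue your outline, the workable route is to guess this formula from the hypergeometric heuristic and verify it, rather than trying to assemble $\pseudo$ out of fractional completions.
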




\subsection{A symmetry-reduced decomposition of the scheduling ideal}

Given $T>0$, the variables ground set for configuration linear program is $E=[m]\times \cal C$, and the symmetric group $S_m$ acts over the monomials in $\RR[y]$ according to $\sigma y_{iC}=y_{\sigma(i)C}$, for every $\sigma \in S_m$.
The action extends linearly to $\RR[y]/{\I}_E$, and the configuration linear program is invariant under this action, that is, for every $y\in \clp(T)$ and every $\sigma\in S_m$ we have $\sigma y\in \clp(T)$.
We say that a polynomial $f\in \RR[y]/\I_E$ is $S_m$-invariant if $\sigma f=f$ for every $\sigma\in S_m$. When it is clear from the context we drop the $S_m$ in the notation.
 If $f$ is invariant we have that $f=(1/|S_m|)\sum_{\sigma \in S_m}\sigma f:=\sym(f)$, which is the symmetrization of $f$.
We say that a linear functional $\cal L$ over the quotient ring is $S_m$-symmetric if for every polynomial $f\in \RR[y]/\I_E$ we have ${\cal L}(f)={\cal L}(\sym(f))$. The next lemma shows that when $\pseudo$ is symmetric it is enough to check symmetric polynomials in condition~\ref{prop:sos1}.
Therefore, in this case we restrict our attention to those polynomials that are invariant and $\sos$.

\begin{lemma}
\label{lem:sym-sos}
Let $\pseudo$ be a symmetric linear operator over $\RR[y]/\I_E$ such that for every invariant $\sos$ polynomial $g$ of degree at most $\ell$ we have $\pseudo(g)\ge 0$.
Then, $\pseudo(f^2)\ge 0$ for every $f\in \RR[y]/\I_E$ with $\deg(f^2)\le \ell$.
\end{lemma}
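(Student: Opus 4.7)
The plan is to exploit the multiplicative compatibility of the $S_m$ action on $\RR[y]/\I_E$: since $\sigma$ acts by permuting variables, it is a ring automorphism, and in particular $\sigma(f^2) = (\sigma f)^2$ for every $f$ and every $\sigma\in S_m$. This is the only structural fact about the action that I will need.

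Given $f \in \RR[y]/\I_E$ with $\deg(f^2) \le \ell$, the key step is to consider the symmetrization
\[
\sym(f^2) \;=\; \frac{1}{|S_m|}\sum_{\sigma\in S_m} \sigma(f^2) \;=\; \frac{1}{|S_m|}\sum_{\sigma\in S_m} (\sigma f)^2.
\]
From the right-hand side, $\sym(f^2)$ is manifestly a sum of squares (with positive scalar coefficients, which can be absorbed into each $\sigma f$). From the left-hand side, $\sym(f^2)$ is $S_m$-invariant by construction. Moreover, since each $\sigma$ preserves the degree of monomials, $\deg(\sym(f^2)) \le \deg(f^2) \le \ell$. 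Hence $\sym(f^2)$ is exactly the kind of polynomial on which the hypothesis of the lemma applies, yielding $\pseudo(\sym(f^2)) \ge 0$.

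To conclude, I would invoke the $S_m$-symmetry of $\pseudo$: by definition, $\pseudo(h) = \pseudo(\sym(h))$ for every $h \in \RR[y]/\I_E$. Applying this with $h = f^2$ gives $\pseudo(f^2) = \pseudo(\sym(f^2)) \ge 0$, which is the desired inequality.

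I do not anticipate a serious obstacle here; the whole argument is a three-line check once one observes that $\sigma$ commutes with squaring. The only subtlety worth being explicit about is that the identification of polynomials with their square-free representatives must be compatible with the $S_m$ action (which it is, since $\sigma$ maps $\I_E$ to itself: each generator $y_{iC}^2 - y_{iC}$ is sent to $y_{\sigma(i)C}^2 - y_{\sigma(i)C}$), so that passing to the quotient does not interfere with either the invariance or the sum-of-squares property of $\sym(f^2)$.
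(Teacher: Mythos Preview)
Your proof is correct and is essentially identical to the paper's own argument: the paper also uses that $\pseudo(f^2)=\pseudo(\sym(f^2))$ by symmetry of $\pseudo$, observes that $\sym(f^2)=(1/|S_m|)\sum_{\sigma\in S_m}(\sigma f)^2$ is an invariant $\sos$ polynomial of degree at most $\ell$, and concludes from the hypothesis. Your additional remarks about $\sigma$ being a ring automorphism and the compatibility with the quotient by $\I_E$ just make explicit what the paper takes for granted.
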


\begin{proof}
Since the operator $\pseudo$ is symmetric, for every $f$ in the quotient ring with $\deg(f^2)\le \ell$ we have $\pseudo(f^2)=\pseudo(\sym(f^2))$.
The polynomial $\sym(f^2)$ is symmetric, and it is $\sos$ since $\sym(f^2)=(1/|S_m|)\sum_{\sigma \in S_m}\sigma f^2$, which is a sum of squares.
Since $\deg(\sym(f^2))\le \ell$, we have $\pseudo(\sym(f^2))\ge 0$ and we conclude that $\pseudo(f^2)\ge 0$.
 \end{proof}

In the following we focus on understanding polynomials that are invariant and $\sos$. 
To analize the action of the symmetric group over $\RR[y]$ we introduce some tools from representation theory to characterize the invariant $S_m$-modules of the polynomial ring~\cite{sagan01}.
We maintain the exposition minimally enough for our purposes and we follow in part the notation used by Raymond et al.~\cite{RSST18}. 
We say that $V$ is an $S_m$-module if there exists a homomorphism $\rho:{S_m}\to \text{GL}(V)$, where $\text{GL}(V)$ is the linear group of $V$.
A subspace $W$ of $V$ is {\it invariant} if it is closed under the action of $S_m$, that is, when $w\in W$ and $\sigma \in S_m$ we have that $\sigma w\in W$. 
We say that an $S_m$-module $W$ is {\it irreducible} if the only invariant subspaces are $\{0\}$ and $W$.
We refer to~\cite{sagan01} for a deeper treatment of representation theory of symmetric groups.\\

\noindent{\it Isotypic decompositions.}
A {\it partition} of $m$ is a vector $(\lambda_1,\ldots,\lambda_t)$ such that $\lambda_1\ge \lambda_2\ge \cdots \lambda_t>0$ and $\lambda_1+\cdots+\lambda_t=m$.
We denote by $\lambda\vdash m$ when $\lambda$ is a partition of $m$.
Any $S_m$-module has an {\it isotypic decomposition} $V=\bigoplus_{\lambda\vdash m}V_{\lambda}$, which decomposes $V$ as a direct sum of $S_m$-modules,
where each of the subspaces in the direct sum is called an {\it isotypic component}.
In the following we introduce a combinatorial abstraction of the partitions and related subgroups that play a relevant role.
A {\it tableau} of shape $\lambda$ is a bijective filling between $[m]$ and the cells of a grid with $t$ rows, and every row $r\in [t]$ has length $\lambda_{r}$.
In this case, the {\it shape} or {\it Young diagram} of the tableau is $\lambda$.
For a tableau $\tau_{\lambda}$ of shape $\lambda$, we denote by $\row_{r}(\tau_{\lambda})$ the subset of $[m]$ that fills row $r$ in the tableau.
\begin{example}
Let $m=7$ and consider the partition $\lambda=(4,2,1)$.
The following tableaux have shape $\lambda$,
\vspace{.1cm}
\begin{center}
\begin{minipage}{0.3\linewidth}
\begin{tikzpicture}[scale=.4]
\node (v1) at (0,1) {};
\node (v2) at (1,0) {};
\node (v3) at (0,2) {};
\node (v4) at (2,1) {};
\node (v5) at (0,3) {};
\node (v6) at (4,2) {};

\node (v7) at (3,2) {};

\node (v11) at (0.5,0.5) {3};
\node (v22) at (0.5,1.5) {5};
\node (v33) at (0.5,2.5) {1};
\node (v44) at (1.5,1.5) {6};
\node (v55) at (1.5,2.5) {2};
\node (v66) at (2.5,2.5) {7};
\node (v77) at (3.5,2.5) {4};

\draw (v1) rectangle (v2) {};
\draw (v3) rectangle (v4) {};
\draw (v5) rectangle (v6) {};
\draw (v5) rectangle (v2) {};
\draw (v5) rectangle (v4) {};
\draw (v5) rectangle (v7) {};

\end{tikzpicture}
\end{minipage}
\begin{minipage}{0.3\linewidth}
\begin{tikzpicture}[scale=.4]
\node (v1) at (0,1) {};
\node (v2) at (1,0) {};
\node (v3) at (0,2) {};
\node (v4) at (2,1) {};
\node (v5) at (0,3) {};
\node (v6) at (4,2) {};

\node (v7) at (3,2) {};

\node (v11) at (0.5,0.5) {4};
\node (v22) at (0.5,1.5) {3};
\node (v33) at (0.5,2.5) {1};
\node (v44) at (1.5,1.5) {6};
\node (v55) at (1.5,2.5) {7};
\node (v66) at (2.5,2.5) {2};
\node (v77) at (3.5,2.5) {5};

\draw (v1) rectangle (v2) {};
\draw (v3) rectangle (v4) {};
\draw (v5) rectangle (v6) {};
\draw (v5) rectangle (v2) {};
\draw (v5) rectangle (v4) {};
\draw (v5) rectangle (v7) {};

\end{tikzpicture}
\end{minipage}
\end{center}
In the tableau $\tau_{\lambda}$ at the left, $\row_1(\tau_{\lambda})=\{1,2,7,4\}$.
In the tableau $\tau'_{\lambda}$ at the right, $\row_3(\tau'_{\lambda})=\{4\}$.
\end{example}

\noindent The row group $\rowgroup$ is the subgroup of $S_m$ that stabilizes the rows of the tableau $\tau_{\lambda}$, that is,
\begin{equation}
\label{eq:row-group}
\rowgroup=\Big\{\sigma\in S_m:\sigma \cdot \row_{r}(\tau_{\lambda})=\row_{r}(\tau_{\lambda})\text{ for every $r\in [t]$}\Big\}.
\end{equation}
\noindent{\it Invariant $\sos$ polynomials.} We go back now to the case of the configuration linear program.
\begin{definition}[Scheduling Ideal]
\label{def:sched-ideal}
We define $\sched$ to be the ideal of polynomials in $\RR[y]$ generated by 
\begin{equation}
\left\{\sum_{C\in \CC}y_{iC}-1:i\in [m]\right\}\cup \Big\{y_{iC}^2-y_{iC}:i\in [m], C\in \CC\Big\}.
\end{equation}
\end{definition}
Recall that the set of polynomials above enforce the machines in the scheduling solutions to be assigned with exactly one configuration.
Let $\bf{Q}^{\ell}$ be the quotient ring $\RR[y]/\sched$ restricted to polynomials of degree at most $\ell$ and let $\bigoplus_{\lambda\vdash m}{\bf{Q}^{\ell}_{\lambda}}$ be its isotypic decomposition.
Given a tableau $\tau_{\lambda}$ of shape $\lambda$, let $\rowspace^{\ell}$ be the subspace of ${\bf{Q}}^{\ell}_{\lambda}$ fixed by the action of the row group $\rowgroup$, that is,
\begin{equation}
\label{eq:iso-row}
\rowspace^{\ell}=\Big\{q\in {\bf{Q}^{\ell}_{\lambda}}:\sigma q=q \text{ for all }\sigma\in \rowgroup\Big\}.
\end{equation}
In what follows we sometimes refer to these subspaces as {\it row subspaces}.
The following result follows from the work of Gaterman \& Parrilo~\cite{GP04} in the context of symmetry reduction for invariant semidefinite programs.
In what follows, $\langle A,B\rangle$ is the inner product in the space of square matrices defined by the trace of $AB$.
Given $\ell\in [m]$, we denote by $\Lambda_{\ell}$ the subset of partitions of $m$ that are lexicographically larger than $(m-\ell,1,\ldots,1)$.
\begin{theorem}
\label{thm:GP}
Suppose that $g\in \RR[y]/\sched$ is a degree-$\ell$ $\sos$ and $S_m$-invariant polynomial.
For each partition $\lambda\in \Lambda_{\ell}$, let $\tau_{\lambda}$ be a tableau of shape $\lambda$ and let ${\cal P}^{\lambda}=\{p_1^{\tau_\lambda},\ldots,p_{n_{\lambda}}^{\tau_\lambda}\}$ be a set of polynomials such that $\text{span}({\cal P}^{\lambda})\supseteq \rowspace^{\ell}$.
Then, for each partition $\lambda\in \Lambda_{\ell}$ there exists a 
positive semidefinite matrix $M_{\lambda}$ such that 
$g=\sum_{\lambda \in \Lambda_{\ell}}\langle M_{\lambda},Z^{\tau_\lambda}\rangle$, where $Z^{\tau_\lambda}_{ij}=\sym(p_i^{\tau_\lambda}p_j^{\tau_\lambda})$.
\end{theorem}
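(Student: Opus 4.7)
The plan is to apply the Gatermann-Parrilo symmetry reduction framework for invariant sums of squares~\cite{GP04}, adapted as in~\cite{RSST18} to the setting of a quotient ring by a group-invariant ideal. The scheduling ideal $\sched$ is closed under the $S_m$-action (the group simply permutes its generators), so the action descends cleanly to $\RR[y]/\sched$. First I would exploit the invariance of $g$ by averaging: starting from an $\sos$ representation $g\equiv \sum_\alpha s_\alpha^2 \pmod{\sched}$ with summands of appropriate bounded degree, the identity $g=\sym(g)=\tfrac{1}{|S_m|}\sum_{\alpha,\sigma}(\sigma s_\alpha)^2$ lets us assume the family $\{s_\alpha\}$ is closed under the $S_m$-action.

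Second, I would decompose each $s_\alpha$ along the isotypic decomposition $\mathbf{Q}^*=\bigoplus_{\lambda\vdash m}\mathbf{Q}^*_\lambda$ (restricted to the appropriate degree level), writing $s_\alpha=\sum_\lambda s_\alpha^\lambda$. Schur's lemma together with character orthogonality gives $\sym(s_\alpha^\lambda s_\beta^\mu)=0$ whenever $\lambda\ne\mu$. Applying this after symmetrizing $\sum_\alpha s_\alpha^2$ produces a block decomposition $g=\sum_\lambda g_\lambda$, with $g_\lambda=\sym\bigl(\sum_\alpha(s_\alpha^\lambda)^2\bigr)$ supported only in the $\lambda$-isotypic component.

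Third, I would identify each block with a PSD bilinear form on the row subspace. Standard representation theory of $S_m$ factors each isotypic component as an irreducible Specht module tensored with a multiplicity space; the row subspace $\rowspace^{\ell}$ is a canonical realization of the multiplicity space, obtained via the Young symmetrizer attached to the tableau $\tau_\lambda$. Expanding each $s_\alpha^\lambda$ in the spanning family $\mathcal{P}^\lambda$ as $s_\alpha^\lambda=\sum_i c^\lambda_{\alpha,i}\,p_i^{\tau_\lambda}$, Schur orthogonality on the irreducible factor collapses $\sym(s_\alpha^\lambda s_\beta^\lambda)$ into a bilinear combination of the building blocks $Z^{\tau_\lambda}_{ij}=\sym(p_i^{\tau_\lambda}p_j^{\tau_\lambda})$. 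Summing over $\alpha$ produces $g_\lambda=\langle M_\lambda,Z^{\tau_\lambda}\rangle$ with $M_\lambda=\sum_\alpha c_\alpha^\lambda(c_\alpha^\lambda)^\top\succeq 0$, a sum of rank-one PSD matrices.

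Finally, the restriction $\lambda\in\Lambda_\ell$ is a degree/branching statement: $\mathbf{Q}^{\ell}_\lambda$ is trivial whenever $\lambda$ is lexicographically smaller than $(m-\ell,1,\ldots,1)$, because a polynomial of bounded degree in the $y_{iC}$ variables involves at most $\ell$ distinct machine indices and so cannot carry an $S_m$-irreducible whose Young diagram has first row shorter than $m-\ell$. The main obstacle will be making the Schur-orthogonality identities work \emph{modulo} the scheduling ideal $\sched$ rather than in the free polynomial ring, and aligning the spanning set $\mathcal{P}^\lambda$ with the row-group/multiplicity-space isomorphism so that the coefficient vectors $c_\alpha^\lambda$ --- and therefore $M_\lambda$ --- are uniquely determined; this is precisely where the hypothesis $\mathrm{span}(\mathcal{P}^\lambda)\supseteq \rowspace^{\ell}$ enters.
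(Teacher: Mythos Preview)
Your proposal is correct and follows the same Gatermann--Parrilo framework as the paper, but the execution differs in a useful way worth noting. The paper's proof is more modular: it invokes the Gatermann--Parrilo decomposition (stated as Theorem~\ref{thm:GPoriginal}) as a black box, applied with an honest \emph{basis} $\{b_1^\lambda,\dots,b_{m_\lambda}^\lambda\}$ of $\rowspace^{\ell}$, proves separately (Lemma~\ref{lem:dim-zero}) that $m_\lambda=0$ unless $\lambda\in\Lambda_\ell$, and then handles the passage from the basis to the arbitrary spanning set $\mathcal{P}^\lambda$ by a single congruence transformation $M_\lambda=\mathcal{T}_\lambda^{\top}Q_\lambda\mathcal{T}_\lambda$, where $\mathcal{T}_\lambda$ writes the $b_i^\lambda$ in terms of the $p_j^{\tau_\lambda}$. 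You instead re-derive the GP decomposition from first principles (averaging, isotypic projection, Schur orthogonality) with the spanning set $\mathcal{P}^\lambda$ built in from the start. The paper's route is shorter and cleanly separates the three ingredients; yours makes the mechanism more transparent but carries more representation-theoretic bookkeeping.

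One imprecision to flag in your step~3: the expansion $s_\alpha^\lambda=\sum_i c_{\alpha,i}^\lambda\,p_i^{\tau_\lambda}$ is not literally valid, since $s_\alpha^\lambda$ lives in the full isotypic component $\mathbf{Q}^\ell_\lambda$, whereas $\mathrm{span}(\mathcal{P}^\lambda)$ only contains the row subspace $\rowspace^{\ell}$ (the multiplicity space). The correct statement is that after writing $\mathbf{Q}^\ell_\lambda\cong W_\lambda\otimes \rowspace^{\ell}$, one expands the multiplicity component in $\mathcal{P}^\lambda$ and lets Schur orthogonality kill the Specht-module factor under $\sym$. You clearly have this in mind, but as written the coefficients $c_{\alpha,i}^\lambda$ would need to be Specht-module-valued rather than scalars, and the formula $M_\lambda=\sum_\alpha c_\alpha^\lambda(c_\alpha^\lambda)^\top$ should be adjusted accordingly. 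The paper sidesteps this entirely by working with a basis of $\rowspace^{\ell}$ inside the black-box GP statement and postponing the spanning-set change of coordinates to a one-line congruence at the end.
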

The theorem above is based on the recent work of Raymond et al.~\cite[p. 324, Theorem 3]{RSST18}. 
In our case the symmetric group is acting differently from Raymond et al., but the proof follows the same lines,
and it can be found in Appendix~\ref{sec:appendixA}.
Together with Lemma~\ref{lem:sym-sos}, it is enough to study pseudoexpectations for each of the partitions in $\Lambda_{\ell}$ separately. 
We remark that for each partition in $\lambda\in \Lambda_{\ell}$ we can take any tableau $\tau_{\lambda}$ with that shape, and then consider a spanning set for its corresponding subspace $\rowspace^{\ell}$.
In the following, for a matrix $A$ with entries in $\RR[y]$, we denote by $\pseudo(A)$ the matrix obtained by evaluating $\pseudo$ on each entry of $A$.

\begin{lemma}
\label{lem:row-pseudo}
Let $\pseudo$ be a symmetric linear operator over $\RR[y]/\I_E$.
For each $\lambda\in \Lambda_{\ell}$, let $\tau_{\lambda}$ be a tableau of shape $\lambda$ and let ${\cal P}^{\lambda}=\{p_1^{\tau_\lambda},\ldots,p_{n_{\lambda}}^{\tau_\lambda}\}$ be a set of polynomials such that $\text{span}({\cal P}^{\lambda})\supseteq \rowspace^{\ell}$.
For each $\lambda \in \Lambda_{\ell}$, let $Z^{\tau_\lambda}$ such that $Z^{\tau_\lambda}_{ij}=\sym(p_i^{\tau_\lambda}p_j^{\tau_\lambda})$ and
suppose that $\pseudo(Z^{\tau_\lambda})$ is positive semidefinite.
Then, $\pseudo(f^2)\ge 0$ for every $f\in \RR[y]/\I_E$ with $\deg(f^2)\le \ell$.
\end{lemma}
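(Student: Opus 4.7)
The plan is to assemble Lemma~\ref{lem:sym-sos} and Theorem~\ref{thm:GP} with linearity of $\pseudo$ and the standard fact that the Frobenius inner product of two positive semidefinite matrices is non-negative. The lemma is essentially a clean combination of machinery already established in the section, rather than a result that requires new ideas; the only genuine subtlety is checking that the symmetrization inside $Z^{\tau_\lambda}_{ij}=\sym(p_i^{\tau_\lambda}p_j^{\tau_\lambda})$ is compatible with the hypothesis of $\pseudo$ being symmetric, but this follows from the definition.

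First I would invoke Lemma~\ref{lem:sym-sos} to reduce the goal: it suffices to prove that $\pseudo(g)\ge 0$ for every $S_m$-invariant $\sos$ polynomial $g\in \RR[y]/\sched$ with $\deg(g)\le \ell$, since $\pseudo(f^2)=\pseudo(\sym(f^2))$ for all $f$ with $\deg(f^2)\le \ell$, and $\sym(f^2)$ is both invariant and $\sos$.

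Next I would apply Theorem~\ref{thm:GP} to such a $g$: for each $\lambda\in \Lambda_\ell$, the hypothesis $\text{span}(\mathcal{P}^\lambda)\supseteq \rowspace^\ell$ lets us apply the theorem verbatim with the chosen spanning sets, yielding positive semidefinite matrices $M_\lambda$ such that
\begin{equation*}
g \;=\; \sum_{\lambda\in \Lambda_\ell} \langle M_\lambda, Z^{\tau_\lambda}\rangle.
\end{equation*}
By linearity of $\pseudo$, and because each entry of $Z^{\tau_\lambda}$ is itself a fixed polynomial in $\RR[y]/\sched$,
\begin{equation*}
\pseudo(g)\;=\;\sum_{\lambda\in \Lambda_\ell} \langle M_\lambda, \pseudo(Z^{\tau_\lambda})\rangle,
\end{equation*}
where $\pseudo(Z^{\tau_\lambda})$ denotes the matrix obtained by entrywise application of $\pseudo$.

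Finally, I would conclude by using the elementary fact that for any two positive semidefinite matrices $A,B$ of the same size one has $\langle A,B\rangle = \tr(AB)\ge 0$. The matrix $M_\lambda$ is PSD by Theorem~\ref{thm:GP}, and $\pseudo(Z^{\tau_\lambda})$ is PSD by hypothesis; hence every summand is non-negative and therefore $\pseudo(g)\ge 0$. Combined with the reduction via Lemma~\ref{lem:sym-sos}, this gives $\pseudo(f^2)\ge 0$ for all $f\in \RR[y]/\I_E$ with $\deg(f^2)\le \ell$, as desired. The hardest part is purely conceptual, namely recognizing that once the symmetry reduction of Theorem~\ref{thm:GP} is available, PSD-ness of the (symmetry-reduced) moment matrices $\pseudo(Z^{\tau_\lambda})$ is a sufficient surrogate for the usual single large PSD moment matrix condition on $\pseudo$.
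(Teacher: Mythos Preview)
Your proposal is correct and follows essentially the same approach as the paper: reduce via Lemma~\ref{lem:sym-sos} to invariant $\sos$ polynomials, apply Theorem~\ref{thm:GP} to write $g=\sum_{\lambda}\langle M_\lambda,Z^{\tau_\lambda}\rangle$, push $\pseudo$ inside by linearity, and conclude from the non-negativity of the trace inner product of two PSD matrices. The only cosmetic difference is that the paper invokes Lemma~\ref{lem:sym-sos} at the end rather than at the beginning.
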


\begin{proof}
Let $g$ be an invariant $\sos$ polynomial of degree at most $\ell$. 
By Theorem~\ref{thm:GP}, for each $\lambda\in \Lambda_{\ell}$ there exist a positive semidefinite matrix $M_{\lambda}$ such that $g=\sum_{\lambda \in \Lambda_{\ell}}\langle M_{\lambda},Z^{\lambda}\rangle$. 
Therefore, we have that 
\[\pseudo(g)=\sum_{\lambda \in \Lambda_{\ell}}\pseudo\langle M_{\lambda},Z^{\lambda}\rangle=\sum_{\lambda \in \Lambda_{\ell}}\langle M_{\lambda},\pseudo(Z^{\lambda})\rangle\ge 0,\] since both $M_{\lambda}$ and $\pseudo(Z^{\lambda})$ are positive semidefinite for each partition $\lambda\in \Lambda_{\ell}$.  
By Lemma~\ref{lem:sym-sos} we conclude that $\pseudo(f^2)\ge 0$ for every $f\in \RR[y]/\I_E$ with $\deg(f^2)\le \ell$.
 \end{proof}

\subsection{Construction of the spanning sets}

In this section we show how to construct the spanning sets of the row subspaces in order to apply Lemma~\ref{lem:row-pseudo}, which together with a particular linear operator provides the existence of a high-degree $\sos$ pseudoexpectation. 
The structure of the configuration linear program allows us to further restrict the canonical spanning set obtained from monomials, by one that is combinatorially interpretable and adapted to our purposes.


\begin{definition}[Partial Schedule]
\label{def:partial-schedule}
Let $G_S$ be the directed bipartite graph with vertex partition given by $[m]$ and $\CC$ and edges~$S\subseteq [m]\times \CC$.
We say that $S\subseteq [m]\times \CC$ is a {\it partial schedule} if for every $i\in [m]$ we have $\delta_S(i)\le 1$, where $\delta_S(i)$ is the degree of vertex $i$ in $G_S$.
\end{definition}

\noindent We say that $S$ is a partial schedule over $H$ if $\{i\in [m]:(i,C)\in S\}\subseteq H$. 
We denote by $\mathcal{M}(S)$ the set of machines 
in $\{i\in [m]:\delta_S(i)=1\}$, and we call $\mathcal{M}(S)$ the set of machines {\it incident} to $S$. 
Sometimes it is convenient to see a partial schedule $S$ as a function from $\mathcal{M}(S)$ to $\CC$, so we also say that $S$ is partial schedule with domain $\mathcal{M}(S)$.
\begin{example}
Let $m=4$ and the set of configurations $\CC=\{C_1,C_2,C_3\}$.
Then, the set given by $T=\{(1,C_1),(2,C_1),(4,C_2)\}$ is a partial schedule. The machine $i=3$ is not incident to $T$. 
In this case, $\delta_T(C_1)=2$ since there are two machines, $\{1,2\}$, incident to $C_1$. 
The domain of $T$ is $\mathcal{M}(T)=\{1,2,4\}$.
The set $S=\{(1,C_1),(1,C_2)\}$ is not a partial schedule since $\delta_S(1)=2$.
\end{example}


\begin{proposition}
\label{lem:not-partial-vanish}
If $S\subseteq [m]\times \CC$ is not a partial schedule, we have $y_S\equiv 0\mod \sched$.
\end{proposition}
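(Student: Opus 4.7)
The plan is first to reduce the claim to a local statement about a single machine, and then to prove that local statement by a downward induction on the number of configurations involved, driven by the assignment generator $\sum_{C}y_{iC}-1$ and the idempotency relations $y_{iC}^2-y_{iC}$ in $\sched$.

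Suppose $S\subseteq[m]\times\CC$ is not a partial schedule. By Definition~\ref{def:partial-schedule} there exists a machine $i\in[m]$ with $\delta_S(i)\geq 2$, so there are two distinct configurations $C_1\neq C_2$ with $(i,C_1),(i,C_2)\in S$. Factoring the monomial,
\[
y_S \;=\; y_{iC_1}\,y_{iC_2}\cdot\prod_{(i',C')\in S\setminus\{(i,C_1),(i,C_2)\}} y_{i'C'},
\]
so it suffices to show that $y_{iC_1}y_{iC_2}\equiv 0\mod\sched$.

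To this end I would prove the stronger claim that $Y_{S'}:=\prod_{C\in S'}y_{iC}\equiv 0\mod\sched$ for every $S'\subseteq\CC$ with $|S'|\geq 2$, by downward induction on $|S'|$. Multiplying the relation $\sum_{C'\in\CC}y_{iC'}\equiv 1\mod\sched$ by $Y_{S'}$ and using $y_{iC'}^2\equiv y_{iC'}\mod\sched$ splits the sum over $C'$ according to whether $C'\in S'$ (each contributing $Y_{S'}$) or $C'\notin S'$ (each contributing $Y_{S'\cup\{C'\}}$), which after rearrangement yields the recurrence
\[
(|S'|-1)\,Y_{S'} \;\equiv\; -\sum_{C'\in\CC\setminus S'} Y_{S'\cup\{C'\}} \mod \sched.
\]
The base case $S'=\CC$ has an empty right-hand side, so $(|\CC|-1)Y_{\CC}\equiv 0\mod\sched$; since $|\CC|\geq 2$ (otherwise the proposition is vacuous) we get $Y_{\CC}\equiv 0$. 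For $|S'|<|\CC|$ the inductive hypothesis annihilates every term $Y_{S'\cup\{C'\}}$ on the right, leaving $(|S'|-1)Y_{S'}\equiv 0\mod\sched$; since $|S'|-1\geq 1$ is a nonzero scalar we can divide, concluding $Y_{S'}\equiv 0\mod\sched$. Specialising to $S'=\{C_1,C_2\}$ finishes the proof.

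The only real step to identify is the recurrence above; once it is in place, the induction is mechanical. No serious obstacle arises, beyond the mild bookkeeping of separating the diagonal $C'\in S'$ from the off-diagonal $C'\notin S'$ contributions and the trivial verification that the statement is vacuous when $|\CC|\leq 1$.
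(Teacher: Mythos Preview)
Your argument is correct. The reduction to a single machine is the same as in the paper, and your recurrence
\[
(|S'|-1)\,Y_{S'} \;\equiv\; -\sum_{C'\in\CC\setminus S'} Y_{S'\cup\{C'\}} \quad \bmod\ \sched
\]
is derived cleanly; the downward induction from $S'=\CC$ then goes through as you wrote, and the edge case $|\CC|\le 1$ is indeed vacuous.

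The paper's route diverges after the reduction step. It multiplies the assignment relation only by a single variable $y_{iC}$ to obtain $\sum_{R\neq C} y_{iC}y_{iR}\equiv 0\bmod\sched$, and then observes that each summand is idempotent, $y_{iC}y_{iR}\equiv (y_{iC}y_{iR})^2$. From there it concludes that every individual $y_{iC}y_{iR}$ vanishes. That last step is really a semantic one: since $\RR[y]/\I_E$ is the ring of real functions on $\{0,1\}^E$, the further quotient by $\sched$ is the ring of functions on the finite zero set, each $y_{iC}y_{iR}$ is nonnegative there (being a square), and a sum of nonnegative functions that is zero forces each term to be zero. Your proof avoids this detour entirely: it is a purely syntactic manipulation inside the ideal, never appealing to the geometry of the zero set or to radicality. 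The cost is a slightly longer induction; the gain is that every step is an explicit ideal-membership identity, which matches better the ``certificate'' spirit of the surrounding $\sos$ framework.
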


\begin{proof}
Since $S$ it is not a partial schedule, there exists a machine $i\in [m]$ such that $\delta_S(i)\ge 2$. 
Therefore, to prove the proposition it is enough to check that $y_{iC}y_{iR}\equiv 0\mod \sched$ for every pair of different configurations $R, C\in \CC$.
Given a configuration $C\in \CC$, we have that
\[\sum_{R\in \CC\setminus \{C\}}y_{iC}y_{iR}\equiv \sum_{S\in \CC\setminus \{C\}}y_{iC}y_{iR}+y_{iC}^2-y_{iC}
\equiv y_{iC}(\sum_{R\in \CC}y_{iS}-1)\equiv 0\mod \sched.\]
On the other hand, $y_{iC}^2y_{iR}^2\equiv y_{iC}y_{iR}$ for every $R\in \CC\setminus \{C\}$.
This yields the result.
 \end{proof}

\begin{proposition}
\label{lem:deg-blowup}
Let $S\subseteq [m]\times \CC$ be a partial schedule of cardinality at most $\ell$. 
Then, 
$y_S\in \text{span}(\{y_L:|L|= \ell\text{ and S is a partial schedule}\}).$
\end{proposition}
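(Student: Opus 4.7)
The plan is a straightforward induction on the quantity $\ell-|S|$, using the machine constraints $\sum_{C\in\CC}y_{iC}-1\in\sched$ to ``pad'' a partial schedule of size $<\ell$ into a linear combination of partial schedules of size $|S|+1$, and then iterating. Throughout we can assume $\ell\le m$, since otherwise the set on the right hand side is empty and $y_S$ must itself be $0$ in the quotient (and in fact $|S|\le m$ for any partial schedule, so some tacit bound of this form is implicit in the statement to make it non-vacuous).

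\noindent\emph{Base case.} If $|S|=\ell$, then $y_S$ already lies in the claimed span.

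\noindent\emph{Inductive step.} Suppose $|S|<\ell$. Since $\mathcal{M}(S)\subseteq [m]$ has cardinality at most $|S|<\ell\le m$, we can choose a machine $i\in[m]\setminus\mathcal{M}(S)$. Because $\sum_{C\in\CC}y_{iC}-1\in\sched$, multiplying by $y_S$ gives
\[
y_S\;\equiv\;y_S\cdot\sum_{C\in\CC}y_{iC}\;=\;\sum_{C\in\CC}y_{S\cup\{(i,C)\}}\pmod{\sched}.
\]
For every $C\in\CC$, the set $S\cup\{(i,C)\}$ is again a partial schedule: the only new edge is incident to the machine $i$, which had degree $0$ in $S$ by the choice $i\notin\mathcal{M}(S)$, so all degree bounds in Definition~\ref{def:partial-schedule} remain satisfied. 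Moreover $|S\cup\{(i,C)\}|=|S|+1$.

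\noindent\emph{Conclusion.} By the induction hypothesis applied to each $S\cup\{(i,C)\}$, every summand on the right lies in $\operatorname{span}(\{y_L:|L|=\ell\text{ and }L\text{ is a partial schedule}\})$, hence so does $y_S$. The only subtlety worth highlighting is the need to check that the new set really is a partial schedule, which is what forces us to pick $i$ \emph{outside} $\mathcal{M}(S)$; this is precisely the point where the size assumption $|S|<\ell\le m$ is used, and it is the only step that could fail if one were careless, so it is the main (minor) obstacle.
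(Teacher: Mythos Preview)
Your proof is correct and takes essentially the same approach as the paper's: both use the machine constraints $\sum_{C\in\CC}y_{iC}\equiv 1$ to pad $S$ with edges to unused machines until it has size $\ell$. The only cosmetic difference is that the paper picks all $\ell-|S|$ new machines at once and expands the product $y_S\prod_{h\in H}\sum_{C\in\CC}y_{hC}$ in a single step, whereas you add one machine at a time via induction.
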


\begin{proof}
Assume that $|S|<\ell$ since otherwise we are done.
Let $H\subseteq [m]$ such that $|H|=\ell-|S|$ and $\delta_S(h)=0$ for every $h\in H$, that is, $H$ is subset of machines that is not incident to the edges $S$ in the bipartite graph $G_S$.
Observe that since $S$ is a partial schedule, it is incident to exactly $|S|$ machines.
Let $\CC^{H}$ be the set of partial schedules with domain $H$.
Since $\sum_{C\in \CC}y_{hC}\equiv 1\mod \sched$ for every $h\in H$, we have 
\[y_S\equiv y_S\prod_{h\in H}\sum_{C\in \CC}y_{hC} \equiv \sum_{R\in \CC^{H}} y_{S\cup R}\mod \sched.\]
In particular, for every $R\in \CC^{H}$ we have that $S\cup R$ is a partial schedule, and $\deg(y_{S\cup R})=|S|+\ell-|S|=\ell$.
 \end{proof}

In the following we construct spanning sets for the row subspaces. 
Given a tableau $\tau_{\lambda}$ with shape $\lambda$, the $\hook(\tau_{\lambda})$ is the tableau with shape $(\lambda_1,1,\ldots,1)\in \ZZ^{m-\lambda_1+1}$, its first row it is equal to the first row of $\tau_{\lambda}$ and the remaining elements of $\tau_{\lambda}$ fill the rest of the cells in increasing order over the rows. 
That part is called the {\it tail} of the hook, and we denote by $\tail(\tau_{\lambda})$ the elements of $[m]$ in the tail of $\hook(\tau_{\lambda})$, and $\row(\tau_{\lambda})=[m]\setminus \tail(\tau_{\lambda})$, that is the elements in the first row of the tableau.

\begin{example}
Let $m=7$ and consider the partition $\lambda=(4,2,1)$.
The tableau $\tau_{\lambda}$ at the left has shape $\lambda$ and the tableau at the right is $\hook(\tau_{\lambda})$, with shape $(4,1,1,1)$; $\row(\tau_{\lambda})=\{1,2,7,4\}$ and $\tail(\tau_{\lambda})=\{3,5,6\}$.
\vspace{.1cm}
\begin{center}
\begin{minipage}{0.3\linewidth}
\begin{tikzpicture}[scale=.4]
\node (v1) at (0,1) {};
\node (v2) at (1,0) {};
\node (v3) at (0,2) {};
\node (v4) at (2,1) {};
\node (v5) at (0,3) {};
\node (v6) at (4,2) {};

\node (v7) at (3,2) {};

\node (v11) at (0.5,0.5) {3};
\node (v22) at (0.5,1.5) {5};
\node (v33) at (0.5,2.5) {1};
\node (v44) at (1.5,1.5) {6};
\node (v55) at (1.5,2.5) {2};
\node (v66) at (2.5,2.5) {7};
\node (v77) at (3.5,2.5) {4};

\draw (v1) rectangle (v2) {};
\draw (v3) rectangle (v4) {};
\draw (v5) rectangle (v6) {};
\draw (v5) rectangle (v2) {};
\draw (v5) rectangle (v4) {};
\draw (v5) rectangle (v7) {};

\end{tikzpicture}
\end{minipage}
\begin{minipage}{0.3\linewidth}
\begin{tikzpicture}[scale=.4]
\node (v1) at (1,-1) {};
\node (v2) at (1,0) {};
\node (v3) at (1,1) {};
\node (v4) at (2,2) {};
\node (v5) at (3,2) {};
\node (v6) at (4,2) {};

\node (v7) at (0,3) {};

\node (v11) at (0.5,-0.5) {6};
\node (v22) at (0.5,0.5) {5};
\node (v33) at (0.5,1.5) {3};
\node (v44) at (0.5,2.5) {1};
\node (v55) at (1.5,2.5) {2};
\node (v66) at (2.5,2.5) {7};
\node (v77) at (3.5,2.5) {4};

\draw (v1) rectangle (v7) {};
\draw (v2) rectangle (v7) {};
\draw (v3) rectangle (v7) {};
\draw (v4) rectangle (v7) {};
\draw (v5) rectangle (v7) {};
\draw (v6) rectangle (v7) {};

\end{tikzpicture}
\end{minipage}
\end{center}
\end{example}

The following lemma gives a spanning set for the row subspaces obtained from the hook tableau. 
We denote by $\sym_{\hook(\tau_{\lambda})}$ the symmetrization respect to the row subgroup of $\hook(\tau_{\lambda})$,
\begin{equation}
\sym_{\hook(\tau_{\lambda})}(f)=\frac{1}{|{\mathcal{R}}_{\hook(\tau_{\lambda})}|}\sum_{\sigma \in {\bf{R}}_{\hook(\tau_{\lambda})}}\sigma f.
\end{equation}
The following lemma provides a spanning set for the row subspace based on the above family polynomials.
The proof follows the lines of~\cite[Lemma 2]{RSST18}.
\begin{lemma}
\label{lem:hooks}
Given a tableau $\tau_{\lambda}$, the row subspace $\rowspace^{\ell}$ of ${\bf{Q}}^{\ell}$ is spanned by
\begin{equation}
\label{eq:hooks}
\Big\{\sym_{\hook(\tau_{\lambda})}(y_S):|S|= \ell\text{ and S is a partial schedule}\Big\}.
\end{equation}
\end{lemma}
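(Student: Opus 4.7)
\textbf{Proof plan for Lemma~\ref{lem:hooks}.}

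The plan is to take an arbitrary element $f \in \rowspace^{\ell}$, express it as a linear combination of partial schedule monomials of exactly degree $\ell$, and then use the $\rowgroup$-invariance together with the fact that $\rowgroup^{\hook(\tau_\lambda)}$ is a normal subgroup of $\rowgroup$ to rewrite the full symmetrization as a linear combination of hook symmetrizations.

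First, I would combine Propositions~\ref{lem:not-partial-vanish} and~\ref{lem:deg-blowup} to argue that every element of $\mathbf{Q}^{\ell}$, and in particular every element of $\rowspace^{\ell}\subseteq \mathbf{Q}^{\ell}_{\lambda}\subseteq \mathbf{Q}^{\ell}$, can be written modulo $\sched$ as a linear combination of monomials $y_S$ with $|S|=\ell$ and $S$ a partial schedule. Indeed, Proposition~\ref{lem:not-partial-vanish} kills the non-partial monomials, and Proposition~\ref{lem:deg-blowup} lifts any degree-$<\ell$ partial schedule monomial to a sum of degree-$\ell$ ones. So write $f = \sum_S c_S\, y_S$ with this sum restricted to partial schedules of size exactly $\ell$.

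Next, since $f \in \rowspace^{\ell}$ is fixed by every $\sigma \in \rowgroup$, averaging gives $f = \sym_{\rowgroup}(f) = \sum_S c_S\, \sym_{\rowgroup}(y_S)$, where $\sym_{\rowgroup}$ denotes averaging over the row group $\rowgroup = S_{\row_1(\tau_\lambda)}\times\cdots\times S_{\row_t(\tau_\lambda)}$. The key structural fact I need is that $\rowgroup^{\hook(\tau_\lambda)} = S_{\row_1(\tau_\lambda)}$, viewed inside $\rowgroup$ as a direct factor, is a normal subgroup. Writing $\rowgroup$ as a disjoint union of cosets $\rowgroup = \bigsqcup_{i=1}^{k} \rho_i\, \rowgroup^{\hook(\tau_\lambda)}$, standard manipulation yields
\begin{equation*}
\sym_{\rowgroup}(y_S) \;=\; \frac{1}{k}\sum_{i=1}^{k} \rho_i\, \sym_{\hook(\tau_\lambda)}(y_S).
\end{equation*}
Because $\rowgroup^{\hook(\tau_\lambda)}$ is normal in $\rowgroup$, conjugation pushes through the symmetrization: for every $\rho_i \in \rowgroup$ and every polynomial $g$,
\begin{equation*}
\rho_i\, \sym_{\hook(\tau_\lambda)}(g) \;=\; \sym_{\rho_i\,\rowgroup^{\hook(\tau_\lambda)}\rho_i^{-1}}(\rho_i g) \;=\; \sym_{\hook(\tau_\lambda)}(\rho_i g).
\end{equation*}
Specializing $g = y_S$ and using $\rho_i y_S = y_{\rho_i S}$, where $\rho_i S$ is again a partial schedule of cardinality $\ell$ because $\rho_i$ is a bijection on $[m]$, gives
\begin{equation*}
\sym_{\rowgroup}(y_S) \;=\; \frac{1}{k}\sum_{i=1}^{k} \sym_{\hook(\tau_\lambda)}(y_{\rho_i S}).
\end{equation*}

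Plugging this back into the expression for $f$ exhibits $f$ as a linear combination of hook symmetrizations of partial-schedule monomials of degree exactly $\ell$, which proves that $\rowspace^{\ell}$ is contained in the span of the set~\eqref{eq:hooks}; this is exactly what the decomposition of Theorem~\ref{thm:GP} and Lemma~\ref{lem:row-pseudo} require. The main obstacle I anticipate is a notational one: tracking the interaction between the quotient by $\sched$ and the group action (well-definedness of $\sym_{\rowgroup}$ on $\RR[y]/\sched$ follows from the fact that $\sched$ is $S_m$-invariant) and then being careful enough with the coset decomposition and normality step so that the push-through $\rho_i\sym_{\hook(\tau_\lambda)}(\cdot) = \sym_{\hook(\tau_\lambda)}(\rho_i \,\cdot\,)$ is actually justified rather than asserted.
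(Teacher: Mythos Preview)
Your argument is correct, but it takes a detour that the paper avoids. You first average over the full row group $\rowgroup$ and then use the coset decomposition together with normality of $\calR_{\hook(\tau_\lambda)}\trianglelefteq\rowgroup$ to push the action through and rewrite each $\sym_{\rowgroup}(y_S)$ as a linear combination of hook symmetrizations. The paper's proof is one line shorter in spirit: it simply observes that $\calR_{\hook(\tau_\lambda)}$ is a \emph{subgroup} of $\rowgroup$, so any $f$ fixed by $\rowgroup$ is already fixed by $\calR_{\hook(\tau_\lambda)}$, whence $f=\sym_{\hook(\tau_\lambda)}(f)$ directly. Applying Propositions~\ref{lem:not-partial-vanish} and~\ref{lem:deg-blowup} to write $f=\sum_S c_S y_S$ over partial schedules of size $\ell$ and using linearity of $\sym_{\hook(\tau_\lambda)}$ finishes the proof with no need for cosets or normality. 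Your approach buys nothing extra here; the normality/direct-factor structure you invoke is stronger than what is needed, since only the subgroup containment $\calR_{\hook(\tau_\lambda)}\le\rowgroup$ is used.
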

\begin{proof}
Let $\mathcal{A}=\{q\in {\bf{Q}^{\ell}}:\sigma q=q \text{ for all }\sigma\in \rowgroup\}$ and $\mathcal{A}'=\{q\in {\bf{Q}^{\ell}}:\sigma q=q \text{ for all }\sigma\in \calR_{\hook(\tau_{\lambda})}\}$.
By definition, we have that $\rowspace^{\ell}\subseteq \mathcal{A}$, and since $\calR_{\hook(\tau_{\lambda})}$ is a subgroup of $\rowgroup$ it follows that $\mathcal{A}\subseteq \mathcal{A}'.$
By Propositions~\ref{lem:not-partial-vanish} and~\ref{lem:deg-blowup}, and the linearity of the symmetrization operator, we have that $\mathcal{A}'$ is spanned by the set in~(\ref{eq:hooks}). 
 \end{proof}

In the row subgroup ${\mathcal{R}}_{\hook(\tau_{\lambda})}$, the elements of $[m]$ that are in the tail remain fixed. 
The rest of the elements on the first row are permuted arbitrarily. 
In particular, ${\mathcal{R}}_{\hook(\tau_{\lambda})}\cong S_{\lambda_1}$.
Therefore, any permutation $\sigma$ in ${\mathcal{R}}_{\hook(\tau_{\lambda})}$ acts over a monomial $y_S$ by separating the bipartite graph $G_S$ into those vertices in $\tail(\tau_{\lambda})$ that are fixed by $\sigma$ and the rest in $\row(\tau_{\lambda})$ that can be permuted. \\

\noindent{\it Configuration profiles.}
Observe that bipartite graphs corresponding to different partial schedules are isomorphic if and only if the degree of every configuration is the same in both graphs. 
We say that a partial schedule is in {\it $\gamma$-profile}, with $\gamma:\CC\to \mathbb{Z}_+$, if for every $C\in \CC$ we have $\delta_S(C)=\gamma(C)$. 
Observe that a partial schedule in $\gamma$-profile has size $\sum_{C\in \CC}\gamma(C)$, quantity that we denote by $\|\gamma\|$.
We denote by $\text{supp}(\gamma)$ the support of the vector $\gamma$, namely, $\{C\in \CC:\gamma(C)>0\}$.
\begin{definition}
Given a partial schedule $T$, we say that a partial schedule $A$ over $[m]\setminus \mathcal{M}(T)$
is a $(T,\gamma)$-extension if 
$A$ is in $\gamma$-profile.
We denote by ${\cal F}(T,\gamma)$ the set of $(T,\gamma)$-extensions.
In particular, every $(T,\gamma)$-extension has size $\|\gamma\|$.
\end{definition}

\begin{example}
Consider $m=4$, $\CC=\{C_1,C_2\}$ and the partial schedule $T=\{(2,C_1),(3,C_2)\}$.
If $\gamma$ is given by $\gamma(C_1)=\gamma(C_2)=1$, we have $\mathcal{F}(T,\gamma)=\{\{(1,C_1),(4,C_2)\},\{(4,C_1),(1,C_2)\}\}$.
If $\mu$ is given by $\mu(C_1)=1$ and $\mu(C_2)=0$, we have $\mathcal{F}(T,\mu)=\{\{(1,C_1)\},\{(4,C_1)\}\}$.
\end{example}

\noindent Given a partial schedule $T$ and a $\gamma$-profile, let $\mathcal{B}_{T,\gamma}$ be the polynomial defined by
\begin{equation}
\label{def:key-poly}
\mathcal{B}_{T,\gamma}=\sum_{A\in {\cal F}(T,\gamma)}y_A,
\end{equation}
if $\gamma\ne 0$, and $1$ otherwise. 
In words, the polynomial above corresponds to sum over all those partial schedules in $\gamma$-profile that are not incident to $\mathcal{M}(T)$.
The following theorem is the main result of this section.
\begin{theorem}
\label{thm:tails}
Let $\lambda\in \Lambda_{\ell}$ and a tableau $\tau_{\lambda}$ of shape $\lambda$.
Then, the row subspace $\rowspace^{\ell}$ of ${\bf{Q}}^{\ell}$ is spanned by
\begin{equation}
\label{eq:hooks-sched}
{\cal{P}}^{\lambda}=\bigcup_{\omega=0}^{\ell}\Big\{y_T\mathcal{B}_{T,\gamma}:T\text{ is partial schedule with $\mathcal{M}(T)=\text{tail}(\tau_{\lambda})$ and }\|\gamma\|=\omega\Big\}.
\end{equation}
\end{theorem}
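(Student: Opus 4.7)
The plan is to leverage Lemma~\ref{lem:hooks}, which already supplies a spanning set of $\rowspace^{\ell}$ in the form of hook symmetrizations $\sym_{\hook(\tau_\lambda)}(y_S)$ for size-$\ell$ partial schedules $S$. It will therefore suffice to express each such symmetrization as an $\RR$-linear combination of elements of $\mathcal{P}^\lambda$ modulo $\sched$. Given such an $S$, I would split it according to the tail/row partition of $[m]$ induced by the hook, writing $S = T_0\sqcup A_0$ with $T_0 = S\cap (\tail(\tau_\lambda)\times \CC)$ and $A_0 = S\cap (\row(\tau_\lambda)\times \CC)$; both are partial schedules on disjoint machine sets. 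Every element of the row group $\calR_{\hook(\tau_\lambda)}\cong S_{\lambda_1}$ fixes $\tail(\tau_\lambda)$ pointwise and acts only on $\row(\tau_\lambda)$, so the symmetrization factors as
\[
\sym_{\hook(\tau_\lambda)}(y_S)\;=\;y_{T_0}\cdot\sym_{\hook(\tau_\lambda)}(y_{A_0}).
\]

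Next I would compute the row factor by an orbit-stabilizer argument. Let $\gamma_0$ be the configuration profile of $A_0$. Because $\calR_{\hook(\tau_\lambda)}$ acts as the full symmetric group on $\row(\tau_\lambda)$, the orbit of $A_0$ is exactly the set of partial schedules on $\row(\tau_\lambda)$ with profile $\gamma_0$, which coincides with $\mathcal{F}(T',\gamma_0)$ for \emph{any} $T'$ with $\mathcal{M}(T')=\tail(\tau_\lambda)$; the cardinality $N_{\gamma_0}:=|\mathcal{F}(T',\gamma_0)|$ depends only on $\lambda_1$ and $\gamma_0$. A direct orbit-stabilizer count then gives
\[
\sym_{\hook(\tau_\lambda)}(y_{A_0})\;=\;\frac{1}{N_{\gamma_0}}\,\mathcal{B}_{T',\gamma_0}
\]
as polynomials. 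When $\mathcal{M}(T_0)=\tail(\tau_\lambda)$ the conclusion is immediate: choosing $T'=T_0$ exhibits $\sym_{\hook(\tau_\lambda)}(y_S)$ as a scalar multiple of $y_{T_0}\mathcal{B}_{T_0,\gamma_0}$, which belongs to $\mathcal{P}^\lambda$ with $\omega=\|\gamma_0\|$.

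When $\mathcal{M}(T_0)\subsetneq \tail(\tau_\lambda)$ I would extend $T_0$ to cover every tail machine via the multiplication trick from Proposition~\ref{lem:deg-blowup}: applying $\sum_{C\in\CC}y_{hC}\equiv 1 \mod \sched$ at each $h\in \tail(\tau_\lambda)\setminus\mathcal{M}(T_0)$ yields
\[
y_{T_0}\;\equiv\;\sum_{R}y_{T_0\cup R}\mod \sched,
\]
where $R$ ranges over partial schedules with $\mathcal{M}(R)=\tail(\tau_\lambda)\setminus\mathcal{M}(T_0)$, so each $T_0\cup R$ satisfies $\mathcal{M}(T_0\cup R)=\tail(\tau_\lambda)$. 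Substituting into the factorization and invoking the row-factor identity yields
\[
\sym_{\hook(\tau_\lambda)}(y_S)\;\equiv\;\frac{1}{N_{\gamma_0}}\sum_{R}y_{T_0\cup R}\,\mathcal{B}_{T_0\cup R,\gamma_0}\mod \sched,
\]
which is the required linear combination of elements of $\mathcal{P}^\lambda$ (all with $\omega=\|\gamma_0\|$). The step I expect to require the most care is a degree-bookkeeping issue: each summand carries polynomial degree $(m-\lambda_1)+\|\gamma_0\|$, which can strictly exceed $\ell$ when $|T_0|<m-\lambda_1$, even though the left-hand side has degree exactly $\ell$. The identity must therefore be interpreted inside $\RR[y]/\sched$; after summing over $R$, the generator relations $\sum_{C}y_{hC}-1$ collapse the effective square-free degree back to $\ell$, so the sum genuinely lies in $\mathbf{Q}^\ell$. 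Once this degree subtlety is handled cleanly, the orbit-stabilizer count and the row/tail decomposition are routine.
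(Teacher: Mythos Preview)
Your proposal is correct and follows essentially the same route as the paper: reduce via Lemma~\ref{lem:hooks}, split $S$ into its tail and row parts, factor the hook symmetrization as $y_{T_0}\cdot\sym_{\hook(\tau_\lambda)}(y_{A_0})$ (this is the paper's Claim~\ref{claim:tail-and-hooks}), extend $T_0$ to the full tail via the $\sum_C y_{hC}\equiv 1$ identity, and identify the row factor with a scalar multiple of $\mathcal{B}_{T,\gamma_0}$.

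One remark: your degree-bookkeeping worry is unnecessary. The statement only asks that $\text{span}(\mathcal{P}^\lambda)\supseteq \rowspace^{\ell}$ (this is exactly the hypothesis needed in Theorem~\ref{thm:GP} and Lemma~\ref{lem:row-pseudo}), not that $\mathcal{P}^\lambda\subseteq \mathbf{Q}^\ell$. So it is perfectly fine for individual elements $y_{T_0\cup R}\,\mathcal{B}_{T_0\cup R,\gamma_0}$ to have degree $(m-\lambda_1)+\|\gamma_0\|>\ell$; you never need to argue that the sum ``collapses'' back to degree $\ell$. The identity you wrote already holds in $\RR[y]/\sched$, the left-hand side lies in $\mathbf{Q}^\ell$ by construction, and that is all that is required.
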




\begin{proof}
By Lemma~\ref{lem:hooks} it is enough to check that the set of polynomials in (\ref{eq:hooks}) is spanned by those in (\ref{eq:hooks-sched}).
Let $S$ be a partial schedule of size $\ell$.
Let $\tail(S,\tau_{\lambda})$ be the subset of $S$ that is incident to the tail of the tableau, that is, $\{(i,C)\in S:i\in \tail(\tau_{\lambda})\}$, and let $\row(S,\tau_{\lambda})=S\setminus \tail(S,\tau_{\lambda})$ be the edges of the partial schedule $S$ incident to the first row of the tableau.

\begin{claim}
\label{claim:tail-and-hooks}
$\sym_{\hook(\tau_{\lambda})}(y_S)=y_{\tail(S,\tau_{\lambda})}\cdot \sym_{\hook(\tau_{\lambda})}\Big(y_{\row(S,\tau_{\lambda})}\Big).$
\end{claim}

Observe that $\tail(S,\tau_{\lambda})$ is a partial schedule over $\tail(\tau_{\lambda})$. 
Similarly as we did in Lemma~\ref{lem:deg-blowup}, the partial schedule incident to the tail can be completed to be in the span of partial schedules with domain equal to $\tail(\tau_{\lambda})$, that is,
\begin{align*}
y_{\tail(S,\tau_{\lambda})}&\equiv y_{\tail(S,\tau_{\lambda})}\prod_{h\in \tail(\tau_{\lambda})\setminus \tail(S,\tau_{\lambda})}\sum_{C\in \CC}y_{hC}\equiv \sum_{L\in \CC^{\tail(\tau_{\lambda})\setminus \tail(S,\tau_{\lambda})}} y_{\tail(S,\tau_{\lambda})\cup L}\mod \sched
\end{align*}
where $\CC^{\tail(\tau_{\lambda})\setminus \tail(S,\tau_{\lambda})}$ is the set of partial schedules with domain $\tail(\tau_{\lambda})\setminus \tail(S,\tau_{\lambda})$.
Thus, every partial schedule in the summation above have domain $\tail(\tau_{\lambda})\cup \tail(S,\tau_{\lambda})\setminus \tail(S,\tau_{\lambda})=\tail(\tau_{\lambda})$.
Therefore, it is enough to check that exists a constant $\kappa$ such that $\sym_{\row(\tau_{\lambda})}(y_{\row(S,\tau_{\lambda})})=\kappa\cdot \mathcal{B}_{\tail(\tau_{\lambda}),\gamma}$
for some profile $\gamma$ with $\|\gamma\|=\ell-|\tail(S,\tau_{\lambda})|$.
Recall that $|\tail(S,\tau_{\lambda})|\le \ell$ since $\lambda\in \Lambda_{\ell}$.
Let $\gamma$ be the profile of the partial schedule $\row(S,\tau_{\lambda})$.
The equality follows since $\sigma\in{\mathcal{R}}_{\text{hook}(\tau_{\lambda})}\cong S_{\row(\tau_{\lambda})}$, together with the fact that $\{(\sigma(i),C):(i,C)\in \row(S,\tau_{\lambda})\}$ is a $(\tail(\tau_{\lambda}),\gamma)$-extension for every permutation in $\sigma\in{\mathcal{R}}_{\text{hook}(\tau_{\lambda})}$.
The constant $\kappa$ is equal to $|{\mathcal{R}}_{\text{hook}(\tau_{\lambda})}|$.
 \end{proof}

\begin{proof}[Claim~\ref{claim:tail-and-hooks}]
Observe that for every permutation $\sigma\in{\mathcal{R}}_{\hook(\tau_{\lambda})}$, we have 
\begin{align*}
\sigma y_S&
=\prod_{(i,C)\in\tail(S,\tau_{\lambda})}y_{\sigma(i)C}\prod_{(i,C)\in\row(S,\tau_{\lambda})}y_{\sigma(i)C}
=y_{\tail(S,\tau_{\lambda})}\sigma y_{\row(S,\tau_{\lambda})},
\end{align*}
since the permutation fixes the edges in $\tail(S,\tau_{\lambda})$.
Therefore, symmetrizing yields that $\sym_{\hook(\tau_{\lambda})}(y_S)$ is equal to
\begin{align*}
\frac{1}{|{\mathcal{R}}_{\hook(\tau_{\lambda})}|}\sum_{\sigma \in {\mathcal{R}}_{\hook(\tau_{\lambda})}}\sigma y_S
						&=y_{\tail(S,\tau_{\lambda})}\cdot \frac{1}{|{\mathcal{R}}_{\hook(\tau_{\lambda})}|}\sum_{\sigma \in {\mathcal{R}}_{\hook(\tau_{\lambda})}}\sigma y_{\row(S,\tau_{\lambda})}\\
						&=y_{\tail(S,\tau_{\lambda})}\cdot \sym_{\hook(\tau_{\lambda})}\Big(y_{\row(S,\tau_{\lambda})}\Big).	\qedhere
\end{align*}
\end{proof}



\subsection{High-degree SoS pseudoexpectation: Proof of Theorem~\ref{thm:negative}}

We now have the ingredients to study the scheduling ideal and we describe the pseudoexpectations from Theorem~\ref{thm:salb}, that are the base for our lower bound.
Recall that for every odd $k\in \NN$, the hard instance $I_k$ has $m=3k$ machines and the linear operators we consider are supported over partial schedules incident to a set of six so called {\it matching configurations}, $\{C_1,\ldots,C_6\}$.
Consider the $\pseudo:\RR[y]/\I_E\to \RR$ such that for every partial schedule $S$ of cardinality at most $k/2$, 
\begin{equation}
\label{eq:pseudo-old}
\pseudo(y_S)=\frac{1}{(3k)_{|S|}}\prod_{j=1}^{6}(k/2)_{\delta_S(C_j)},
\end{equation}
where $(a)_b$ is the lower factorial function, that is, $(a)_b=a(a-1)\cdots (a-b+1)$, and $(a)_0=1$. 
The linear operator $\pseudo$ is zero elsewhere.
We state formally the main result that implies Theorem~\ref{thm:negative}.

\begin{theorem}
\label{thm:main-negative}
For every odd $k\in \NN$, the linear operator $\pseudo$ is a degree-$\lfloor k/6\rfloor$ $\sos$ pseudoexpectation for the configuration linear program in instance $I_k$ and $T=1023$.
\end{theorem}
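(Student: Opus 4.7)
The plan is to verify the three non-vacuous SoS conditions at degree $\ell = \lfloor k/6 \rfloor$. Since $\clp(T)$ has only equality constraints beyond the Boolean ones that are absorbed into the quotient by $\I_E$, condition \ref{prop:sos2} is vacuous. Normalization \ref{prop:normal} follows immediately by plugging $S=\emptyset$ into (\ref{eq:pseudo-old}): all factorials collapse to $1$. For the equality condition \ref{prop:SA}, I invoke Theorem~\ref{thm:salb}: the same $\pseudo$ is already a degree-$\lfloor k/2\rfloor$ SA pseudoexpectation, whose equality conditions imply $\pseudo(fh_j)=0$ for every $f$ with $\deg(fh_j) \leq \lfloor k/2\rfloor$, comfortably including degree $\ell$. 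Hence the entire content of Theorem~\ref{thm:main-negative} reduces to condition \ref{prop:sos1}: $\pseudo(f^2) \geq 0$ for every $f$ with $\deg(f^2) \leq \ell$.

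For \ref{prop:sos1}, I first note that (\ref{eq:pseudo-old}) depends on $S$ only through $|S|$ and the configuration profile $(\delta_S(C_j))_j$, so $\pseudo$ is $S_m$-symmetric. By Lemmas~\ref{lem:sym-sos} and~\ref{lem:row-pseudo} it suffices to show $\pseudo(Z^{\tau_\lambda})\succeq 0$ for every $\lambda \in \Lambda_\ell$ and some tableau $\tau_\lambda$; I use the spanning set ${\cal P}^\lambda$ of Theorem~\ref{thm:tails}, indexed by pairs $(T,\gamma)$ with $\mathcal{M}(T) = \tail(\tau_\lambda)$ and $\|\gamma\|\leq \ell$. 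Since $\pseudo\circ\sym = \pseudo$, the $((T,\gamma),(T',\gamma'))$-entry equals $\pseudo(y_T\mathcal{B}_{T,\gamma} \cdot y_{T'}\mathcal{B}_{T',\gamma'})$. Crucially, if $T \neq T'$ they must disagree on some machine $i$ of $\tail(\tau_\lambda)$, so $y_T y_{T'}$ contains a factor $y_{iC}y_{iC'}$ with $C\neq C'$, which vanishes in the quotient by Proposition~\ref{lem:not-partial-vanish}. Consequently $\pseudo(Z^{\tau_\lambda})$ is block diagonal in $T$, and it suffices to prove PSDness of each diagonal block $M^{(T)}_{\gamma,\gamma'} = \pseudo(y_T \mathcal{B}_{T,\gamma} \mathcal{B}_{T,\gamma'})$.

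For each block I exploit the probabilistic interpretation of $\pseudo$: the expression in (\ref{eq:pseudo-old}) is exactly $\Pr_{\mu}[S \subseteq \text{assignment}]$ for $\mu$ uniform over functions $[3k]\to\{C_1,\ldots,C_6\}$ using each matching configuration exactly $k/2$ times. Conditioning on the tail assignment $T$ yields a distribution $\mu_T$ of the same multinomial flavour on $[m]\setminus\mathcal{M}(T)$ with remaining multiplicities $k/2 - \delta_T(C_j)$, and a direct calculation with lower factorials shows the conditional identity $\pseudo(y_T q) = \pseudo(y_T)\cdot \EE_{\mu_T}[q]$ holds for every polynomial $q$ whose monomials $y_A$ all satisfy $|T|+|A|\leq k/2$. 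Applied with $q = \mathcal{B}_{T,\gamma}\mathcal{B}_{T,\gamma'}$, this identifies $M^{(T)}$ with the non-negative scalar $\pseudo(y_T)$ times the Gram matrix $\bigl(\EE_{\mu_T}[\mathcal{B}_{T,\gamma}\mathcal{B}_{T,\gamma'}]\bigr)_{\gamma,\gamma'}$ in $L^2(\mu_T)$, manifestly PSD; blocks in which $\delta_T(C_j) > k/2$ for some $j$ have $\pseudo(y_T) = 0$ and vanish entirely, so are trivially PSD.

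The main obstacle, and the sole reason for the specific bound $\ell = \lfloor k/6\rfloor$, is the degree bookkeeping required for the conditional identity above to be exact rather than hitting the truncation of $\pseudo$ outside $|S|\leq k/2$. The expansion of $y_T \mathcal{B}_{T,\gamma}\mathcal{B}_{T,\gamma'}$ produces partial schedules of size up to $|T|+\|\gamma\|+\|\gamma'\|$; using $|T| = |\tail(\tau_\lambda)| = m - \lambda_1 \leq \ell$ for $\lambda \in \Lambda_\ell$ and $\|\gamma\|,\|\gamma'\| \leq \ell$ gives a worst case of $3\ell$. To keep this inside the faithful range $|S|\leq k/2$ of (\ref{eq:pseudo-old}), I need $3\ell \leq k/2$, i.e.\ $\ell \leq k/6$. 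This triple degree budget --- one side from the tail plus two sides from the square --- is the unique obstruction preventing a bound closer to $\lfloor k/4\rfloor$, and it is precisely where the spanning set of Theorem~\ref{thm:tails} combines with the probabilistic meaning of $\pseudo$ below $k/2$ to close the argument.
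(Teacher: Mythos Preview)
Your reduction of the theorem to condition \ref{prop:sos1}, the use of $S_m$-symmetry to pass to the row-subspace spanning sets of Theorem~\ref{thm:tails}, the block-diagonal structure of $\pseudo(Z^{\tau_\lambda})$ in the tail $T$, and the degree bookkeeping $|T|+\|\gamma\|+\|\gamma'\|\le 3\ell\le k/2$ are all exactly as in the paper. The gap is in how you conclude that each diagonal block $M^{(T)}$ is positive semidefinite.

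You assert that \eqref{eq:pseudo-old} equals $\Pr_\mu[S\subseteq\text{assignment}]$ for $\mu$ uniform over maps $[3k]\to\{C_1,\dots,C_6\}$ hitting each $C_j$ exactly $k/2$ times, and then conclude that $M^{(T)}$ is $\pseudo(y_T)$ times a Gram matrix in $L^2(\mu_T)$. But the instances $I_k$ are defined only for \emph{odd} $k$, so $k/2\notin\ZZ$ and no such map exists: the set you are taking $\mu$ uniform over is empty. There is therefore no measure $\mu_T$, no $L^2(\mu_T)$, and the ``manifestly PSD'' Gram-matrix conclusion is unsupported. The lower factorials $(k/2)_b$ in \eqref{eq:pseudo-old} are perfectly well-defined for half-integer $k/2$, but they are not moments of any probability distribution, and an interpolation argument from even $k$ (where your interpretation would be valid) to odd $k$ is neither automatic nor supplied.

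The paper closes exactly this gap by proving an algebraic substitute for your probabilistic step: the \emph{pseudoindependence} identity of Lemma~\ref{lem:key-lem},
\[
\pseudo_T(\mathcal{B}_{T,\gamma}\mathcal{B}_{T,\mu})=\pseudo_T(\mathcal{B}_{T,\gamma})\cdot\pseudo_T(\mathcal{B}_{T,\mu}),
\]
established via Lemma~\ref{lem:conditioning}(c) and a Chu--Vandermonde computation (Lemma~\ref{lem:weaker}). Combined with Lemma~\ref{lem:conditioning}(a) this shows each block is the rank-one matrix $\pseudo(y_T)\,v v^\top$ with $v_\gamma=\pseudo_T(\mathcal{B}_{T,\gamma})$, hence PSD since $\pseudo(y_T)\ge 0$. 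Your Gram-matrix picture is the right intuition, but for odd $k$ the positive semidefiniteness genuinely requires this rank-one factorization rather than an appeal to a nonexistent $L^2$ space.
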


\begin{proof}[Theorem~\ref{thm:negative}]
For every odd $k$ the instance $I_k$ described in Section~\ref{sec:lb} is infeasible for $T=1023$.
By Theorem~\ref{thm:main-negative}, the operator $\pseudo$ is a degree-$\lfloor k/6\rfloor$ $\sos$ pseudoexpectation, which in turns imply by Lemma~\ref{lem:infeasible-pseudo} that there is no degree-$\lfloor k/6\rfloor$ $\sos$ certificate of infeasibility.
For an instance with $n$ jobs, let $k$ be the greatest odd integer such that $n=15k+\ell$, with $\ell<30$.
The theorem follows by considering the instance $I_k$ above with $\ell$ dummy jobs of processing time equal to zero.
 \end{proof}

Theorem~\ref{thm:salb} guarantees that for every $k$ odd, $\pseudo$ is a degree-$\lfloor k/2\rfloor$ pseudoexpectation, and therefore a degree-$\lfloor k/6\rfloor$ pseudoexpectation as well.
In particular, properties \ref{prop:normal} and \ref{prop:SA} are satisfied. 
Since the configuration linear program is constructed from equality constraints, it is enough to check property \ref{prop:sos1} for high enough degree, in this case $\ell=\lfloor k/6\rfloor$.
To check property \ref{prop:sos1} we require a notion of {\it conditional pseudoexpectations}.
Given a partial schedule $T$, consider the operator $\pseudo_{T}:\RR[y]/\I_E\to \mathbb{R}$ such that
\begin{equation}
\label{def:conditional-pseudo}
\pseudo_{T}(y_S)=\frac{1}{(3k-|T|)!}\prod_{j=1}^{6}(k/2-\delta_T(C_j))_{\delta_S(C_j)}
\end{equation}
for every partial schedule $S$ over the machines $[m]\setminus \mathcal{M}(T)$ and zero otherwise.
Observe that if $T=\emptyset$ it corresponds to the linear operator $\pseudo$ in (\ref{eq:pseudo-old}).
The following lemmas about the conditional pseudoexpectation in (\ref{def:conditional-pseudo}) are key for proving that $\pseudo$ is a high-degree $\sos$ pseudoexpectation. 
We state the lemmas and show how to conclude Theorem~\ref{thm:main-negative} using them.
In particular, in Lemma~\ref{lem:key-lem} we prove a strong {\it pseudoindependence} property satisfied by the conditional pseudoexpectations and the polynomials (\ref{def:key-poly}) in the spanning set. 

\begin{lemma}
\label{lem:pseudo-sym}
The linear operator $\pseudo$ is $S_m$-symmetric.
\end{lemma}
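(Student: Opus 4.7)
The plan is to reduce the claim, by linearity, to checking it on a basis of monomials, and then observe that the defining formula for $\pseudo(y_S)$ depends only on quantities that are invariant under the action of $S_m$ on machine indices. Recall that symmetry of $\pseudo$ means $\pseudo(f)=\pseudo(\sym(f))$ for every $f\in \RR[y]/\I_E$; since $\sym$ averages over $S_m$, this is equivalent to $\pseudo(\sigma f)=\pseudo(f)$ for every $\sigma\in S_m$ and every polynomial $f$. Linearity of $\pseudo$ reduces this, in turn, to checking $\pseudo(\sigma y_S)=\pseudo(y_S)$ for every square-free monomial $y_S$ and every $\sigma\in S_m$, where by the defining action $\sigma y_S = y_{\sigma S}$ with $\sigma S = \{(\sigma(i),C):(i,C)\in S\}$.

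Next, I would split into two cases according to whether $S$ is a partial schedule. In the first case, $S$ fails to be a partial schedule, so by Proposition~\ref{lem:not-partial-vanish} we have $y_S\equiv 0\mod \sched$ and hence $\pseudo(y_S)=0$. Since $\sigma$ is a bijection on $[m]$, the set $\sigma S$ inherits the same degree profile on machines, so it is not a partial schedule either, giving $\pseudo(\sigma y_S)=\pseudo(y_{\sigma S})=0$. In the second case, $S$ is a partial schedule of cardinality at most $k/2$ (otherwise both sides are zero by the definition of $\pseudo$), and then $\sigma S$ is also a partial schedule, with $|\sigma S|=|S|$ and $\delta_{\sigma S}(C_j)=\delta_S(C_j)$ for every matching configuration $C_j$, because the action of $\sigma$ only relabels machines and does not touch configurations. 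Plugging into~(\ref{eq:pseudo-old}),
\begin{equation*}
\pseudo(y_{\sigma S})=\frac{1}{(3k)_{|\sigma S|}}\prod_{j=1}^{6}(k/2)_{\delta_{\sigma S}(C_j)}=\frac{1}{(3k)_{|S|}}\prod_{j=1}^{6}(k/2)_{\delta_S(C_j)}=\pseudo(y_S).
\end{equation*}

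Finally, I would combine the two cases and invoke linearity to conclude $\pseudo(\sigma f)=\pseudo(f)$ for every $f\in \RR[y]/\I_E$ and every $\sigma\in S_m$, and hence $\pseudo(f)=\pseudo(\sym(f))$, which is exactly $S_m$-symmetry. There is no real obstacle here; the entire argument is an observation that the defining formula of $\pseudo$ is a function of the two $S_m$-invariants $|S|$ and $(\delta_S(C_j))_{j=1}^{6}$, together with the fact that being (or failing to be) a partial schedule is itself an $S_m$-invariant property. The only minor care needed is to handle uniformly both the active range $|S|\le k/2$ and the elsewhere-zero extension of $\pseudo$, which the case split above takes care of automatically.
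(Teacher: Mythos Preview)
Your proposal is correct and follows essentially the same approach as the paper: both arguments reduce to checking invariance on monomials and observe that the defining formula~(\ref{eq:pseudo-old}) depends only on $|S|$ and the profile $(\delta_S(C_j))_j$, which are unchanged under the machine permutation $S\mapsto\sigma S$. One small remark: in your first case you invoke Proposition~\ref{lem:not-partial-vanish} to get $y_S\equiv 0\bmod\sched$ and then write ``hence $\pseudo(y_S)=0$,'' but $\pseudo$ is a functional on $\RR[y]/\I_E$, not on $\RR[y]/\sched$, so that implication needs the SA property~\ref{prop:SAA}; it is cleaner (and matches the paper) to simply cite the definition of $\pseudo$ as ``zero elsewhere'' on non-partial-schedule monomials.
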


\begin{lemma}
\label{lem:conditioning}
Let $T$ be a partial schedule. 
Then, the following holds:
\begin{enumerate}
	\item[$(a)$]\label{lem:conditioning-a} If $S$ is a partial schedule and $T\cap S=\emptyset$, then $\pseudo(y_Ty_S)=\pseudo_T(y_S)\pseudo(y_T)$.
	\item[$(b)$]\label{lem:conditioning-b} If $S,R$ are two partial schedules such that $R\cap S=\emptyset$ and $T\cap (R\cup S)=\emptyset$, then 
	\[\pseudo_T(y_Ry_S)=\pseudo_T(y_R)\cdot \pseudo_{T\cup R}(y_S).\]
	\item[$(c)$]\label{lem:conditioning-d} Let $\gamma$ be a profile with $\text{supp}(\gamma)\subseteq \{C_1,\ldots,C_6\}$ and $|T|+\|\gamma\|\le k/2$. 
	Then,
	\[\pseudo_{T}({\mathcal{B}_{T,\gamma}})=\prod_{j=1}^{6}\frac{1}{\gamma(C_j)!}(k/2-\delta_T(C_j))_{\gamma(C_j)}.\] 
\end{enumerate}
\end{lemma}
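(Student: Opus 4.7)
The three statements are all purely computational consequences of the explicit formulas for $\pseudo$ and $\pseudo_T$ in~(\ref{eq:pseudo-old}) and~(\ref{def:conditional-pseudo}), combined with the multiplicative splitting of the lower factorial,
\begin{equation*}
(a)_{b+c}=(a)_{b}\,(a-b)_{c}.
\end{equation*}
Before any computation, I would dispose of the degenerate cases where a union such as $T\cup S$ or $T\cup R\cup S$ fails to be a partial schedule: by Proposition~\ref{lem:not-partial-vanish} the corresponding monomial vanishes in $\RR[y]/\sched$, and simultaneously $\pseudo_T$ (respectively $\pseudo_{T\cup R}$) annihilates $y_S$ because $S$ is not a partial schedule over $[m]\setminus\mathcal{M}(T)$ (resp. $[m]\setminus \mathcal{M}(T\cup R)$); so both sides are zero. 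In the non-degenerate cases, the disjointness hypotheses guarantee that the relevant machine sets are disjoint, hence $|T\cup S|=|T|+|S|$, $|T\cup R|=|T|+|R|$, and $\delta_{T\cup S}(C_j)=\delta_T(C_j)+\delta_S(C_j)$, and similarly for $T\cup R$.

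For part $(a)$, observe that $y_Ty_S\equiv y_{T\cup S}\mod \sched$. Substituting into the definition of $\pseudo$ and applying the splitting identity to both $(3k)_{|T|+|S|}=(3k)_{|T|}\,(3k-|T|)_{|S|}$ and $(k/2)_{\delta_T(C_j)+\delta_S(C_j)}=(k/2)_{\delta_T(C_j)}\,(k/2-\delta_T(C_j))_{\delta_S(C_j)}$ separates the product into the factor $\pseudo(y_T)$ times the factor $\pseudo_T(y_S)$. Part $(b)$ is proved in precisely the same way, replacing the outer base $3k$ by $3k-|T|$ and each $k/2$ by $k/2-\delta_T(C_j)$; the identity $\delta_{T\cup R}(C_j)=\delta_T(C_j)+\delta_R(C_j)$ is what makes the splitting align with the definitions of $\pseudo_T$ and $\pseudo_{T\cup R}$.

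Part $(c)$ is a linear combination of instances of the preceding calculation. By linearity,
\begin{equation*}
\pseudo_T(\mathcal{B}_{T,\gamma})=\sum_{A\in \mathcal{F}(T,\gamma)} \pseudo_T(y_A),
\end{equation*}
and the key observation is that every $A\in\mathcal{F}(T,\gamma)$ has the same cardinality $|A|=\|\gamma\|$ and the same configuration degrees $\delta_A(C_j)=\gamma(C_j)$, so each summand takes the common value
\begin{equation*}
\pseudo_T(y_A)=\frac{1}{(3k-|T|)_{\|\gamma\|}}\prod_{j=1}^{6}(k/2-\delta_T(C_j))_{\gamma(C_j)}.
\end{equation*}
The remaining step is to count $|\mathcal{F}(T,\gamma)|$. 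A $(T,\gamma)$-extension is exactly an ordered partition of a size-$\|\gamma\|$ subset of the $3k-|T|$ machines outside $\mathcal{M}(T)$ into blocks of sizes $\gamma(C_1),\ldots,\gamma(C_6)$, giving the multinomial count $\binom{3k-|T|}{\gamma(C_1),\ldots,\gamma(C_6),\,3k-|T|-\|\gamma\|}=(3k-|T|)_{\|\gamma\|}/\prod_{j=1}^{6}\gamma(C_j)!$. Multiplying the two factors cancels $(3k-|T|)_{\|\gamma\|}$ and yields the claimed expression.

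There is no substantial obstacle here: the lemma is essentially bookkeeping. The only subtlety is being careful about the zero cases in parts $(a)$ and $(b)$ (partial-schedule violations and support disjointness), and correctly identifying the multinomial count in $(c)$; both are routine once the definitions are unpacked.
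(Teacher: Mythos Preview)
Your proof is correct and follows essentially the same approach as the paper: both rely on the lower-factorial splitting $(a)_{b+c}=(a)_b(a-b)_c$ together with the additivity $\delta_{T\cup S}(C_j)=\delta_T(C_j)+\delta_S(C_j)$ for disjoint partial schedules, and both compute $(c)$ by observing that every $A\in\mathcal{F}(T,\gamma)$ gives the same value of $\pseudo_T(y_A)$ and then counting $|\mathcal{F}(T,\gamma)|$. The only presentational differences are that the paper derives $(a)$ as the special case $T=\emptyset$ of $(b)$ rather than proving it separately, and it writes the count in $(c)$ as $\binom{3k-|T|}{\|\gamma\|}\cdot\|\gamma\|!\cdot\prod_j 1/\gamma(C_j)!$ rather than directly as a multinomial; these are equivalent.
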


\begin{lemma}
\label{lem:key-lem}
Let $T$ be a partial schedule and  $\gamma$, $\mu$ a pair of configuration profiles with $|T|+\|\gamma\| +\|\mu\| \le k/2$ and $\text{supp}(\gamma),\text{supp}(\mu)\subseteq \{C_1,\ldots,C_6\}$.
Then,
\begin{equation}
\label{eq:truco}
\pseudo_T(\mathcal{B}_{T,\gamma}\mathcal{B}_{T,\mu})=\pseudo_T(\mathcal{B}_{T,\gamma})\cdot \pseudo_T(\mathcal{B}_{T,\mu}).
\end{equation}
\end{lemma}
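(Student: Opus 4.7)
The plan is to expand the product $\mathcal{B}_{T,\gamma}\mathcal{B}_{T,\mu}$ as a double sum over $(A,B)\in\mathcal{F}(T,\gamma)\times\mathcal{F}(T,\mu)$, apply the conditioning rules of Lemma~\ref{lem:conditioning} after a careful reparametrization, and finish with Vandermonde's convolution. First, by Proposition~\ref{lem:not-partial-vanish}, $y_Ay_B\equiv 0 \bmod \sched$ unless $A\cup B$ is a partial schedule; the surviving pairs are exactly those for which $A$ and $B$ agree on the configuration at every shared machine. For such a pair I set $D:=A\cap B$ (so $D\subseteq A$ with $\delta_D\le\min(\gamma,\mu)$) and $B':=B\setminus A$. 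In the nonvanishing regime, $\mathcal{M}(B')\cap\mathcal{M}(A)=\emptyset$, so $B'\in\mathcal{F}(T\cup A,\mu-\delta_D)$, and one checks that $(A,B)\mapsto(A,D,B')$ is a bijection onto such triples. Because every variable in $y_D$ already appears in $y_A$, one has $y_Ay_B=y_Ay_{B'}$ in the quotient, so only monomials with disjoint supports remain.

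With $A\in\mathcal{F}(T,\gamma)$ fixed, the disjointness of $\mathcal{M}(A)$, $\mathcal{M}(B')$, and $\mathcal{M}(T)$ lets me apply Lemma~\ref{lem:conditioning}(b), giving $\pseudo_T(y_Ay_{B'})=\pseudo_T(y_A)\,\pseudo_{T\cup A}(y_{B'})$. Summing over $B'\in\mathcal{F}(T\cup A,\mu-\delta_D)$ and invoking Lemma~\ref{lem:conditioning}(c) for the conditioning $T\cup A$ (using $\delta_{T\cup A}(C_j)=\delta_T(C_j)+\gamma(C_j)$) evaluates the inner sum as an explicit product of falling factorials. Next, summing over $D\subseteq A$ by profile contributes $\prod_{j=1}^{6}\binom{\gamma(C_j)}{d_j}$ subschedules with $\delta_D(C_j)=d_j$, so the entire sum factorizes across the six matching configurations.

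The decisive step is then to recognize each factor as the Vandermonde convolution
\[
\sum_{d=0}^{\min(\gamma(C_j),\mu(C_j))}\binom{\gamma(C_j)}{d}\binom{k/2-\delta_T(C_j)-\gamma(C_j)}{\mu(C_j)-d}=\binom{k/2-\delta_T(C_j)}{\mu(C_j)},
\]
whose product across $j$ is exactly $\pseudo_T(\mathcal{B}_{T,\mu})$ by another application of Lemma~\ref{lem:conditioning}(c). This yields $\pseudo_T(y_A\mathcal{B}_{T,\mu})=\pseudo_T(y_A)\,\pseudo_T(\mathcal{B}_{T,\mu})$ for every $A\in\mathcal{F}(T,\gamma)$, and summing over $A$ completes the proof. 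I expect the main obstacle to be setting up the $(A,B)\leftrightarrow(A,D,B')$ bijection cleanly and verifying that the vanishing pairs are exactly those excluded from it; once this bookkeeping is in place, the algebraic identity collapses to classical Vandermonde and the factorization drops out.
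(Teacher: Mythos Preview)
Your proof is correct and takes a genuinely different route from the paper's. The paper proceeds in two stages: it first proves an auxiliary result (Lemma~\ref{lem:weaker}) establishing the pseudoindependence identity in two special cases---when $\nu,\xi$ have disjoint supports, and when both are supported on a single common configuration (the latter case is where Chu--Vandermonde appears)---and then proves Lemma~\ref{lem:key-lem} by showing a factorization $\pseudo_T(\mathcal{B}_{T,\gamma}\mathcal{B}_{T,\mu})=\pseudo_T(\prod_{j=1}^6\mathcal{B}_{T,\gamma_j}\mathcal{B}_{T,\mu_j})$, peeling off one configuration $C_j$ at a time and invoking the two special cases. Your argument bypasses Lemma~\ref{lem:weaker} entirely: you expand the double sum directly, reparametrize surviving pairs via the bijection $(A,B)\leftrightarrow(A,D,B')$, apply Lemma~\ref{lem:conditioning}(b),(c), and recognize the resulting sum as a product of six simultaneous Vandermonde convolutions. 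Both arguments rest on the same ingredients (Lemma~\ref{lem:conditioning} plus Vandermonde), but yours is a single-shot computation, shorter and more elementary, while the paper's is more modular and isolates Lemma~\ref{lem:weaker} as a standalone statement. One small remark: you invoke Proposition~\ref{lem:not-partial-vanish} to kill the non-partial-schedule terms, which is a statement modulo $\sched$; strictly speaking what you need is that $\pseudo_T$ vanishes on such monomials, which is immediate from the definition of $\pseudo_T$ (it is zero on any $y_S$ with $S$ not a partial schedule over $[m]\setminus\mathcal{M}(T)$). This is a cosmetic point and does not affect correctness.
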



\begin{proof}[Theorem~\ref{thm:main-negative}]
Let $\ell=\lfloor k/6\rfloor$.
Given a partition $\lambda\in \lambda_{\ell}$, consider the tableau $\tau_{\lambda}$ such that $\tail(\tau_{\lambda})=[3k-\lambda_1]$ and $\row(\tau_{\lambda})=[3k]\setminus [3k-\lambda_1]$.
The partial schedules with domain $[3k-\lambda_1]$ can be identified with $\CC^{[3k-\lambda_1]}$, the set of functions from $[3k-\lambda_1]$ to $\CC$.
In particular the spanning set in (\ref{eq:hooks-sched}) is described by
$\mathcal{P}^{\lambda}=\bigcup_{\omega=0}^{\ell}\Big\{y_T\beta_{T,\gamma}:T\in \CC^{[3k-\lambda_1]}\text{ and }\|\gamma\|=\omega\Big\}.$
To apply Lemma~\ref{lem:row-pseudo} we need to study the matrix $\pseudo(Z^{\lambda})$. 
Recall that for $T,S\in \CC^{[3k-\lambda_1]}$ and profiles $\gamma,\nu$ with $\|\gamma\|,\|\mu\|\le \ell$, the corresponding entry of the matrix $\pseudo(Z^{\lambda})$ is given by
\[
\pseudo\left(\sym\Big(y_Ty_{S}\beta_{T,\gamma}\beta_{S,\mu}\Big)\right)=\pseudo\left(\sym\Big(y_{T\cup S}\beta_{T,\gamma}\beta_{S,\mu}\Big)\right).\]
By Lemma \ref{lem:pseudo-sym} the operator $\pseudo$ is symmetric, and therefore,
\[\pseudo\left(\sym\Big(y_{T\cup S}\beta_{T,\gamma}\beta_{S,\mu}\Big)\right)=\pseudo\Big(y_{T\cup S}\beta_{T,\gamma}\beta_{S,\mu}\Big).\]
Since both $T,S$ are partial schedules such that $\mathcal{M}(T)=\mathcal{M}(S)$, we have that $T\cup S$ is a partial schedule if and only if $T=S$.
Thus, the matrix $\pseudo(Z^{\lambda})$ is block diagonal, with a block for each partial schedule $T\in \CC^{[3k-\lambda_1]}$.
For every $\Theta$ indexed by the elements of the spanning set above, we have then
\[
\Big\langle\pseudo(Z^{\lambda}),\Theta\Theta^{\top}\Big\rangle=\sum_{T\in \CC^{[3k-\lambda_1]}}\sum_{\substack{\gamma:\|\gamma\|\le \ell\\ \mu:\|\mu\|\le \ell}}\pseudo\Big(y_T\beta_{T,\gamma}\beta_{T,\mu}\Big)\Theta_{T,\gamma}\Theta_{T,\mu}.\]
Since $|T|+\|\gamma\|+\|\mu\|\le 3\ell\le k/2$ for every partial schedule $T$ and profiles $\gamma,\mu$ as above, by applying Lemma~\ref{lem:conditioning} $(a)$ and Lemma~\ref{lem:key-lem} we obtain that 
\begin{align*}
&\sum_{T\in \CC^{[3k-\lambda_1]}}\sum_{\substack{\gamma:\|\gamma\|\le \ell\\ \mu:\|\mu\|\le \ell}}\pseudo\Big(y_T\beta_{T,\gamma}\beta_{T,\mu}\Big)\Theta_{T,\gamma}\Theta_{T,\mu}
\\
&=\sum_{T\in \CC^{[3k-\lambda_1]}}\pseudo(y_T)\sum_{\substack{\gamma:\|\gamma\|\le \ell\\ \mu:\|\mu\|\le \ell}}\pseudo_T(\beta_{T,\gamma})\pseudo_T(\beta_{T,\mu})\Theta_{T,\gamma}\Theta_{T,\mu},
\end{align*}
and by rearranging terms we conclude that
\[
\Big\langle\pseudo(Z^{\lambda}),\Theta\Theta^{\top}\Big\rangle=\sum_{T\in \CC^{[3k-\lambda_1]}}\pseudo(y_T)\left(\sum_{\gamma:\|\gamma\|\le \ell}\pseudo_T(\beta_{T,\gamma})\Theta_{T,\gamma}\right)^2\ge 0.\]\qedhere
\end{proof}

\begin{proof}[Lemma~\ref{lem:pseudo-sym}]
Given $\sigma\in S_m$ and a partial schedule $S$, $\pseudo(\sigma y_S)=\pseudo(y_{\sigma(S)})$, where $\sigma(S)=\{(\sigma(i),C):(i,C)\in S\}$.
In particular, since $|S|=|\sigma(S)|$ and profile of $S$ is the same profile of $\sigma(S)$, it holds $\pseudo(y_S)=\pseudo(\sigma y_S)$.
Therefore, $\pseudo(y_S)=\frac{1}{m!}\sum_{\sigma\in S_m}\pseudo(\sigma y_S)=\pseudo(\sym(y_S))$.
 \end{proof}

\begin{proof}[Lemma~\ref{lem:conditioning}]
Property $b)$ implies $a)$ by taking $T=\emptyset$.
One can check from the definition of the lower factorial that $(x)_{a+b}=(x)_a(x-a)_b$.
Since the partial schedules $R,S$ and $T$ are disjoint, it holds for every $C\in \CC$ that $\delta_{R\cup S}(C)=\delta_{R}(C)+\delta_{S}(C)$ and $\delta_{T\cup R}(C)=\delta_{T}(C)+\delta_{R}(C)$.
Therefore,
\begin{align*}
&(3k-|T|)_{|R\cup S|}\cdot \pseudo_T(y_Ry_S)\\
				&=\prod_{j=1}^{6}(k/2-\delta_T(C_j))_{\delta_{R}(C_j)}\cdot \prod_{j=1}^{6}(k/2-\delta_T(C_j)-\delta_{R}(C_j))_{\delta_{S}(C_j)}\\
				&=(3k-|T|)_{|R|}\cdot \pseudo_T(y_R)\cdot (3k-|T|-|R|)_{|S|}\cdot \pseudo_{T\cup R}(y_S),
\end{align*}
and the lemma follows since $(3k-|T|)_{|R\cup S|}=(3k-|T|)_{|R|}\cdot (3k-|T|-|R|)_{|S|}$.
We now prove property $(c)$, that is more involved.
First of all, observe that for every $H\in \mathcal{F}(T,\gamma)$ the value of $\pseudo_T(y_H)$ depends only on $T$ and the configuration profile $\gamma$.
More specifically,
\[\pseudo_T(y_H)=\frac{1}{(3k-|T|)}_{\|\gamma\|}\prod_{j=1}^{6}(k/2-\delta_T(C_j))_{\gamma(C_j)},\]
since $|H|=\|\gamma\|$ and $\delta_H(C_j)=\gamma(C_j)$ for every $j\in \{1,\ldots,6\}$.
Then, $\pseudo_T(\mathcal{B}_{T,\gamma})$ equals $|\mathcal{F}(T,\gamma)|$ times the quantity above.
The number of machines that  can support a partial schedule $H$ that extend $T$ is $3k-|T|$, and since $|H|=\|\gamma\|$ the number of possible machine domains is
${3k-|T| \choose \|\gamma\|}.$
Given a set of machines with cardinality $\|\gamma\|$, the number of partial schedules with domain equal to this set of machines and that are in configuration profile $\gamma$ are
$\|\gamma\| ! \prod_{j=1}^{6}\frac{1}{\gamma(C_j)!}.$
Then, overall, the value of $\pseudo_T(\mathcal{B}_{T,\gamma})$ is equal to
\begin{align*}
&{3k-|T| \choose \| \gamma\|} \|\gamma\| ! \frac{1}{(3k-|T|)}_{\|\gamma\|} \cdot \prod_{j=1}^{6}\frac{1}{\gamma(C_j)!}(k/2-\delta_T(C_j))_{\gamma(C_j)}\\
=&\;\frac{(3k-|T|)!}{(3k-|T|-\|\gamma\|)!}\cdot \frac{1}{(3k-|T|)}_{\|\gamma\|}\cdot \prod_{j=1}^{6}\frac{1}{\gamma(C_j)!}(k/2-\delta_T(C_j))_{\gamma(C_j)}\\
=&\;\prod_{j=1}^{6}\frac{1}{\gamma(C_j)!}(k/2-\delta_T(C_j))_{\gamma(C_j)},
\end{align*}
in the last step we used that for every real $x$ and non-negative integer $b$, it holds $(x-b)!(x)_b=x!$.
 \end{proof}

To prove Lemma~\ref{lem:key-lem} we obtain first a weaker version, that together with a polynomial decomposition in the scheduling ideal yields to the pseudoindependence result.

\begin{lemma}
\label{lem:weaker}
Let $T\subseteq [m]\times \CC$ be a partial schedule.
\begin{enumerate}
	\item[$(a)$] If $\nu$ and $\xi$ are configuration profiles such that $\text{supp}(\nu)\cap\text{supp}(\xi)=\emptyset$ and $|T|+\|\nu\|+\|\xi\|\le k/2$, then $\pseudo_T(\mathcal{B}_{T,\nu}\mathcal{B}_{T,\xi})=\pseudo_T(\mathcal{B}_{T,\nu})\cdot \pseudo_T(\mathcal{B}_{T,\xi}).$
	\item[$(b)$] If $\nu$ and $\xi$ are configuration profiles such that there exists $C\in \{C_1,\ldots,C_6\}$ with $\text{supp}(\nu),\text{supp}(\xi)\subseteq \{C\}$, and $|T|+\|\nu\|+\|\xi\|\le k/2$, then we have that $\pseudo_{T}(\mathcal{B}_{T,\nu}\mathcal{B}_{T,\xi})=\pseudo_T(\mathcal{B}_{T,\nu})\cdot \pseudo_T(\mathcal{B}_{T,\xi})$.
\end{enumerate}
\end{lemma}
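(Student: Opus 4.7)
}
Both parts reduce to explicit manipulations of the product $\mathcal{B}_{T,\nu}\mathcal{B}_{T,\xi}$ in the quotient ring $\RR[y]/\sched$, followed by an application of Lemma~\ref{lem:conditioning}(c), which already supplies a closed form for $\pseudo_T(\mathcal{B}_{T,\gamma})$.

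For part~(a), I would first establish the polynomial identity $\mathcal{B}_{T,\nu}\mathcal{B}_{T,\xi}\equiv \mathcal{B}_{T,\nu+\xi}\mod\sched$. Expanding the product yields $\sum_{A\in\mathcal{F}(T,\nu),\,B\in\mathcal{F}(T,\xi)}y_A y_B$. Since $\text{supp}(\nu)\cap\text{supp}(\xi)=\emptyset$, no edge $(i,C)$ lies in both $A$ and $B$, so $y_A y_B\equiv y_{A\cup B}$; by Proposition~\ref{lem:not-partial-vanish} a term survives only when $A\cup B$ is a partial schedule, that is, when $\mathcal{M}(A)\cap\mathcal{M}(B)=\emptyset$. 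In that case $A\cup B\in\mathcal{F}(T,\nu+\xi)$, and conversely every $E\in\mathcal{F}(T,\nu+\xi)$ decomposes uniquely as $E=A\sqcup B$ by splitting its edges according to the disjoint supports of $\nu$ and $\xi$, which gives the claimed identity. Applying Lemma~\ref{lem:conditioning}(c) on both sides, and using that for every $C_j$ at most one of $\nu(C_j),\xi(C_j)$ is nonzero, the formula for $\pseudo_T(\mathcal{B}_{T,\nu+\xi})$ factors trivially as $\pseudo_T(\mathcal{B}_{T,\nu})\pseudo_T(\mathcal{B}_{T,\xi})$.

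For part~(b), set $a=\nu(C)$, $b=\xi(C)$, $r=k/2-\delta_T(C)$, and write $\mathcal{B}_s$ for $\mathcal{B}_{T,s\cdot\mathbf{1}_C}$. Here both factors are supported on $C$, so no term vanishes by Proposition~\ref{lem:not-partial-vanish}; overlapping edges $(i,C)\in A\cap B$ merely contribute $y_{iC}^2\equiv y_{iC}$. Grouping monomials by $E=A\cup B$ of size $t$, a direct count of ordered decompositions $E=A\cup B$ with $|A|=a$ and $|B|=b$ yields multiplicity $\binom{t}{a+b-t}\binom{2t-a-b}{t-b}$ for $\max(a,b)\le t\le a+b$, so
\[
\mathcal{B}_{T,\nu}\mathcal{B}_{T,\xi}\equiv \sum_{t=\max(a,b)}^{a+b}\binom{t}{a+b-t}\binom{2t-a-b}{t-b}\,\mathcal{B}_t\mod\sched.
\]
Taking $\pseudo_T$, using $\pseudo_T(\mathcal{B}_t)=(r)_t/t!$ from Lemma~\ref{lem:conditioning}(c), and substituting $i=a+b-t$ collapses the right-hand side to $\sum_{i=0}^{\min(a,b)}(r)_{a+b-i}/[\,i!(a-i)!(b-i)!\,]$. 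Dividing the Chu--Vandermonde-type identity $(r)_a(r)_b=\sum_{i}\binom{a}{i}\binom{b}{i}\,i!\,(r)_{a+b-i}$ by $a!b!$ shows this equals $(r)_a(r)_b/(a!b!)=\pseudo_T(\mathcal{B}_{T,\nu})\pseudo_T(\mathcal{B}_{T,\xi})$, as desired.

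The main obstacle I anticipate is the combinatorial bookkeeping in part~(b): counting ordered decompositions correctly while keeping track of reductions modulo $\sched$, and verifying that Lemma~\ref{lem:conditioning}(c) applies to every $\mathcal{B}_t$ appearing in the sum. The latter is automatic since $t\le a+b\le k/2-|T|$ by hypothesis. The falling-factorial identity itself is a standard consequence of Chu--Vandermonde and can be proved by induction on $b$ using $(r)_{a+b-i}=(r-a+i)(r)_{a+b-i-1}/\cdots$ in a straightforward manner.
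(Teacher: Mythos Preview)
Your argument is correct, but it follows a genuinely different route from the paper's proof in both parts.

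For part~(a), the paper does \emph{not} establish the polynomial identity $\mathcal{B}_{T,\nu}\mathcal{B}_{T,\xi}\equiv\mathcal{B}_{T,\nu+\xi}\mod\sched$. Instead it fixes $A\in\mathcal{F}(T,\nu)$ and works with the conditional operator: splitting $\mathcal{F}(T,\xi)$ into $\mathcal{F}(T\cup A,\xi)$ and its complement, using Proposition~\ref{lem:not-partial-vanish} to kill the complement, then invoking Lemma~\ref{lem:conditioning}(b) to peel off $\pseudo_T(y_A)$ and Lemma~\ref{lem:conditioning}(c) to show $\pseudo_{T\cup A}(\mathcal{B}_{T\cup A,\xi})=\pseudo_T(\mathcal{B}_{T,\xi})$. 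Your approach is cleaner: it isolates the combinatorics in a single equivalence in the quotient ring and needs only one application of Lemma~\ref{lem:conditioning}(c), bypassing part~(b) of that lemma entirely. The paper's approach, on the other hand, makes the ``conditioning'' structure of the pseudoexpectation more explicit, which is the theme carried into Lemma~\ref{lem:key-lem}.

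For part~(b), the paper again fixes $A\in\mathcal{F}(T,\nu)$ (assuming $\nu(C)\ge\xi(C)$) and parameterizes by $\omega=|B\setminus A|$, arriving at the sum $\sum_{\omega}\frac{1}{\omega!}\binom{\nu(C)}{\xi(C)-\omega}(k/2-\delta_T(C)-\nu(C))_\omega$, which it identifies with $\frac{1}{\xi(C)!}(k/2-\delta_T(C))_{\xi(C)}$ via the Chu--Vandermonde identity $(x)_a=\sum_\omega\binom{a}{\omega}(x-b)_\omega(b)_{a-\omega}$. You instead group by $t=|A\cup B|$, obtain an expansion in the $\mathcal{B}_t$'s, and invoke the product formula $(r)_a(r)_b=\sum_i\binom{a}{i}\binom{b}{i}\,i!\,(r)_{a+b-i}$. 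Both identities are equivalent avatars of Vandermonde, and the computations are of comparable length; the main practical difference is that your version stays symmetric in $\nu$ and $\xi$ throughout, while the paper's asymmetric choice of fixing $A$ leads to a slightly shorter count.
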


\begin{proof}
In both case if one of the profiles is zero then the conclusion follows.
Then, in what follows assume that $\nu$ and $\xi$ are different from zero, and their support is contained in $\{C_1,\ldots,C_6\}$.
Consider $\nu$ and $\xi$ satisfying the conditions in $(a)$ and fix $A\in \mathcal{F}(T,\nu)$.
Then,
\begin{align*}
\pseudo_T(y_A\mathcal{B}_{T,\xi})
=&\sum_{B\in \mathcal{F}(T\cup A,\xi)}\pseudo(y_Ay_B)+\sum_{B\in \mathcal{F}(T,\xi)\setminus \mathcal{F}(T\cup A,\xi)}\pseudo(y_Ay_B),
\end{align*}
where the equality holds since $\mathcal{F}(T\cup A,\xi)\subseteq \mathcal{F}(T,\xi)$.
For every term $B\in \mathcal{F}(T,\xi)\setminus \mathcal{F}(T\cup A,\xi)$ we have that it is incident to at least one of the machines in $G_A$.
Since every machine in $G_A$ is connected to a machine in $\text{supp}(\nu)\subseteq \CC\setminus \text{supp}(\xi)$, it follows that $A\cup B$ is not a partial schedule since at least one machine is connected to different configurations, and in consequence its pseudoexpectation is zero.
Therefore, the second summation in the equality above is zero. 
Together with property (b) in Lemma~\ref{lem:conditioning-b} it implies that
\begin{align*}
\pseudo_T(y_A\mathcal{B}_{T,\xi})=\sum_{B\in \mathcal{F}(T\cup A,\xi)}\pseudo(y_Ay_B)=\pseudo_T(y_A)\cdot \pseudo_{T\cup A}(\mathcal{B}_{T\cup A,\xi}).
\end{align*}
Since $\text{supp}(\nu)\cap \text{supp}(\xi)=\emptyset$, we have that for every $C_j\in \text{supp}(\xi)$, $\delta_{T\cup A}(C_j)=\delta_T(C_j)$.
On the other hand, if $C_j\notin \text{supp}(\xi)$ then $(x)_{\xi(C_j)}=(x)_0=1$ for every real $x$.
Overall, and together with Lemma~\ref{lem:conditioning-d}, it holds that
\begin{align*}
\pseudo_{T\cup A}(\mathcal{B}_{T\cup A,\xi})&=\prod_{j=1}^6\frac{1}{\xi(C_j)!}(k/2-\delta_{T\cup A}(C_j))_{\xi(C_j)}\\
&=\prod_{j\in \text{supp}(\xi)}\frac{1}{\xi(C_j)!}(k/2-\delta_{T}(C_j))_{\xi(C_j)}=\pseudo_T(\mathcal{B}_{T,\xi}).
\end{align*}
Together with the linearity of $\pseudo_T$ we conclude $(a)$.
Consider now $\nu,\xi$ satisfying the conditions in $(b)$, and let $C\in \{C_1,\ldots,C_6\}$ the configuration that supports both profiles.
Without loss of generality suppose that $\nu(C)\ge \xi(C)$.
For $A\in \mathcal{F}(T,\nu)$ and $B\in \mathcal{F}(T,\xi)$, we have that $A\cup B$ is always a perfect matching since the profiles are supported in the same configuration.
If $B\subseteq A$, then the union has profile $\nu$.
Then, by Lemma~\ref{lem:conditioning} $(c)$ we have
\begin{align*}
\pseudo_T(y_A\mathcal{B}_{T,\xi})&=\sum_{B\in \mathcal{F}(T,\xi)}\pseudo_T(y_Ay_B)\\
&=\sum_{B\in \mathcal{F}(T,\xi):B\subseteq A}\pseudo_T(y_A)+\sum_{B\in \mathcal{F}(T,\xi):B\setminus A\ne \emptyset}\pseudo_T(y_Ay_{B\setminus A})\\
&=\pseudo_T(y_A)\left({\nu(C)\choose \xi(C)}+\sum_{B\in \mathcal{F}(T,\xi):B\setminus A\ne \emptyset}\pseudo_{T\cup A}(y_{B\setminus A})\right).
\end{align*}
If $B\setminus A\ne \emptyset$, the union profile can be parameterized in $|B\setminus A|=\omega$, and let $\alpha_{\omega}$ be the profile such that $\alpha_{\omega}(C)=\omega$ and zero otherwise.
Thus,
\begin{align*}
&\sum_{B\in \mathcal{F}(T,\xi):B\setminus A\ne \emptyset}\pseudo_{T\cup A}(y_{B\setminus A})\\
&=\sum_{\omega=1}^{\xi(C)}{\nu(C) \choose \xi(C)-\omega}{3k-|T|-\nu(C) \choose \omega}\frac{(k/2-\delta_T(C)-\nu(C))_{\omega}}{(3k-|T|-\nu(C))_{\omega}}\\
																		&=\sum_{\omega=1}^{\xi(C)}\frac{1}{\omega!}{\nu(C) \choose \xi(C)-\omega}(k/2-\delta_T(C)-\nu(C))_{\omega},
\end{align*}
and since $(k/2-\delta_T(C)-\nu(C))_{0}=1$, and running the summation over $A\in \mathcal{F}(T,\nu)$ we obtain over all that
\begin{equation}
\label{eq:casicasi}
\pseudo_T(\mathcal{B}_{T,\nu}\mathcal{B}_{T,\xi})=\pseudo_{T}(\mathcal{B}_{T,\nu})\cdot \sum_{\omega=0}^{\xi(C)}\frac{1}{\omega!}{\nu(C) \choose \xi(C)-\omega}(k/2-\delta_T(C)-\nu(C))_{\omega}.
\end{equation}
\begin{claim}
\label{claim:stirling}
Let $a$ and $b$ be two non-negative integers such that $a\le b$. 
Then, for every real $x$,
\[\sum_{\omega=0}^{a}\frac{1}{\omega!}{b \choose a-\omega}(x-b)_{\omega}=\frac{1}{a!}(x)_{a}.\]
\end{claim}
\noindent The claim applied in (\ref{eq:casicasi}) for $x=k/2-\delta_T(C)$, $a=\xi(C)$ and $b=\nu(C)$ yields the result, since
\[\pseudo_T(\mathcal{B}_{T,\nu}\mathcal{B}_{T,\xi})=\pseudo_{T}(\mathcal{B}_{T,\nu})\cdot \frac{1}{\xi(C)!}(k/2-\delta_T(C))_{\xi(C)}=\pseudo_{T}(\mathcal{B}_{T,\nu})\pseudo_{T}(\mathcal{B}_{T,\xi}).\]
The claim follows by the Chu-Vandermonde identity~\cite[p. 59-60]{askey75}, 
\begin{align*}
(x)_a&=\sum_{\omega=0}^{a}{a\choose \omega}(x-b)_{\omega}(b)_{a-\omega}=a!\sum_{\omega=0}^{a}(x-b)_{\omega}\frac{(b)_{a-\omega}}{(a-\omega)!}=a!\sum_{\omega=0}^{a}(x-b)_{\omega}{b\choose a-\omega}\qedhere
\end{align*}
 \end{proof}

\begin{proof}[Lemma~\ref{lem:key-lem}]
Given a profile configuration $\gamma$ and $C_j\in \{C_1,\ldots,C_6\}$, we denote by $\gamma_j$ the profile that is zero for every $C\ne C_j$ and $\gamma_j(C_j)=\gamma(C_j)$.
In the following, we prove that the following factorization
holds: 
\begin{equation}
\label{eq:decomposition}
\pseudo_T(\mathcal{B}_{T,\gamma}\mathcal{B}_{T,\mu})=\pseudo_T\left(\prod_{j=1}^{6}\mathcal{B}_{T,\gamma_j}\mathcal{B}_{T,\mu_j}\right),
\end{equation}
recalling that $\mathcal{B}_{T,\xi}=1$ if $\xi=0$.
Before checking that the decomposition above is correct, we see how to conclude the lemma from that.
Observe that by construction $\text{supp}(\gamma_j)\cap \text{supp}(\gamma_{\ell})=\emptyset$ if $j\ne \ell$, and therefore by Lemma~\ref{lem:weaker} $(a)$, we have
\begin{equation}
\pseudo_T\left(\prod_{j=1}^{6}\mathcal{B}_{T,\gamma_j}\mathcal{B}_{T,\mu_j}\right)=\prod_{j=1}^{6}\pseudo_T\left(\mathcal{B}_{T,\gamma_j}\mathcal{B}_{T,\mu_j}\right).
\end{equation}
Furthermore, since for every $j\in \{1,\ldots,6\}$ we have $\text{supp}(\gamma_j),\text{supp}(\mu_j)\subseteq \{C_j\}$, by Lemma~\ref{lem:weaker} $(b)$ we have 
\[\prod_{j=1}^{6}\pseudo_T\left(\mathcal{B}_{T,\gamma_j}\mathcal{B}_{T,\mu_j}\right)=\prod_{j=1}^{6}\pseudo_T(\mathcal{B}_{T,\gamma_j})\pseudo_T(\mathcal{B}_{T,\mu_j})=\prod_{j=1}^{6}\pseudo_T(\mathcal{B}_{T,\gamma_j})\cdot \prod_{j=1}^{6}\pseudo_T(\mathcal{B}_{T,\mu_j}).\]
By using Lemma~\ref{lem:weaker} $(a)$ 
the right hand side is equal to
\begin{align*}
														\pseudo_T\left(\prod_{j=1}^{6}\mathcal{B}_{T,\gamma_j}\right)\cdot \pseudo_T\left(\prod_{j=1}^{6}\mathcal{B}_{T,\mu_j}\right)=\pseudo_T(\mathcal{B}_{T,\gamma}\mathcal{B}_{T,\mu}),
\end{align*}
where in the last equality we used the decomposition in (\ref{eq:decomposition}) separately for $\gamma$ and $\mu$.
We check now that the factorization in (\ref{eq:decomposition}) is always valid.
Let $S$ be a partial schedule disjoint from $T$ and with profile $\mu$ and let $C_j\in \text{supp}(\gamma)$. 
It is enough to check that 
\begin{equation}
\pseudo_T(\mathcal{B}_{T,\gamma}y_S)=\pseudo_T(\mathcal{B}_{T,\gamma_j}\mathcal{B}_{T,\gamma-\gamma_j}y_S),
\end{equation}
since the factorization follows by the linearity of $\pseudo_T$ and by applying iteratively for every $C_j\in \{C_1,\ldots,C_6\}$ the above factorization.
We have that
\begin{align*}
\pseudo_T(\mathcal{B}_{T,\gamma}y_S)&=\pseudo_T\left(\sum_{A\in \mathcal{F}(T,\gamma)}y_Ay_S\right)\\
&=\pseudo_T\left(\sum_{B\in \mathcal{F}(T,\gamma_j)}y_B\sum_{D\in \mathcal{F}(T\cup B,\gamma-\gamma_j)}y_Dy_S\right).
\end{align*}
Fix $B\in \mathcal{F}(T,\gamma_j)$ and consider a set $D\in \mathcal{F}(T,\gamma-\gamma_j)\setminus \mathcal{F}(T\cup B,\gamma-\gamma_j)$.
In particular, $D$ is in profile $\gamma-\gamma_j$ but is incident to at least one machine, say $\ell$, that is also incident to $B$.
Since $B$ is in profile $\gamma_j$ and it has disjoint support from $\gamma-\gamma_j$, the above implies that machine $\ell$ is incident to different configurations, and therefore its pseudoexpectation value is equal to zero. 
That is the contribution to the pseudoexpectation value of the terms in $\mathcal{F}(T,\gamma-\gamma_j)\setminus \mathcal{F}(T\cup B,\gamma-\gamma_j)$ is is zero.
Furthermore, since $\mathcal{F}(T,\gamma-\gamma_j)\supseteq \mathcal{F}(T\cup B,\gamma-\gamma_j)$, we have that for every $B\in \mathcal{F}(T,\gamma_j)$, 
\begin{align*}
&\pseudo_T\left(y_B\sum_{D\in \mathcal{F}(T\cup B,\gamma-\gamma_j)}y_Dy_S\right)\\
&=\pseudo_T\left(y_B\left(\sum_{D\in \mathcal{F}(T\cup B,\gamma-\gamma_j)}y_D+\sum_{D\in \mathcal{F}(T,\gamma-\gamma_j)\setminus \mathcal{F}(T\cup B,\gamma-\gamma_j)}y_D\right)y_S\right)\\
&=\pseudo_T\left(y_B\sum_{D\in \mathcal{F}(T,\gamma-\gamma_j)}y_Dy_S\right)=\pseudo_T(y_B\mathcal{B}_{T,\gamma-\gamma_j}y_S).
\end{align*}
We conclude by summing over $B\in \mathcal{F}(T,\gamma_j)$, $S\in \mathcal{F}(T,\mu)$ and using the linearity of $\pseudo_T$.
 \end{proof}

\noindent{\textbf{Remark.}} It is worth noticing that the lower bound of Theorem~\ref{thm:negative} translates to the weaker assignment linear program (see~\eqref{eq:assign_jobs}-\eqref{eq:assign_noneg} in Section~\ref{sec:upper-bound}, which define the linear program $\ass{}(T) $). More precisely, there exists an instance such that, after applying $\Omega(n)$ rounds of the \sos{} hierarchy to the assignment linear program, the semidefinite relaxation has an integrality gap of at least 1.0009. This follows by Theorem~\ref{thm:negative} and a general result by Au and Tun\c{c}el~\cite{au_elementary_2018}.
More details can be found in Appendix~\ref{sec:appendixB}.

\section{Upper Bound: Breaking Symmetries to Approximate the Makespan}
\label{sec:upper-bound}
\label{sec:sym-breaking} In the previous section we showed that the configuration linear program has an inherent difficulty for the \sos{} method with low (constant) degree to yield a $(1+\varepsilon)$ integrality gap (and hence also the weaker assignment linear program below). It is natural to ask whether there is a way to avoid this lower bound. As suggested by our proof in Section~\ref{sec:lb} and several other lower bounds in the literature~\cite{laurent2003lower,grigoriev2001,potechin17,KMMMVW18,RSS18}, symmetries seem to play an role in the quality of the relaxations obtained by the \sos{} and SA hierarchies. A natural question is whether \emph{breaking} the symmetries of a problem or instance might help avoiding the lower bounds. In what follows we show that this is the case for the makespan scheduling problem. We leave as an interesting open problem whether this is the case for other relevant problems.\\

\noindent{\it Symmetry Breaking.} Breaking symmetries is a common technique to avoid algorithmic problems of symmetric instances of non-convex programs, in particular integer programs~\cite{margot_symmetry_2010}. Recall that given an optimization problem (P): $\min\{f(x): x\in X\}$ for some set $X\subseteq \mathbb{R}^n$ and a group $G$ acting on $\mathbb{R}^n$ by an action $(g,x)\mapsto gx$, we say that (P) is $G$-invariant if $f(x)=f(gx)$ and $gx\in X$ for all $x\in X$ and $g\in G$. Notice that if $x^*$ is an optimal solution to (P), then $gx^*$ is also optimal for every $g\in G$ in this case. Hence, if we add to the formulation any inequality $a^{\top}x\le b$ that keeps at least one representative of any given orbit $\{gx:g\in G\}$ for any $x\in \mathbb{R}^n$, that is, for all $x\in \mathbb{R}^n$ there exists $g\in G$ such that $a^{\top}(gx)\le b$, then we guarantee that (P'): $\min\{f(x): x\in X, a^{\top}x\le b\}$ contains at least one optimal solution. If such inequality is not valid for (P), we say that it is a \emph{symmetry breaking inequality}.\footnote{It is worth noticing that such an inequality might not break all symmetries, that is, we do not require that there is a unique representative of each orbit.}\\

\noindent{\it Application to Scheduling.} We show that we can obtain almost optimal relaxations
in terms of the integrality gap if we add a well chosen set of symmetry breaking inequalities to a ground formulation and then apply the \SA{} hierarchy (which is even weaker than \sos{}).
Furthermore, the ground formulation we use is the \emph{assignment linear program}.
In this LP there are variables $x_{ij}$ indicating whether job $j$
is assigned to machine~$i$. 
For an estimate or guess $T$ for the optimal makespan we denote by
$\ass(T)$ the formulation given by 
\begin{alignat}{2}
\label{eq:assign_jobs}
\sum_{i\in [m]}x_{ij} & =1 &  & \quad\text{for all}\;j\in J,\\
\sum_{j\in J}x_{ij}p_{j} & \leq T &  & \quad\text{for all}\;i\in [m],\\
\label{eq:assign_noneg}
x_{ij} & \ge0 &  & \quad\text{for all}\;i\in [m],\text{ for all }j\in J.
\end{alignat}
If we require that $T\ge\max_{j\in J}p_{j}$ then
the assignment linear program has an integrality gap of 2~\cite{williamson2011design}.\\

\noindent{\it Roadmap.} In Subsection~\ref{subsec:symmetry-break-ineq} we define the symmetry breaking inequalities that we will add to the assignment linear program. In Subsection~\ref{sec:Integrality-gap-SA} we will show how to round a feasible solution of the \SA{} hierarchy with $2^{\tilde{O}(1/\varepsilon^2)} $ rounds over this program to obtain an integral solution with makespan $(1+\varepsilon)T$. In Subsection~\ref{sec:faster-ptas} we will show that breaking some new \emph{approximate symmetries}, with just $O(1/\varepsilon^5)$ rounds of the \SA{} hierarchy suffices to obtain a $(1+\varepsilon)$-approximate solution, yielding an exponential decrease in the number of necessary rounds. By approximate symmetries we mean that first we round similar processing times to the same value, and then add symmetry breaking inequalities for the new induced symmetries.

\subsection{Symmetry breaking inequalities}
\label{subsec:symmetry-break-ineq}
In order to define our symmetry breaking inequalities we consider a partitioning obtained by grouping long jobs with a \textit{\emph{similar}}\emph{
}processing time. Let $\varepsilon\in (0,1)$ such that $1/\varepsilon\in \mathbb{Z}$. We say that a job $j\in J$ is \emph{long} if
$p_{j}\ge\varepsilon\cdot T$, and it is \emph{short} otherwise. The
subset of long jobs is denoted by $J_{\largo}$ and the short jobs
are $J_{\corto}=J\setminus J_{\largo}$. 
For every $q\in\{1,\ldots,(1-\varepsilon)/\varepsilon^{2}\}$
we define 
\[
J_{q}=\left\{ j\in J_{\largo}:\left(\frac{1}{\varepsilon}+q\right)\varepsilon^{2}T>p_{j}\ge\left(\frac{1}{\varepsilon}+q-1\right)\varepsilon^{2}T\right\} .
\]
Let $s:=(1-\varepsilon)/\varepsilon^{2}$ denote the number of groups
of long jobs. The reader may imagine that for each group $J_{q}$
with $q\in[s]$ we round the size of each job $j\in J_{q}$ to $\left(\frac{1}{\varepsilon}+q\right)\varepsilon^{2}T$.
This increases the overall makespan at most by a factor $1+\varepsilon$.
Also note that if we can find a schedule for the long jobs with makespan
at most $(1+\varepsilon)T$, then there is also a schedule for \emph{all}
jobs with makespan at most $(1+\varepsilon)T$ since we can add the short
jobs in a greedy manner (see e.g.,\cite{williamson2011design}; 
we assume that $\ass(T)$ is feasible and then $\sum_{j\in J}p_j \le m\cdot T$ holds).\\

\noindent{\it Configurations.} Based on the partition of the long jobs $\{J_{q}\}_{q\in[s]}$ we
define configurations of the long jobs. We say that a \textit{configuration}
$C$ is a multiset of elements in $\{1,\ldots,s\}$. Let $\CC$ denote
the set of all configurations. Similarly as in Section~\ref{sec:lb},
for a configuration $C$ we define $m(q,C)$ to be the number of times
that $q$ appears (repeated) in $C$. Intuitively, this means that
configuration $C$ contains $m(q,C)$ slots for jobs in $J_{q}$.
In what follows, we introduce a set of constraints that guarantees
\textit{\emph{that }}\textit{every} integer solution to $\ass(T)$
obeys a specific order on the configurations over the machines, i.e.,
there is a total order of the configurations $C$ such that for two
machines $i,i'\in [m]$ with $i<i'$ the configuration on $i$ is smaller
according to this total ordering than the configuration on $i'$.
This is a way of \textit{breaking} the symmetries due to permuting
machines.
Formally, we say that a configuration $C$ is \textit{lexicographically}
larger than a configuration $C'$ if there exists $q\in[s]$ such
that $m(\ell,C)=m(\ell,C')$ for all $\ell<q$ and $m(q,C)>m(q,C')$.
We denote this by $C>_{\lex}C'$. In particular, the relation $>_{\lex}$
defines a total order over $\CC$. \\

\noindent{\it Integer linear program.} Let $B:=1+2s\max_{q\in[s]}|J_{q}|=O(|J|^{2})$. We define an integer
linear program $\ass(B,T)$ below in which we enforce that the machines
are ordered according to the relation $>_{\lex}$.
\begin{alignat}{2}
\sum_{i\in [m]}x_{ij} & =1 &  & \quad\text{for all}\;j\in J,\\
\sum_{j\in J}x_{ij}p_{j} & \leq T &  & \quad\text{for all}\;i\in [m],\\
\sum_{q=1}^{s}B^{s-q}\sum_{j\in J_{q}}\left(x_{ij}-x_{(i+1)j}\right) & \ge0 &  & \quad\text{for all}\;i\in[m-1],\\
x_{ij} & \ge0 &  & \quad\text{for all}\;i\in [m],\text{ for all }j\in J.
\end{alignat}
To avoid confusion we sometimes use the notation $\ass(J,B,T)$ to
emphasize that we are considering the program for the job set $J$.
Given a subset of jobs $K\subseteq J$ such that $\sum_{j\in K}p_{j}\le T$,
we denote by $\text{conf}(K)$ the configuration such that for every
$q\in\{1,\ldots,s\}$, $m(q,\config(K))=|K\cap J_{q}|$. We then say
that $\config(K)$ is the \textit{configuration induced by $K$}.
In the following we show that every integer solution to the program
$\ass(B,T)$ obeys the lexicographic order $>_{\lex}$ on the configurations
over the machines. More specifically, given a feasible integer solution
$x\in\ass(T)$ and a machine $i\in [m]$, let $\conf_{i}(x)\in\CC$
be the configuration defined by the job assignment of $x$ to machine
$i$, that is, for every $q\in\{1,\ldots,s\}$, $m(q,\conf_{i}(x))=\sum_{j\in J_{q}}x_{ij}$.

\begin{theorem}\label{lem:lexico-mach} In every integer solution
$x\in\ass(B,T)$, for every machine $i\in [m-1]$ we have
that $\conf_{i}(x)\ge_{\lex}\conf_{i+1}(x)$.
\end{theorem}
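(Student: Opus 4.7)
The plan is to argue by contradiction using the fact that the symmetry-breaking constraint behaves like comparing two numbers in base $B$, where $B$ is chosen large enough so that the most significant differing ``digit'' dominates. First, I would observe that for an integer solution $x \in \ass(B,T)$ each variable $x_{ij}$ is in $\{0,1\}$, so $\sum_{j \in J_q} x_{ij} = m(q, \conf_i(x))$ by definition of $\conf_i(x)$. Substituting this into the $i$-th symmetry-breaking inequality, it reads
\begin{equation*}
\sum_{q=1}^{s} B^{s-q}\bigl(m(q,\conf_i(x)) - m(q,\conf_{i+1}(x))\bigr) \geq 0.
\end{equation*}

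Now suppose toward a contradiction that $\conf_i(x) <_{\lex} \conf_{i+1}(x)$, and let $q^\ast \in [s]$ be the smallest index at which the two configurations differ. By definition of $<_{\lex}$, all terms with $\ell < q^\ast$ contribute $0$ to the sum. The term at $\ell = q^\ast$ contributes at most $-B^{s-q^\ast}$ since $m(q^\ast,\conf_i(x)) - m(q^\ast,\conf_{i+1}(x))$ is a negative integer. For $\ell > q^\ast$, each difference $m(\ell,\conf_i(x)) - m(\ell,\conf_{i+1}(x))$ has absolute value at most $|J_\ell|$ (no machine can hold more jobs of $J_\ell$ than exist), so the positive tail is bounded by $\sum_{\ell=q^\ast+1}^{s} B^{s-\ell} |J_\ell|$.

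Using $\max_q |J_q| \ge |J_\ell|$ together with the geometric sum formula, this positive tail is at most
\begin{equation*}
\max_q |J_q|\cdot \frac{B^{s-q^\ast}-1}{B-1}.
\end{equation*}
The definition $B = 1 + 2s\max_q |J_q|$ yields $B-1 = 2s\max_q|J_q|$, so the tail is at most $(B^{s-q^\ast}-1)/(2s) < B^{s-q^\ast}/2$. Adding the dominant negative contribution $-B^{s-q^\ast}$, the left-hand side of the constraint is strictly less than $-B^{s-q^\ast}/2 < 0$, contradicting feasibility. Hence $\conf_i(x) \ge_{\lex} \conf_{i+1}(x)$, as claimed.

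The only subtle point is the choice of $B$: it must be large enough that the single ``digit'' at position $q^\ast$ outweighs any possible positive mass in the less significant digits even in the worst case that every later group fully favors machine $i$. The bound $B = 1 + 2s \max_q |J_q|$ gives precisely the factor-of-two safety margin needed in the geometric-series estimate, and this is what I would identify as the key (but entirely routine) step.
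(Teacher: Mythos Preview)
Your proof is correct and follows essentially the same approach as the paper: both arguments rewrite the symmetry-breaking constraint as a comparison of base-$B$ expansions and show that the leading differing ``digit'' dominates the tail. The paper packages the monotonicity of $C\mapsto\sum_{q}B^{s-q}m(q,C)$ into a separate lemma (Lemma~\ref{lem:lex-increase}) and bounds each tail term by $2|J_q|$, whereas you argue inline with the slightly sharper bound $|J_\ell|$ and a cleaner geometric-series estimate; these are cosmetic differences only.
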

To prove Theorem~\ref{lem:lexico-mach}, we define $\mathcal{L}_{B}:\CC\to\RR$
to be the function such that for every configuration $C\in\CC$, $\mathcal{L}_{B}(C)=\sum_{q=1}^{s}B^{s-q}m(q,C).$
The important point is that $\mathcal{L}_{B}$ is strictly increasing.

\begin{lemma}\label{lem:lex-increase} For two configurations $C,C'\in\CC$
with $C<_{\lex}C'$ we have that $\mathcal{L}_{B}(C)<\mathcal{L}_{B}(C')$.
\end{lemma}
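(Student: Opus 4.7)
The plan is to view $\mathcal{L}_B(C)$ as a base-$B$ expansion whose digit at position $s-q$ is $m(q,C)$. Since $B$ is chosen strictly larger than any such digit can be, this encoding preserves the lexicographic order, and the proof reduces to a place-value dominance argument.

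Concretely, I would expand
\[
\mathcal{L}_B(C') - \mathcal{L}_B(C) = \sum_{\ell=1}^{s} B^{s-\ell}\bigl(m(\ell,C') - m(\ell,C)\bigr),
\]
and use the definition of $<_{\lex}$ to isolate the smallest index $q$ at which $C$ and $C'$ differ. The terms with $\ell < q$ cancel, and the $\ell = q$ term contributes at least $B^{s-q}$, since $m(q,C') - m(q,C) \ge 1$. The tail $\sum_{\ell > q}$ can in principle be negative, so the main remaining task is to bound it in absolute value.

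For that tail bound I would use that every configuration of interest arises from an actual assignment of long jobs, so $m(\ell,C), m(\ell,C') \le |J_\ell| \le M$, where $M := \max_{q'\in[s]} |J_{q'}|$. A geometric sum then gives
\[
\left|\sum_{\ell=q+1}^{s} B^{s-\ell}\bigl(m(\ell,C') - m(\ell,C)\bigr)\right| \le M \cdot \frac{B^{s-q}-1}{B-1} < \frac{B^{s-q}}{2s},
\]
where the final inequality uses $B - 1 = 2sM$. Combining this with the $\ell = q$ lower bound yields
\[
\mathcal{L}_B(C') - \mathcal{L}_B(C) \;>\; B^{s-q}\Bigl(1 - \tfrac{1}{2s}\Bigr) \;>\; 0,
\]
which is the strict inequality the lemma asserts.

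I do not foresee a real obstacle here: the whole argument rests on the design choice $B = 1 + 2sM$ being large enough to make the coefficient at position $s-q$ outweigh all contributions from less significant positions combined. The only point that deserves some care is the implicit bound $m(\ell,C) \le M$, which is the reason $\CC$ needs to be understood as configurations arising from assignments of long jobs (as used throughout the formulation $\ass(B,T)$).
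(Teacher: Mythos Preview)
Your proof is correct and follows essentially the same approach as the paper: both identify the first index $q$ where the configurations differ, note that the terms with $\ell<q$ cancel, and show that the leading contribution $B^{s-q}$ dominates the tail because $B$ was chosen large enough relative to the multiplicities. The only cosmetic differences are that you use the geometric-series formula for the tail while the paper uses the cruder bound (number of terms times maximum term), and that you bound $|m(\ell,C')-m(\ell,C)|\le M$ where the paper uses $|m(\ell,C)|+|m(\ell,C')|\le 2|J_\ell|$; both rely on the same implicit assumption $m(\ell,C)\le |J_\ell|$, which you correctly flag.
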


\begin{proof}Consider two configurations $C,C'\in\CC$ such that
$C<_{\lex}C'$. Let $\tilde{q}$ be the smallest integer in $\{1,\ldots,s\}$
such that the multiplicities of the configurations are different,
that is, $m(\ell,C)=m(\ell,C')$ for every $\ell<\tilde{q}$. Hence
it holds that $m({\tilde{q}},C)<m({\tilde{q}},C')$. In particular,
every term up to $\max\{0,\tilde{q}-1\}$ in the summation defining
$\mathcal{L}_{B}(C)-\mathcal{L}_{B}(C')$ is equal to zero. By upper
bounding the summation from $\min\{s,\tilde{q}+1\}$ we get that
$\sum_{q=\min\{s,\tilde{q}+1\}}^{s}B^{s-q}\left(m(q,C)-m(q,C')\right)$
is at most 
\begin{align*}
 & \sum_{q=\min\{s,\tilde{q}+1\}}^{s}B^{s-q}\left(|m(q,C)|+|m(q,C')|\right)\le\sum_{q=\min\{s,\tilde{q}+1\}}^{s}B^{s-q}\cdot2|J_{q}|<B^{*}\cdot B^{s-\tilde{q}-1}<B^{s-\tilde{q}},
\end{align*}
and since $m({\tilde{q}},C')-m({\tilde{q}},C)\ge1$ it follows that
\begin{align*}
\sum_{q=\tilde{q}}^{s}B^{s-q}\left(m(q,C)-m(q,C')\right) & <B^{s-\tilde{q}}\left(m({\tilde{q}},C)-m({\tilde{q}},C')\right)+B^{s-\tilde{q}}\\
 & <B^{s-\tilde{q}}\left(m({\tilde{q}},C)-m({\tilde{q}},C')+1\right)<0
\end{align*}
and hence $\mathcal{L}_{B}(C)<\mathcal{L}_{B}(C')$.  \end{proof}

\begin{proof}[Theorem~\ref{lem:lexico-mach}] Fix a machine
$i\in [m-1]$. Since $x$ is an integral solution in $\ass(B,T)$,
we have that $\conf_{i}(x),\conf_{i+1}(x)\in\CC$. The symmetry breaking
constraints implies that 
\begin{align*}
0 & \le\sum_{q=1}^{s}B^{s-q}\left(m(q,\conf_{i}(x))-m(q,\conf_{i+1}(x))\right)=\mathcal{L}_{B}(\conf_{i}(x))-\mathcal{L}_{B}(\conf_{i+1}(x)).
\end{align*}
Applying Lemma \ref{lem:lex-increase} it holds that $\mathcal{L}_{B}$
is strictly increasing and therefore $\conf_{i}(x)\ge_{\lex}\conf_{i+1}(x)$.
 \end{proof}


In general, $\ass(B,T)$ is \textit{not} $S_{m}$-invariant, that is,
given a solution to $\ass(B,T)$, if we permute the machines then
we do not necessarily obtain another solution for it. However, it
is a valid formulation, in the sense that if there exists a schedule
with makespan at most $T$, then $\ass(B,T)$ has a feasible integral
solution (more precisely, we retain a representative solution for
each orbit). To show this, we can take an arbitrary schedule of makespan
$T$ and reorder the machines lexicographically according to their
configurations.

\begin{lemma}\label{lem:valid-form} If there exists an integral
feasible solution to $\ass(T)$ then there exists also an integral
feasible solution to $\ass(B,T)$. \end{lemma}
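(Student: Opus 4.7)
The plan is to produce a feasible integral solution of $\ass(B,T)$ from any given integral feasible solution of $\ass(T)$ by simply permuting the machines so that their induced configurations are sorted in lexicographically decreasing order. Since permuting the machines does not affect the job-assignment constraints, the per-machine load constraints, nor nonnegativity, the only nontrivial point is checking that the resulting solution satisfies the symmetry breaking inequalities.

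Concretely, let $x^{*}$ be an integral feasible solution of $\ass(T)$, and for every $i\in [m]$ set $C_{i}^{*}:=\conf_{i}(x^{*})\in \CC$. Because $>_{\lex}$ is a total order on $\CC$, there exists a permutation $\sigma:[m]\to [m]$ such that
\[
C_{\sigma(1)}^{*}\ge_{\lex}C_{\sigma(2)}^{*}\ge_{\lex}\cdots \ge_{\lex}C_{\sigma(m)}^{*}.
\]
Define $\tilde{x}_{ij}:=x^{*}_{\sigma(i)j}$ for all $i\in[m],j\in J$. Then $\tilde{x}$ is integral, each job is still assigned to exactly one machine, and each machine still has total load at most $T$, so $\tilde{x}$ satisfies the three first groups of constraints of $\ass(B,T)$.

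It remains to check the symmetry breaking constraints. By construction $\conf_{i}(\tilde{x})=C_{\sigma(i)}^{*}\ge_{\lex}C_{\sigma(i+1)}^{*}=\conf_{i+1}(\tilde{x})$ for every $i\in[m-1]$. Applying Lemma~\ref{lem:lex-increase} (with equality whenever $\conf_{i}(\tilde{x})=\conf_{i+1}(\tilde{x})$, which is immediate from the definition of $\mathcal{L}_{B}$), we get $\mathcal{L}_{B}(\conf_{i}(\tilde{x}))\ge \mathcal{L}_{B}(\conf_{i+1}(\tilde{x}))$, which unrolls to
\[
\sum_{q=1}^{s}B^{s-q}\sum_{j\in J_{q}}\left(\tilde{x}_{ij}-\tilde{x}_{(i+1)j}\right)\ge 0,
\]
as required. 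Thus $\tilde{x}$ is integral and feasible for $\ass(B,T)$.

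There is no real obstacle here: the only subtlety is the potential equality case $\conf_{i}(\tilde{x})=\conf_{i+1}(\tilde{x})$, not strictly covered by the statement of Lemma~\ref{lem:lex-increase}, but in that case both sides of the symmetry breaking inequality coincide trivially, so the inequality still holds.
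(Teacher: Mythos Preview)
Your proof is correct and follows essentially the same approach as the paper: permute the machines so that their induced configurations are lexicographically nonincreasing, then invoke Lemma~\ref{lem:lex-increase} to verify the symmetry breaking constraints. Your explicit handling of the equality case $\conf_i(\tilde{x})=\conf_{i+1}(\tilde{x})$ is a nice touch that the paper leaves implicit.
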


\begin{proof} 
Since there exists a schedule of makespan at most $T$,
there exists an integral solution $x\in\ass(T)$. Since the lexicographic
relation defines a total order over $\CC$, there exists a permutation
$\sigma\in S_{m}$ such that for every $i\in [m-1]$, $\conf_{\sigma(i)}(x)\ge_{\lex}\conf_{\sigma(i+1)}(x)$.
Consider the integral solution $\tilde{x}$ obtained by permuting
the solution according to $\sigma$, that is, $\tilde{x}=\sigma x$.
Then, for every $i\in [m-1]$ it follows that $\sum_{q=1}^{s}B^{s-q}\sum_{j\in J_{q}}\left(\tilde{x}_{ij}-\tilde{x}_{(i+1)j}\right)$
is equal to 
\begin{align*}
 & \sum_{q=1}^{s}B^{s-q}\left(m(q,\conf_{\sigma(i)}(x))-m(q,\conf_{\sigma(i+1)}(x))\right) =\mathcal{L}_{B}(\conf_{\sigma(i)}(x))-\mathcal{L}_{B}(\conf_{\sigma(i+1)}(x))\ge0.
\end{align*}
The last step holds by Lemma~\ref{lem:lex-increase}. We conclude
that $\tilde{x}\in\ass(B,T)$.  \end{proof} 


\subsection{LP based approximation scheme}

\label{sec:Integrality-gap-SA}

In this section we prove Theorem~\ref{thm:positive}, i.e., we show
that if we apply $2^{\tilde{O}(1/\varepsilon^{2})}$ rounds of the
Sherali-Adams hierarchy to $\ass(B,T)$ then the integrality gap of
the resulting LP is at most $1+\epsilon$,
i.e., if it has a feasible solution then there exists an integral solution with makespan at most $(1+\varepsilon)T$. 
Recall the definition of
a $\SA$ pseudoexpectation at the end of Section~\ref{sec:prelim};
in particular, recall that if a degree-$r$ $\SA$ pseudoexpectation
exists for a linear program, then it has a solution after applying
$r$ rounds of $\SA$ to it. The main result of this section is the
following theorem. 

\begin{theorem} \label{thm:int-gap-SA} Consider a value $T>0$ and
suppose there exists a degree-$\nrounds$ $\SA$ pseudoexpectation
for $\assign(B,T)$. Then, there exists an integral solution in $\assign(B,(1+\varepsilon)T)$
and it can be computed in polynomial time. 
\end{theorem}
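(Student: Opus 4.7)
The strategy combines the standard long/short job dichotomy with a rounding of long-job sizes that makes jobs within each group $J_{q}$ interchangeable, after which the lex-ordering imposed by the symmetry-breaking inequalities (Theorem~\ref{lem:lexico-mach}) is used to reduce the rounding problem to committing to a single configuration per block of machines. The $\SA$ hierarchy enters through an iterative conditioning procedure whose total cost is budgeted as roughly the number of distinct configurations times the maximum size of a configuration.

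First I would round every long job $j\in J_{q}$ up to size $\tilde p_{j}:=(1/\varepsilon+q)\varepsilon^{2}T$, which inflates the load of any integral solution by at most a factor $1+\varepsilon$ and makes the jobs in each $J_{q}$ identical. Each machine's long-job load is then described by the profile $(n_{iq})_{q\in[s]}$ with $n_{iq}=\sum_{j\in J_{q}}x_{ij}$, and a configuration is a multiset of $[s]$ of load at most $T$. Because every long job has size at least $\varepsilon T$, each configuration contains at most $1/\varepsilon$ elements, so the number of configurations is bounded by $K\le\binom{s+1/\varepsilon}{1/\varepsilon}=2^{\tilde O(1/\varepsilon^{2})}$. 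By Theorem~\ref{lem:lexico-mach}, any integral feasible solution of $\ass(B,T)$ has its machine-configurations non-increasing under $>_{\lex}$, so the machine set partitions into at most $K$ contiguous blocks, each carrying a single configuration.

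The rounding procedure processes these blocks from left to right. At the start of phase $t$ I would maintain a conditional $\SA$ pseudoexpectation consistent with all prior block commitments; let $i_{t}$ be the first still-unassigned machine. I would compute the pseudo-marginals of the events ``machine $i_{t}$ has configuration $C$'' for each of the $\le K$ configurations that are lex-compatible with the previously chosen ones and feasible against the residual aggregate demands. By property~\ref{prop:SAnormal} and the equality constraints~\eqref{eq:assign_jobs}, these marginals form a probability distribution, so some $C_{t}$ has pseudo-probability at least $1/K>0$; I then condition on $\|C_{t}\|\le 1/\varepsilon$ specific assignment variables $x_{i_{t}j}=1$ realising $C_{t}$. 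The block length $\ell_{t}$ is then determined combinatorially: all subsequent machines must carry configurations $\le_{\lex} C_{t}$, and $\ell_{t}$ is the largest integer such that the residual demands of each type $q$ remain nonnegative after this block. After at most $K$ phases every machine is assigned, and the total number of conditioned variables is $O(K/\varepsilon)\le \nrounds$, so the SA pseudoexpectation remains valid throughout. Short jobs are then added greedily, losing at most one further factor $1+\varepsilon$ as in~\cite{williamson2011design}; the resulting integral schedule lies in $\ass(B,(1+\varepsilon)T)$.

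The most delicate step will be justifying that each phase really admits a conditioning event of positive pseudo-probability together with a well-defined integer block length. Positivity of the configuration marginals should follow by checking that the SA axioms~\ref{prop:SAnormal}--\ref{prop:SAA} applied to the assignment and capacity constraints force $\sum_{C}\pseudo(\mathbf{1}(i_{t}\text{ has config }C))=1$; integrality of $\ell_{t}$ follows from the fact that the SA pseudoexpectation preserves the equalities $\sum_{i}x_{ij}=1$, so the aggregate counts of type-$q$ jobs across phases must eventually reach exactly $|J_{q}|$. Combining these ingredients with the lex-ordering should yield Theorem~\ref{thm:int-gap-SA}.
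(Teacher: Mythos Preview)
Your high-level plan --- round long jobs, count configurations, process lex-ordered blocks left to right --- is natural, but the proposal has a genuine gap at the heart of the conditioning procedure.

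The step ``$\ell_t$ is determined combinatorially as the largest integer such that residual demands remain nonnegative'' does not follow from anything the $\SA$ pseudoexpectation gives you. After conditioning machine $i_t$ to carry configuration $C_t$ via a specific set $A$ with $|A|\le 1/\varepsilon$, the symmetry-breaking constraint between $i_t$ and $i_t+1$ is a single linear inequality on the \emph{fractional} profile $(\sum_{j\in J_q}x_{(i_t+1)j})_q$; it does not force machine $i_t+1$ to have configuration $C_t$, nor even a configuration $\le_{\lex}C_t$ in any integral sense. Since you only condition at the left endpoint of each block, the machines $i_t+1,\dots,i_t+\ell_t-1$ still carry arbitrary fractional assignments in the pseudoexpectation, and nothing ties your choice of $\ell_t$ to what the pseudoexpectation is actually doing. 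In particular, when you advance to phase $t+1$, the conditioned pseudoexpectation at machine $i_{t+1}=i_t+\ell_t$ may still place positive mass on $C_t$ or on lex-larger configurations, and your residual-demand accounting need not be consistent with it. A second, related issue is that the indicator ``machine $i_t$ has configuration $C$'' is a polynomial of degree $|\Jl|$ in the $x$-variables (you must also say that no \emph{other} job of each $J_q$ is assigned to $i_t$), so you cannot evaluate these marginals within the available $\SA$ degree; the ``some $C$ has mass $\ge 1/K$'' argument is therefore not valid as stated, and conditioning on $A$ does not by itself guarantee that $i_t$ has \emph{exactly} profile $C_t$.

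The paper avoids both problems by working one coordinate $q$ at a time and, crucially, by conditioning at \emph{both} ends of each interval. Within a set of consecutive machines that already agree on their (integral) $J_1,\dots,J_q$ counts, it locates the rightmost machine $i_0$ that can hold the maximum number $1/\varepsilon$ of $J_{q+1}$-jobs with positive pseudo-probability, conditions there, and then shows via the lex constraint that the leftmost machine $i_L$ must also support $1/\varepsilon$ such jobs; conditioning there too sandwiches every machine in $\{i_L,\dots,i_0\}$ to have \emph{exactly} $1/\varepsilon$ jobs from $J_{q+1}$, and this count is then invariant under any further conditioning (this is the paper's notion of $(q{+}1)$-stability). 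Iterating over the possible counts $1/\varepsilon,1/\varepsilon-1,\dots,0$ and then over $q=1,\dots,s$ yields the $(1/\varepsilon)^{2/\varepsilon^{2}}$ budget. This two-ended conditioning is precisely the missing idea in your sketch: it is what converts the single fractional lex inequality into integrality of an entire interval of machines, and without it (or an equivalent mechanism) the block-by-block argument does not go through.
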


In what follows we might omit $\SA$ when referring to pseudoexpectations
since the context is clear. %
Given a degree-$r$ pseudoexpectation $\pseudo$ and a subset $A\subseteq [m]\times J$
with $\pseudo(x_{A})\neq0$, we define the $A$-\textit{conditioning}
to be the linear operator over $\RR[x]/\I_{E}$ defined by 
$\pseudo_{A}(x_{I})=\pseudo(x_{I}x_{A})/\pseudo(x_{A})$,
for every $I\subseteq [m]\times J$. We also say that \emph{we condition
on $A$.} The following lemma summarises some of the relevant properties
of the conditionings. We refer to~\cite{Lau03} for a proof of it
as well as a detailed exposition of the $\SA$ hierarchy. 
\begin{lemma} \label{lem:SA-properties} Let $\pseudo$ be a degree-$r$
pseudoexpectation and let $\pseudo_{A}$ be the conditioning for some
$A\subseteq [m]\times J$ of cardinality at most $r$. 
Then 
\begin{enumerate}
\item[$(a)$] $\pseudo_{A}(x_{A})=1$ and $\pseudo_{A}(x_{ij})=1$ for every $(i,j)\in A$. 
\item[$(b)$] $\pseudo_{A}$ is a degree-$(r-|A|)$ pseudoexpectation. 
 
\item[$(c)$] For every $B\subseteq [m]\times J$ such that $\pseudo(x_{B})\in\{0,1\}$
we have $\pseudo(x_{B})=\pseudo_{A}(x_{B})$. 
\end{enumerate}
\end{lemma}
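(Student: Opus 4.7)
My plan is to verify each of (a)--(c) directly from the definition $\pseudo_A(x_I) = \pseudo(x_I x_A)/\pseudo(x_A)$, using two ingredients throughout: the hypercube identity $x_e^2 \equiv x_e \pmod{\I_E}$, and the ``partition of unity'' $1 \equiv \prod_{e \in F}\bigl(x_e + (1-x_e)\bigr) \pmod{\I_E}$. The latter lets me rewrite any square-free monomial as a sum of polynomials $\varphi_{S,R}$, each of which has nonnegative pseudoexpectation by~\ref{prop:SA1}, so a vanishing sum forces each term to vanish individually. Throughout I assume $\pseudo(x_A) > 0$ (which is implicit in the conditioning being well-defined, and which follows from $\pseudo(x_A) = \pseudo(\varphi_{A,\emptyset}) \geq 0$ together with the nonvanishing hypothesis).

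For part (a), a direct computation suffices: $x_A \cdot x_A \equiv x_A$ yields $\pseudo_A(x_A) = \pseudo(x_A)/\pseudo(x_A) = 1$, and for $(i,j) \in A$ we have $x_{ij} \cdot x_A \equiv x_A$, giving $\pseudo_A(x_{ij}) = 1$.

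For part (b), I would verify \ref{prop:SAnormal}--\ref{prop:SAA} for $\pseudo_A$ at degree $r - |A|$. Normalization is immediate from $\pseudo_A(1) = \pseudo(x_A)/\pseudo(x_A) = 1$. The key identity for the remaining properties is $\varphi_{S,R} \cdot x_A \equiv \varphi_{S \cup A, R}$ whenever $R \cap A = \emptyset$, and $\equiv 0$ otherwise (since some $x_e(1-x_e) \equiv 0$ then appears). When $(S,R) \in \mathcal{E}_{r-|A|}$, the pair $(S \cup A, R)$ is disjoint, and has square-free degree $|S \cup A| + |R| \leq |S| + |R| + |A| \leq r$, so it lies in the degree budget where the SA properties of the original $\pseudo$ apply. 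Dividing by $\pseudo(x_A) > 0$ transfers \ref{prop:SA1} from $\pseudo$ to $\pseudo_A$, and the same multiplication argument applied to $\varphi_{S,R} g_i$ and $\varphi_{S,R} h_j$ yields \ref{prop:SA2} and \ref{prop:SAA} for $\pseudo_A$.

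For part (c), which I expect to be the main obstacle, the argument hinges on two dual decompositions. When $\pseudo(x_B) = 0$, multiplying $x_B$ by $\prod_{e \in A \setminus B}\bigl(x_e + (1-x_e)\bigr) \equiv 1$ gives $x_B \equiv \sum_{T \subseteq A \setminus B} \varphi_{B \cup T,\, (A \setminus B) \setminus T}$; every summand has nonnegative pseudoexpectation and they sum to zero, so each vanishes, and the choice $T = A \setminus B$ yields $\pseudo(x_{A \cup B}) = \pseudo(x_A x_B) = 0$, whence $\pseudo_A(x_B) = 0$. When $\pseudo(x_B) = 1$, I instead decompose $1 - x_B \equiv \sum_{\emptyset \neq T \subseteq B}\varphi_{B \setminus T, T}$; since $\pseudo(1-x_B) = 0$, each $\pseudo(\varphi_{B \setminus T, T})$ vanishes, and expanding each of these by $\prod_{e \in A \setminus B}\bigl(x_e + (1-x_e)\bigr) \equiv 1$ shows by the same nonnegativity argument that $\pseudo(\varphi_{(A \cup B) \setminus T, T}) = 0$ for every nonempty $T \subseteq B \setminus A$. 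Multiplying the decomposition of $1-x_B$ by $x_A$, reducing in the quotient (terms with $T \cap A \neq \emptyset$ die), and summing yields $\pseudo((1-x_B) x_A) = 0$, i.e., $\pseudo(x_A x_B) = \pseudo(x_A)$, so $\pseudo_A(x_B) = 1$. The bookkeeping obstacle throughout is tracking which index sets stay disjoint under multiplications in the quotient and ensuring that every $\varphi$ appearing has square-free degree at most $r$; the hypothesis $|A| \leq r$, together with the implicit $|A \cup B| \leq r$ needed to evaluate $\pseudo(x_A x_B)$ in the first place, is exactly what this bookkeeping requires.
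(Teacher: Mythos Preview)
The paper does not actually prove this lemma; it simply refers the reader to Laurent's survey~\cite{Lau03}. Your proposal supplies a correct, self-contained verification directly from the definitions. Parts (a) and (b) are routine, and your partition-of-unity argument for (c) is the standard mechanism for propagating $0/1$ values through conditioning: you write $x_B$ (respectively $1-x_B$) as a sum of $\varphi_{S,R}$'s, each with nonnegative pseudoexpectation, so that a vanishing total forces each summand to vanish; picking the right summand then gives $\pseudo(x_{A\cup B})=0$ (respectively $\pseudo(x_{A\cup B})=\pseudo(x_A)$). Your caveat that the argument needs $|A\cup B|\le r$ so that \ref{prop:SA1} applies to all the $\varphi$'s arising is accurate---the lemma as stated omits this hypothesis, but it is required for the proof and holds in every use of part (c) in the paper.
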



\noindent{\bf Stability.} In the following consider a degree-$r$ pseudoexpectation
$\pseudo$ for $\ass(J_{\largo},B^{*},T)$, and let $\{J_{1},\ldots,J_{s}\}$
be the partitioning of $J_{\largo}$ defined above. Recall that $s=(1-\varepsilon)/\varepsilon^{2}$.
Our strategy is to find a set $A\subseteq [m]\times J$ with $|A|\le2^{\tilde{O}(1/\varepsilon^{2})}$
such that in $\pseudo_{A}$ for each machine $i$ and for each set
$J_{q}\in\{J_{1},\ldots,J_{s}\}$ an \emph{integral} number of jobs
from $J_{q}$ are assigned to $i$. Since in each set $J_{q}$ the
jobs have essentially the same length, based on $\pseudo_{A}$ we
can compute an assignment of the long jobs to the machines of makespan
at most $(1+\epsilon)T$. In order to find the set $A$, we will apply
Lemma~\ref{lem:SA-properties} several times. In the process we will
achieve that for some machines $i$ the number of jobs from some set
$J_{q}$ is integral and does not change if we condition on further
elements from $[m]\times J$. Formally, we define that for some $q\in\{1,\ldots,s\}$
and $a\in\N$ a machine $i$ is \emph{$(q,a)$-stable in }$\pseudo$
if we have 
\[\sum_{j\in J_{q}}\pseudo(x_{ij})=a \;\text{ and } 
\;\sum_{j\in J_{q}}\pseudo_{A}(x_{ij})=a \;\text{ for any $A$-conditioning}\]
with $A\subseteq [m]\times J$. 
We say that a machine $i\in [m]$ is $q$\emph{-stable in }$\pseudo$
if it is $(q,a)$-stable for some $a\in\ZZ_{+}$. We will apply the
following lemma over $\pseudo$ several times, until each machine
$i$ is $q$-stable for each $q\in\{1,...,s\}$.

\begin{lemma} \label{lem:conditioning-levels} Consider $q\in\{1,\ldots,s\}$,
integers $a_{1},...,a_{q}$, a degree-$r$ pseudoexpectation $\pseudo$
and a set of consecutive machines $\{i_{L},...,i_{R}\}$, with $r\ge1/\varepsilon^{2}$.
Suppose that every machine $i\in\{i_{L},\ldots,i_{R}\}$ is $(\tilde{q},a_{\tilde{q}})$-stable
in $\pseudo$ for each $\tilde{q}\in\{1,...,q\}$. Then, there is
a degree-$(r-2/\varepsilon^{2})$ pseudoexpectation $\pseudo^{\mathsf{stab}}$
such that each machine $i\in\{i_{L},\ldots,i_{R}\}$ is $\hat{q}$-stable
in $\pseudo^{\mathsf{stab}}$ for each $\hat{q}\in\{1,...,q+1\}$.
\end{lemma}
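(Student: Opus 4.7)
My plan is to construct a set $A \subseteq [m] \times J_{q+1}$ of cardinality at most $2/\varepsilon^2$ and take $\pseudo^{\mathsf{stab}} := \pseudo_A$. A first observation simplifies the job considerably: by Lemma~\ref{lem:SA-properties}(c), any conditioning $\pseudo_A$ preserves pseudoexpectation values that already lie in $\{0,1\}$, and hence preserves the $(\tilde q, a_{\tilde q})$-stability of every machine in $\{i_L,\ldots,i_R\}$ for $\tilde q \leq q$, since the stability values $a_{\tilde q}$ are integers. So the task reduces to ensuring that after conditioning, the quantities $a(i) := \sum_{j \in J_{q+1}} \pseudo_A(x_{ij})$ are integer for every $i$ in the range and remain integer under further conditioning.

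The crucial structural ingredient is that throughout the construction, the sequence $a(i_L), a(i_L+1), \ldots, a(i_R)$ is weakly decreasing and contained in $[0, 1/\varepsilon]$. The upper bound follows from the load constraint $\sum_j p_j \pseudo_A(x_{ij}) \leq T$ together with $p_j \geq \varepsilon T$ for $j \in J_{q+1}$. For the monotonicity, I would apply the lex symmetry-breaking inequality of $\ass(B,T)$ between consecutive machines $i, i+1$ in our range: the $(\tilde q, a_{\tilde q})$-stability for $\tilde q \leq q$ forces the contributions from $J_1, \ldots, J_q$ to cancel term-by-term, and the choice $B = 1 + 2s\max_q |J_q|$ is large enough that the residual from $J_{q+2}, \ldots, J_s$ is strictly dominated by the $B^{s-q-1}$ coefficient multiplying $a(i) - a(i+1)$ (this is the pseudoexpectation analogue of Lemma~\ref{lem:lex-increase}), so $a(i) \geq a(i+1)$ is forced.

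Because the sequence is weakly decreasing and bounded by $\lfloor 1/\varepsilon \rfloor$, it crosses each of at most $1/\varepsilon + 1$ integer thresholds at most once. The iteration step is: find the leftmost $i^{*} \in \{i_L,\ldots,i_R\}$ with $a(i^*) \notin \mathbb{Z}$, pick any job $j^* \in J_{q+1}$ with $\pseudo_A(x_{i^* j^*}) \in (0,1)$, and append $(i^*, j^*)$ to $A$. I would argue that each such step either rounds $a(i^*)$ up or down by a full unit (so machine $i^*$ becomes $(q+1)$-stable and the stable prefix grows) or at least strictly reduces the fractional part in a way that can happen at most $1/\varepsilon$ times per machine before integrality is reached. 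Combining the at most $1/\varepsilon$ distinct integer levels the sequence can occupy with the at most $1/\varepsilon$ fractional adjustments per level yields the bound $|A| \leq 2/\varepsilon^2$, which is the degree cost claimed in the lemma.

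The main obstacle is making the amortized counting rigorous, since a single conditioning on $(i^*, j^*)$ can, through the linear constraints of $\ass(B,T)$, shift the values $a(i')$ on many other machines $i' \neq i^*$ at once. I would handle this by re-verifying, after each step, that the monotonicity argument of the second paragraph still applies inside $\pseudo_A$ — which is automatic because the $(\tilde q, a_{\tilde q})$-stability for $\tilde q \leq q$ survives all conditioning and the lex inequality is a hard constraint of the LP — and by bounding potential via the pair (number of integer levels already fixed, fractional mass remaining in the currently-processed level) so that each conditioning strictly decreases the potential.
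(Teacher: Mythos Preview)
Your approach has a genuine gap at the monotonicity step. After the contributions from $J_1,\ldots,J_q$ cancel (by stability), the symmetry-breaking inequality between machines $i$ and $i+1$ reads
\[
B^{s-q-1}\bigl(a(i)-a(i+1)\bigr) + R \;\ge\; 0, \qquad |R| < B^{s-q-1},
\]
where $R$ collects the contributions from $J_{q+2},\ldots,J_s$. This only gives $a(i)-a(i+1) > -1$, not $a(i)\ge a(i+1)$. For \emph{integral} solutions the former implies the latter (this is precisely the content of Lemma~\ref{lem:lex-increase}), but for fractional pseudoexpectation values it does not: one can have, say, $a(i)=0.9$ and $a(i+1)=1$ provided the $J_{q+2},\ldots,J_s$ mass on machine $i$ exceeds that on $i+1$ by enough to compensate. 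Hence the weakly-decreasing structure you build the whole argument on is simply not available, and the threshold-crossing count collapses. Your single-variable conditioning step is also not under control: fixing one $x_{i^*j^*}$ to $1$ does not round $a(i^*)$ by a full unit, and the claim ``at most $1/\varepsilon$ fractional adjustments per level'' has no justification.

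The paper sidesteps both issues by never comparing two fractional values through the lex constraint. Instead of conditioning on single pairs, iteration $k$ locates the \emph{rightmost} machine $i_k$ in the current block on which some set $A_k\subseteq\{i_k\}\times J_{q+1}$ of size $1/\varepsilon-k$ has $\pseudo(x_{A_k})>0$, conditions on the whole set $A_k$, and then shows---via the lex inequality, which now compares two \emph{integers}---that an analogous set $B_k$ of the same size exists on the leftmost unfinished machine, and conditions on $B_k$ as well. Both endpoints of the block then carry exactly $1/\varepsilon-k$ jobs from $J_{q+1}$ integrally, and the lex constraints (again used only against integral endpoints) pin the machines in between to the same value. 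The degree cost of iteration $k$ is $2(1/\varepsilon-k)$, and summing over $k=0,\ldots,1/\varepsilon-1$ yields the $2/\varepsilon^2$ bound. (A minor side remark: your invocation of Lemma~\ref{lem:SA-properties}(c) to preserve $(\tilde q,a_{\tilde q})$-stability is misplaced; that lemma concerns individual $0/1$ values, whereas stability is preserved under conditioning directly by its definition.)
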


\paragraph{Phase 1: Obtaining a good pseudoexpectation.}

\noindent We first use Lemma~\ref{lem:conditioning-levels} in order
to prove Theorem~\ref{thm:int-gap-SA}. We prove Lemma~\ref{lem:conditioning-levels}
later in Section~\ref{subsec:pf-lemma-cond}. In what follows let
$\pseudo$ be a degree-$\nrounds$ pseudoexpectation for $\assign(B,T)$.
Our algorithm works in $s$ \emph{stages}. After stage $q$ we obtain
a pseudoexpectation in which each machine is $\tilde{q}$-stable for
each $\tilde{q}\in\{1,\ldots,q\}$. In the first stage we apply Lemma~\ref{lem:conditioning-levels}
on the solution $\pseudo$ with $i_{L}=1$, $i_{R}=m$ and $q=0$,
and let $\pseudo^{(0)}$ be the pseudoexpectation obtained. Assume by
induction that after stage $q$ we have obtained a pseudoexpectation
$\pseudo^{(q)}$ in which each machine is $\tilde{q}$-stable for
$\tilde{q}\in\{1,\ldots,q\}$. Consider a partition of $[m]$ given
by $\{M_{1},\ldots,M_{k}\}$ of the machines such that in $\pseudo^{(q)}$,
for each set $M_{\ell}$ with $\ell\in\{1,\ldots,k\}$, there
are integers $a_{\ell,1},\ldots,a_{\ell,q}$ such that each machine
in $M_{\ell}$ is $(\tilde{q},a_{\ell,\tilde{q}})$-stable for
each $\tilde{q}\in\{1,\ldots,q\}$. Since $\pseudo^{(q)}$ is a
pseudoexpectation for $\assign(B,T)$, which includes the symmetry
breaking constraints, 
the machines in each set $M_{\ell}$ are consecutive. Since the possible
number of combinations $a_{\ell,1},\ldots,a_{\ell,q}$ is at most
$(1/\varepsilon+1)^{q}$ we can find such a partition with $k\le(1/\varepsilon+1)^{q}$.
For each $\ell\in\{1,\ldots,k\}$ we apply Lemma~\ref{lem:conditioning-levels}.
Hence, the total number of rounds in this stage is at most $(1/\varepsilon+1)^{q}\cdot2/\varepsilon^{2}$.
Denote by $\pseudo^{(q+1)}$ the obtained solution. We continue for
$s$ stages. Let $\pseudo^{\text{f}}$ be the pseudoexpectation returned
by the algorithm. The degree of $\pseudo^{\text{f}}$ is at least
$\nrounds-\sum_{q=1}^{s}(1/\varepsilon+1)^{q}\cdot2/\varepsilon^{2}\ge0$.
So in particular, during the process we can indeed apply Lemma~\ref{lem:conditioning-levels}
as needed. We showed that the following holds. \begin{proposition}
For every $\tilde{q}\in\{1,\ldots,s\}$, every machine is $\tilde{q}$-stable
in $\pseudo^{\textnormal{f}}$. 
\end{proposition}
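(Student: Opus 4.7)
The plan is a straightforward induction on the stage index $q \in \{0,1,\ldots,s\}$ with inductive invariant that in $\pseudo^{(q)}$ every machine is $\tilde{q}$-stable for every $\tilde{q} \in \{1,\ldots,q\}$; then $\pseudo^{\mathrm{f}} = \pseudo^{(s)}$ and the proposition is precisely this invariant at $q=s$. The base case $q=0$ comes from a single application of Lemma~\ref{lem:conditioning-levels} to $\pseudo$ on the full interval $\{1,\ldots,m\}$ with the vacuous hypothesis.

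For the inductive step, suppose the invariant holds for $\pseudo^{(q)}$, so each machine $i$ carries a well-defined tuple $(a_{i,1},\ldots,a_{i,q})$ of stability values. Since every long job has size at least $\varepsilon T$ and each machine's total load is at most $T$, each entry $a_{i,\tilde{q}}$ lies in $\{0,1,\ldots,\lfloor 1/\varepsilon\rfloor\}$, giving at most $(1/\varepsilon+1)^q$ distinct tuples and hence a partition $[m] = M_1 \cup \cdots \cup M_k$ with $k \le (1/\varepsilon+1)^q$, where each $M_\ell$ collects machines sharing a common tuple.

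The first genuinely non-trivial step is verifying that each $M_\ell$ consists of \emph{consecutive} machines. For $i\in[m-1]$, pairing the symmetry breaking inequality of $\ass(B,T)$ against $\pseudo^{(q)}$ and splitting the sum at index $q$ yields
\[
\sum_{\tilde{q}=1}^{q} B^{s-\tilde{q}} \bigl(a_{i,\tilde{q}} - a_{i+1,\tilde{q}}\bigr) + R \,\ge\, 0,
\]
where $R = \sum_{\tilde{q}=q+1}^{s} B^{s-\tilde{q}}\sum_{j\in J_{\tilde{q}}} \bigl(\pseudo^{(q)}(x_{ij})-\pseudo^{(q)}(x_{(i+1)j})\bigr)$. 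A geometric-series bound identical to the one in the proof of Lemma~\ref{lem:lex-increase} gives $|R| < B^{s-q}$; so if the tuple at $i$ were lexicographically strictly smaller than the tuple at $i+1$, the leading term would be at most $-B^{s-q}$ and the whole expression would be negative, a contradiction. Hence the tuples are lex non-increasing in $i$, and equal tuples form contiguous intervals.

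With consecutiveness in hand I would apply Lemma~\ref{lem:conditioning-levels} to each class $M_\ell$ in sequence. Its hypothesis is met by the inductive invariant and consecutiveness, so afterwards every machine in $M_\ell$ becomes $\hat{q}$-stable for every $\hat{q} \in \{1,\ldots,q+1\}$; the crucial structural point is that $\pseudo^{\mathsf{stab}}$ is obtained by further conditioning, so by the very definition of stability (which is required to hold ``for any $A$-conditioning'') every machine outside $M_\ell$ retains its $\tilde{q}$-stability for $\tilde{q}\le q$. Iterating over all $k$ classes therefore yields $\pseudo^{(q+1)}$ with the invariant at level $q+1$. The degree bookkeeping is then immediate: stage $q+1$ costs at most $(1/\varepsilon+1)^q\cdot 2/\varepsilon^2$ rounds, so the cumulative cost across the $s=(1-\varepsilon)/\varepsilon^2$ stages fits inside the $\nrounds$ budget supplied by the input $\pseudo$, legitimizing every invocation of Lemma~\ref{lem:conditioning-levels}. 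The principal obstacle is the consecutiveness argument above, where the large coefficient $B$ in the symmetry breaking inequality has to swamp the entire tail of unstabilized levels $\tilde{q}>q$; once that is settled, the rest of the argument is bookkeeping around Lemma~\ref{lem:conditioning-levels} treated as a black box.
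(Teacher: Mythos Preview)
Your proof is correct and follows essentially the same inductive argument as the paper: induct on the stage $q$, partition the machines by their stability tuple, invoke the symmetry breaking constraints to get consecutiveness, apply Lemma~\ref{lem:conditioning-levels} to each class, and track the degree budget. You spell out in more detail the consecutiveness claim (which the paper asserts in one line) and the fact that stability is preserved under further conditioning (which the paper leaves implicit), but the structure and key ideas are identical.
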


\paragraph{Phase 2: Integral assignment for $J_{\largo}$.}

\noindent 
Based on $\pseudo^{f}$ we define an integral assignment of the long
jobs. Note that for each machine $i$ and each value $q\in[s]$, machine
$i$ is $q$-stable; we define $b_{iq}:=\sum_{j\in J_{q}}\pseudo^{f}(x_{ij})$
which is the number of jobs of $J_{q}$ that are assigned to $i$
by $\pseudo^{f}$. Since $\pseudo^{f}$ yields a valid solution to
$\assign(B,T)$, for each $q\in[s]$ we have that $\sum_{i\in [m]}b_{iq}=|J_{q}|$.
For each $q\in[s]$ we assign now the jobs in $J_{q}$ to the machines
such that each machine $i$ receives exactly $b_{iq}$ jobs from $J_{q}$.
Intuitively, all jobs in $J_{q}$ have essentially the same length
(up to a factor $1+\epsilon$), and therefore it is not relevant which
exact jobs from $J_{q}$ we assign to $i$, as long as we assign $b_{iq}$
jobs in total. Afterwards, we the short jobs in a standard greedy
list scheduling procedure: We consider the jobs in an arbitrary
order and assign each job on a machine that currently has the minimum
load among all machines. Now we are ready to prove Theorem~\ref{thm:int-gap-SA}
by showing that the load of every machine is at most $(1+\varepsilon)T$.

\noindent 

\begin{proof}[Theorem~\ref{thm:int-gap-SA}] 
 Let $\left\{ \bar{x}_{i,j}\right\} _{i\in [m],j\in J}$ denote the
computed integral assignment of the jobs to the machines, i.e., $\bar{x}_{i,j}=1$
if we assigned job $j$ on machine $i$ and $\bar{x}_{i,j}=0$ otherwise.
We first check that for each machine $i\in [m]$, we have that $\sum_{q=1}^{s}\sum_{j\in J_{q}}\bar{x}_{i,j}p_{j}\le(1+\varepsilon)T.$
Since the solution given by $\pseudo^{f}$ feasible for $\assign(B,T)$,
for each machine $i$ 
we have that $\sum_{q=1}^{s}b_{iq}\left(\frac{1}{\varepsilon}+q-1\right)\varepsilon^{2}T\le T$.
This implies for each machine $i$ that
\begin{eqnarray*}
\sum_{q=1}^{s}\sum_{j\in J_{q}}\bar{x}_{i,j}p_{j}%
 & \le & (1+\varepsilon)\sum_{q=1}^{s}\sum_{j\in J_{q}}\bar{x}_{i,j}\left(\frac{1}{\varepsilon}+q-1\right)\varepsilon^{2}T\\
 & \le & (1+\varepsilon)\sum_{q=1}^{s}\left(\frac{1}{\varepsilon}+q-1\right)\varepsilon^{2}T\sum_{j\in J_{q}}\bar{x}_{i,j}\\
 & \le & (1+\varepsilon)\sum_{q=1}^{s}\left(\frac{1}{\varepsilon}+q-1\right)\varepsilon^{2}T\cdot b_{iq}\le(1+\varepsilon)T.
\end{eqnarray*}
%
It remains to argue about the short jobs. If the global makespan does
not increase while assigning them greedily, the overall makespan remains
at most $(1+\varepsilon)T$. Otherwise, the makespan of any two machines
differ by at most $\varepsilon T$. Since $\sum_{j}p_{j}\le mT$ 
we conclude that the makespan is at most $(1+\varepsilon)T$. 
\end{proof} 

\subsubsection{\label{subsec:pf-lemma-cond} Stable conditionings: Proof of Lemma~\ref{lem:conditioning-levels}}

Recall that $\pseudo$ is a degree-$r$ pseudoexpectation with $r\ge1/\varepsilon^{2}$.
We use the following strategy to prove Lemma~\ref{lem:conditioning-levels}.
First, we identify the rightmost machine $i$ such that according
to $\pseudo$ with non-zero probability there are $1/\epsilon$ jobs
from $J_{q+1}$ assigned to $i$. Let $i_{0}$ be this machine and
let $A\subseteq [m]\times J$ denote the corresponding pairs $(i,j)$
with $i=i_{0}$ and $j\in J_{q+1}$. We apply Lemma~\ref{lem:SA-properties}
on $A$. We argue that in the resulting pseudo-expectation $\pseudo_{A}$
with non-zero probability there are $1/\epsilon$ jobs from $J_{q+1}$
assigned to $i_{L}$, let $A'\subseteq [m]\times J$ denote the corresponding
pairs. We apply Lemma~\ref{lem:SA-properties} on $A'$ as well.
In the resulting pseudoexpectation $\pseudo_{A\cup A'}$, the symmetry
breaking constraints in $\ass(B,T)$ ensure that \emph{each} machine
$i'$ between $i_{L}$ and $i_{0}$ has \emph{exactly} $1/\epsilon$
jobs from $J_{q+1}$ assigned to $i$ and therefore $i'$ is $(q+1)$-stable.
Also, no machine between $i_{0}$ and $i_{R}$ will ever get $1/\epsilon$
jobs from $J_{q+1}$ assigned to it with non-zero probability, no
matter on which sets $A''$ we might condition later. We continue
inductively: on the machines between $i_{0}$ and $i_{R}$ we look
for the rightmost machine $i$ such that according to $\pseudo_{A\cup A'}$
with some non-zero probability there are $1/\epsilon-1$ jobs from
$J_{q+1}$ assigned to $i$, etc. There are at most $1/\epsilon$
iterations in total and in each step the degree of the pseudo-expectation
decreases by at most $2/\epsilon$. Therefore, at the end, we obtain
a degree-$(r-2/\varepsilon^{2})$ pseudo-expectation in which all
machines are $(q+1)$-stable.

Now we describe our argumentation in detail. First assume that there
is no machine $i\in\{i_{L},...,i_{R}\}$ for which there exists a
set $A\subseteq [m]\times J$ with $\pseudo(x_{A})>0$, $|A|=1/\varepsilon$,
and where each tuple $(h,\upsilon)\in A$ satisfies that $h=i$ and
$\upsilon\in J_{q+1}$. In this case we define $\pseudo^{(0)}=\pseudo$
and $i_{0}=i_{L}-1$. Intuitively, in this case in our final assignment
there will be no machine in $\{i_{L},...,i_{R}\}$ that has $1/\varepsilon$
jobs from $J_{q+1}$ assigned to it. We will use later in our induction
that $\pseudo^{(0)}$ is a degree-$(r-2/\varepsilon)$ pseudoexpectation.
Otherwise let $i_{0}$ be the rightmost machine in $\{i_{L},...,i_{R}\}$,
i.e., the machine with largest index, satisfying the above for some
set $A$. We condition on $A$ and obtain the degree-$(r-1/\varepsilon)$
$\SA$ conditioning $\pseudo_{A}$. 
Recall that by Lemma~\ref{lem:SA-properties}$(a)$ each job $\upsilon\in J_{\largo}$
with $(i_{0},\upsilon)\in A$ is scheduled integrally to $i_{0}$.

\begin{lemma} \label{lem:condition-machine-1} There exists a set
$A'\subseteq [m]\times J$ with $\pseudo_{A}(x_{A'})>0$, $|A'|=1/\varepsilon$,
and for every $(h,\upsilon)\in A'$ we have that $h=i_{L}$ and $\upsilon\in J_{q+1}$.
\end{lemma}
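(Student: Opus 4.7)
The plan is to build $A'$ greedily, conditioning on one new pair at a time. Initialize $A_0'=\emptyset$ and inductively suppose $A_k'=\{(i_L,j_1),\ldots,(i_L,j_k)\}$ has been constructed with distinct $j_1,\ldots,j_k\in J_{q+1}$, $k<1/\varepsilon$, and $\pseudo(x_Ax_{A_k'})>0$. By Lemma~\ref{lem:SA-properties}(a) the first $k$ variables pseudoevaluate to $1$ under $\pseudo_{A\cup A_k'}$, so
\[
\pseudo_{A\cup A_k'}(y_{i_L,q+1})=k+\sum_{j\in J_{q+1}\setminus\{j_1,\ldots,j_k\}}\pseudo_{A\cup A_k'}(x_{i_L,j}).
\]
Whenever this quantity strictly exceeds $k$, some $j_{k+1}$ in the complementary set has positive pseudoexpectation, so conditioning on the pair $(i_L,j_{k+1})$ extends $A_k'$ to $A_{k+1}'$ while preserving $\pseudo(x_Ax_{A_{k+1}'})>0$. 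After $1/\varepsilon$ iterations the set $A':=A_{1/\varepsilon}'$ has the required form, and since the total number of conditionings is $|A|+1/\varepsilon=2/\varepsilon$, each intermediate conditioning is legal by Lemma~\ref{lem:SA-properties}(b).

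To establish $\pseudo_{A\cup A_k'}(y_{i_L,q+1})>k$, the plan is to telescope the symmetry breaking inequality across $i\in\{i_L,\ldots,i_0-1\}$. Summing $\sum_{q'=1}^{s}B^{s-q'}(y_{i,q'}-y_{i+1,q'})\ge 0$ over this range collapses to the single inequality $\sum_{q'=1}^{s}B^{s-q'}(y_{i_L,q'}-y_{i_0,q'})\ge 0$, whose pseudoexpectation under $\pseudo_{A\cup A_k'}$ is nonnegative. The stability hypothesis eliminates every term with $q'\le q$, because stability is preserved under any further conditioning (it is built into the definition) and every machine in $\{i_L,\ldots,i_R\}$ shares the same stable values. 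For the tail $q'\ge q+2$, the bounds $0\le\pseudo(y_{i,q'})\le|J_{q'}|$ together with $B=1+2s\max_q|J_q|$ control the contribution by $B^{s-q-1}/(2s)$ in absolute value, by exactly the geometric series argument in Lemma~\ref{lem:lex-increase}. Dividing out the leading coefficient $B^{s-q-1}$ of the $q'=q+1$ term yields
\[
\pseudo_{A\cup A_k'}(y_{i_L,q+1})\;\ge\;\pseudo_{A\cup A_k'}(y_{i_0,q+1})-\frac{1}{2s}.
\]

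It remains to pin down $\pseudo_{A\cup A_k'}(y_{i_0,q+1})=1/\varepsilon$. The lower bound is immediate from $\pseudo_{A\cup A_k'}(x_{i_0,j})=1$ for each of the $1/\varepsilon$ pairs $(i_0,j)\in A$; for the upper bound, since $p_j\ge\varepsilon T$ on $J_{q+1}$, the polynomial $\varepsilon T(1/\varepsilon-y_{i_0,q+1})$ admits the decomposition $(T-\sum_j p_jx_{i_0,j})+\sum_{j\notin J_{q+1}}p_jx_{i_0,j}+\sum_{j\in J_{q+1}}(p_j-\varepsilon T)x_{i_0,j}$, a nonnegative combination of the load inequality of $\ass(B,T)$ and variable nonnegativities, whose pseudoexpectation is therefore nonnegative. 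Combining everything, $\pseudo_{A\cup A_k'}(y_{i_L,q+1})\ge 1/\varepsilon-1/(2s)>1/\varepsilon-1\ge k$ for every $k\le 1/\varepsilon-1$, using $s=(1-\varepsilon)/\varepsilon^2\ge 2$. The main delicate step is balancing the lex tail against the unit gap between $1/\varepsilon$ and $k$; the polynomial size of $B$ is precisely what makes the slack $1/(2s)$ strictly less than $1$, after which the induction runs mechanically.
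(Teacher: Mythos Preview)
Your proof is correct and follows the same core idea as the paper: use the telescoped symmetry-breaking inequality together with $(\tilde q,a_{\tilde q})$-stability for $\tilde q\le q$ to transfer the $J_{q+1}$-count from machine $i_0$ to machine $i_L$, with the choice of $B$ absorbing the lower-order tail. The paper argues by contradiction (take a maximal $A'$ with $|A'|<1/\varepsilon$ and conclude that the resulting degree-one schedule violates the linear symmetry-breaking constraint), whereas you build $A'$ greedily and spell out the telescoping and tail estimate explicitly; the two arguments are equivalent.
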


\begin{proof} Assume that this is not the case. Then let $A'$ denote
the set of maximum size such that $\pseudo_{A}(x_{A'})>0$ and such
that each $(h,\upsilon)\in A'$ satisfies that $h=i_{L}$ and $\upsilon\in J_{q+1}$.
Observe that $|A'|\le1/\varepsilon$ by Lemma~\ref{lem:SA-properties}$(b)$,
and let $\pseudo_{A\cup A'}$ be the degree-$(r-2/\varepsilon)$ $\SA$
conditioning. 
Then, for each job $j\in J_{q}$ it holds that $\pseudo_{A\cup A'}(x_{i_{L}j})=0$,
otherwise $0<\pseudo_{A\cup A'}(x_{i_{L}j})=\pseudo_{A}(x_{A'}x_{i_{L}j})/\pseudo_{A}(x_{A'})$
and then $\pseudo_{A}(x_{A'}x_{i_{L}j})=\pseudo_{A}(x_{A'\cup\{(i_{L},j)\}})>0$,
which contradicts the maximality of $A'$. 
But then the fractional schedule given by $\pseudo_{A'}(x_{ij})$
for every $(i,j)\in [m]\times J_{\largo}$ violates the symmetry breaking
constraints of $\ass(J_{\largo},B,T)$, which is a contradiction.
 \end{proof}

Starting from $\pseudo_{A}$ we condition on $A'$ (i.e., we apply
Lemma~\ref{lem:SA-properties}) given by Lemma~\ref{lem:condition-machine-1},
obtaining $\pseudo^{(0)}=\pseudo_{A\cup A'}$. As a result, machine
$i_{L}$ and machine $i_{0}$ have both exactly $1/\varepsilon$
jobs from $J_{q}$ assigned to it. Due to the symmetry breaking constraints
of the program $\ass(J_{\largo},B,T)$ this implies that each machine
in $\{i_{L},...,i_{0}\}$ has exactly $1/\varepsilon$ jobs from $J_{q}$
(fractionally) assigned to it in $\pseudo^{(0)}$. \begin{lemma}
For each machine $i\in\{i_{L},...,i_{0}\}$ we have $\sum_{j\in J_{q}}\pseudo^{(0)}(x_{ij})=1/\varepsilon$.
\end{lemma}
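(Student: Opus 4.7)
The plan is to exploit three facts in sequence: conditioning anchors the fractional assignment at $i_L$ and $i_0$; the load constraint then forces all other fractional mass at these two machines to vanish; and finally the symmetry breaking inequalities, applied inside $\pseudo^{(0)}$, propagate this pinned profile to every machine in between. Throughout, write $m_{i,q'} := \sum_{j\in J_{q'}}\pseudo^{(0)}(x_{ij})$; we aim to show $m_{i,q+1}=1/\varepsilon$ for every $i\in\{i_L,\ldots,i_0\}$ (the preceding paragraph makes clear that the group in question is $J_{q+1}$).

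First, by Lemma~\ref{lem:SA-properties}(a) the conditioning on $A\cup A'$ sets $\pseudo^{(0)}(x_{ij})=1$ on each of the $2/\varepsilon$ anchored pairs, so $m_{i_L,q+1}\ge 1/\varepsilon$ and $m_{i_0,q+1}\ge 1/\varepsilon$. Since $\pseudo^{(0)}$ is a valid pseudoexpectation for $\ass(J_{\largo},B,T)$ by Lemma~\ref{lem:SA-properties}(b), the load constraint gives $\sum_j p_j\pseudo^{(0)}(x_{i_L,j})\le T$. Each anchored job has $p_j\ge\varepsilon T$, and the $1/\varepsilon$ of them already contribute at least $T$, which forces $\pseudo^{(0)}(x_{i_L,j'})=0$ for every $j'$ outside of $A'$. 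We therefore obtain $m_{i_L,q+1}=1/\varepsilon$ exactly and $m_{i_L,q'}=0$ for every $q'\neq q+1$, and the analogous equalities hold at $i_0$.

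Next, define $f(i):=\sum_{q'=1}^{s} B^{s-q'} m_{i,q'}$. Linearity of $\pseudo^{(0)}$ together with the symmetry breaking inequality of $\ass(J_{\largo},B,T)$ yields $f(i)\ge f(i+1)$ for every $i$, so the chain $f(i_L)\ge f(i_L+1)\ge\cdots\ge f(i_0)$ is sandwiched between its two endpoints, which both equal $B^{s-q-1}/\varepsilon$ by the previous step. Hence $f(i)=B^{s-q-1}/\varepsilon$ for every intermediate $i$. Combining this with the load constraint at $i$, which gives $\sum_{q'} m_{i,q'}\le 1/\varepsilon$ (because every long job has $p_j\ge\varepsilon T$), and upper bounding each coefficient $B^{s-q'}$ for $q'>q+1$ by $B^{s-q-2}$, one arrives at
\[
\tfrac{B^{s-q-1}}{\varepsilon}=f(i)\le B^{s-q-1}m_{i,q+1}+B^{s-q-2}\bigl(\tfrac{1}{\varepsilon}-m_{i,q+1}\bigr)=B^{s-q-2}\bigl((B-1)m_{i,q+1}+\tfrac{1}{\varepsilon}\bigr),
\]
which rearranges to $m_{i,q+1}\ge 1/\varepsilon$; together with the load upper bound $m_{i,q+1}\le 1/\varepsilon$ this gives $m_{i,q+1}=1/\varepsilon$. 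The degenerate case where the initial search for $i_0$ fails produces $\{i_L,\ldots,i_0\}=\emptyset$, and the statement is then vacuous.

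The main subtlety is the fractional nature of $\pseudo^{(0)}$, which rules out a direct integer-lexicographic argument in the spirit of Theorem~\ref{lem:lexico-mach}. The polynomially large weight $B=1+2s\max_q|J_q|$ is precisely what ensures that the dominant term $B^{s-q-1}m_{i,q+1}$ in $f(i)$ cannot be compensated by the higher-$q'$ contributions, so that the collapsed chain equality on $f$ can be converted into the pointwise equality on $m_{i,q+1}$.
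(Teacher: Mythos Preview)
Your approach is essentially the paper's: anchor the $J_{q+1}$-mass at $i_L$ and $i_0$ to $1/\varepsilon$, then sandwich the intermediate machines using the monotone quantity $f(i)$ from the symmetry breaking constraints. Your Step~1 is slightly different and cleaner---the paper conditions once more on a hypothetical extra job and only then invokes the load bound to reach a contradiction, whereas you apply the load constraint directly to $\pseudo^{(0)}$---and your Step~2 spells out the computation that the paper only gestures at with ``follows by the symmetry breaking constraints.''

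There is, however, a real gap in Step~2. Your displayed bound
\[
f(i)\le B^{s-q-1}m_{i,q+1}+B^{s-q-2}\bigl(\tfrac{1}{\varepsilon}-m_{i,q+1}\bigr)
\]
silently discards all terms $B^{s-q'}m_{i,q'}$ with $q'\le q$. Since those coefficients strictly exceed $B^{s-q-1}$, the inequality is false as written unless $m_{i,q'}=0$ for every $q'\le q$ at every intermediate $i$. That does hold, but you never say why: from Step~1 you have $m_{i_L,q'}=0$ for $q'\le q$; the $(\tilde q,a_{\tilde q})$-stability hypothesis of Lemma~\ref{lem:conditioning-levels} (which by definition survives any conditioning, in particular on $A\cup A'$) then gives $m_{i,q'}=a_{q'}=m_{i_L,q'}=0$ for all $i\in\{i_L,\ldots,i_R\}$. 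Inserting this one line before the display completes the argument.
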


\begin{proof} Machine $i_{0}$ has $1/\varepsilon$ jobs from $J_{q}$
assigned to it integrally. If there was yet another job $j\in J_{q}$
fractionally assigned to $i_{0}$ then we could condition on $(i_{0},j)$
and obtain a 
degree-$(r-1/\varepsilon-1)$ pseudoexpectation, with at least $1/\varepsilon+1$
long jobs assigned to $i_{0}$, which is a contradiction. The same
argument holds for machine $i_{L}$. The claim for the machines $\{i_{L}+1,...,i_{0}-1\}$
follows by the symmetry breaking constraints in $\ass(J_{\largo},B,T)$
that enforce the lexicographic ordering over the machines.
 \end{proof}

\begin{proof}[Lemma~\ref{lem:conditioning-levels}] Assume
by induction that for some $k\in\{0,...,1/\varepsilon-1\}$ we obtained
a degree-$(r-\sum_{\ell=1}^{k}2(1/\varepsilon+1-\ell))$ pseudoexpectation
$\pseudo^{(k)}$ such that there are machines $i_{0},...,i_{k}\in [m]$
such that for each $\ell\in\{0,...,k\}$ we have that $i_{\ell}\le i_{\ell+1}$
and each machine $w\in\{i_{\ell-1}+1,...,i_{\ell}\}$ satisfies $\sum_{j\in J_{q}}\pseudo^{(k)}(x_{wj})=1/\varepsilon-\ell$,
with $i_{-1}=i_{L}-1$ for convenience. Moreover, assume by induction
that there is no set $\Gamma\subseteq [m]\times J_{\largo}$ with $\pseudo^{(k)}(x_{\Gamma})>0$
and $|\Gamma|=1/\varepsilon-k$ such that each $(w,j)\in\Gamma$ satisfies
that $w=i_{k}+1$ and $j\in J_{q}$. The solution $\pseudo^{(0)}$
constructed above satisfies the base case of $k=0$. \\

\noindent \textit{Inductive step.} Given the solution $\pseudo^{(k)}$
we construct a solution $\pseudo^{(k+1)}$ as follows. Let $i_{k+1}$
denote the rightmost machine larger than $i_{k}$ such that there
is a set $A_{k}\subseteq [m]\times J_{\largo}$ with $\pseudo^{(k)}(x_{A_{k}})>0$,
$|A_{k}|=1/\varepsilon-k-1$, and each tuple $(w,\upsilon)\in A_{k}$
satisfies that $w=i_{k+1}$ and $\upsilon\in J_{q}$. If there is
no such machine then we define $i_{k+1}=i_{k}$ and set $\pseudo^{(k+1)}=\pseudo^{(k)}$
which is a pseudoexpectation of degree $r-\sum_{\ell=1}^{k+1}2(1/\varepsilon+1-\ell)$.
Otherwise we condition on $A_{k}$ and obtain $\pseudo_{A_{k}}^{(k)}$,
of degree $r-\sum_{\ell=1}^{k}2(1/\varepsilon+1-\ell))-(1/\varepsilon-k-1)$.
Following the same lines of Lemma~\ref{lem:condition-machine-1}
we have that there exists a set $B_{k}\subseteq [m]\times J_{\largo}$
with $\pseudo_{A_{k}}^{(k)}(x_{B_{k}})>0$, $|B_{k}|=1/\varepsilon-k-1$,
and each tuple $(h,\upsilon)\in B_{k}$ satisfies that $h=i_{k}+1$
and $\upsilon\in J_{q}$. 
We then define $\pseudo^{(k+1)}=\pseudo_{A_{k}\cup B_{k}}^{(k)}$,
which is a pseudoexpectation of degree $r-\sum_{\ell=1}^{k+1}2(1/\varepsilon+1-\ell)$.
\begin{claim} \label{claim:induction} For each $w\in\{i_{k}+1,...,i_{k+1}\}$
we have that $\sum_{j\in J_{q}}\pseudo^{(k+1)}(x_{wj})=1/\varepsilon-k-1$.
\end{claim}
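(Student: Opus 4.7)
My plan is to establish the equality $\sum_{j\in J_q}\pseudo^{(k+1)}(x_{wj})=1/\varepsilon-k-1$ by first anchoring the two endpoints of the interval $\{i_k+1,\ldots,i_{k+1}\}$, and then propagating the value to the interior machines via the symmetry breaking inequalities of $\ass(J_\largo,B,T)$. The degree budget needed in this step is small, consistent with the accounting of the lemma.

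For the lower bound at the endpoints, recall that $\pseudo^{(k+1)}=\pseudo^{(k)}_{A_k\cup B_k}$, where $A_k\subseteq\{i_{k+1}\}\times J_q$ and $B_k\subseteq\{i_k+1\}\times J_q$ each have size $1/\varepsilon-k-1$. By Lemma~\ref{lem:SA-properties}$(a)$, each pair in the conditioning set is scheduled integrally, giving $\sum_{j\in J_q}\pseudo^{(k+1)}(x_{i_{k+1},j})\ge 1/\varepsilon-k-1$ and the analogous inequality at $i_k+1$.

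For the matching upper bound I argue by contradiction. The key observation is that, because every machine in $\{i_L,\ldots,i_R\}$ is $\tilde q$-stable in $\pseudo^{(k+1)}$ for all $\tilde q<q$, the lower-class contributions cancel in the symmetry breaking inequality between consecutive machines, reducing it to a weighted comparison over classes $J_q,\ldots,J_s$. Arguing as in Lemma~\ref{lem:lex-increase}, the polynomial base $B=1+2s\max_{q'}|J_{q'}|$ is large enough that a strict drop of at least $1$ in the $J_q$ count moving rightwards cannot be compensated by the higher-index classes. Thus the $J_q$-counts $\sum_{j\in J_q}\pseudo^{(k+1)}(x_{wj})$ are weakly nonincreasing in $w$ across the interval. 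If either endpoint strictly exceeded $1/\varepsilon-k-1$, monotonicity would propagate the excess to machine $i_k+1$, so there would exist $j^*\in J_q$ with $(i_k+1,j^*)\notin B_k$ and $\pseudo^{(k+1)}(x_{i_k+1,j^*})>0$. By the definition of conditioning, $\pseudo^{(k)}(x_{A_k\cup B_k\cup\{(i_k+1,j^*)\}})>0$, and by the standard monotonicity of $\SA$ pseudoexpectations on squarefree monomials (a consequence of the positivity constraints \ref{prop:SA1} applied to the partition of unity), we also get $\pseudo^{(k)}(x_{B_k\cup\{(i_k+1,j^*)\}})>0$. The set $\Gamma:=B_k\cup\{(i_k+1,j^*)\}$ has size $1/\varepsilon-k$, is supported on machine $i_k+1$, and lies in $J_q$; this directly contradicts the inductive hypothesis at level $k$.

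Having pinned both endpoints to exactly $1/\varepsilon-k-1$, the same weak monotonicity forces every interior machine $w\in\{i_k+2,\ldots,i_{k+1}-1\}$ to attain the same value. The main obstacle is the upper-bound step: carrying out the arithmetic that turns the symmetry breaking inequality into a one-machine-at-a-time monotonicity statement requires the explicit use of the base $B$ and the cancellation of the stable lower classes, much in the spirit of Lemma~\ref{lem:lex-increase}. Once monotonicity is in hand, transferring an alleged excess on the right endpoint to machine $i_k+1$ and invoking Lemma~\ref{lem:SA-properties} to extract the contradicting set $\Gamma$ is routine.
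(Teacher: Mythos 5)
Your proposal follows the same skeleton as the paper's proof: the lower bound at the two endpoint machines comes from Lemma~\ref{lem:SA-properties}$(a)$ applied to the conditioning sets $A_k$ and $B_k$; the upper bound at $i_k+1$ comes from extracting an extra pair $(i_k+1,j^*)$ with positive value and contradicting the inductive hypothesis via $\Gamma=B_k\cup\{(i_k+1,j^*)\}$ (this is verbatim the paper's step); and the interior machines are handled by the symmetry breaking constraints. Where you genuinely diverge is in \emph{how} those constraints are invoked: the paper treats the right endpoint ``with the same reasoning'' directly at $i_{k+1}$ using $A_k$, whereas you route the excess back to $i_k+1$ through a claimed exact weak monotonicity of the fractional counts $\sum_{j\in J_q}\pseudo^{(k+1)}(x_{wj})$ in $w$, and you also use that monotonicity to sandwich the interior machines between the two equal endpoints. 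Be careful here: for a pseudoexpectation the only available fact is the single linear inequality $\sum_{q'}B^{s-q'}\sum_{j\in J_{q'}}\bigl(\pseudo(x_{ij})-\pseudo(x_{(i+1)j})\bigr)\ge 0$, and after cancelling the stable lower classes the higher-index tail is bounded only by about $B^{s-q-1}/(2s)$; since the drop in the $J_q$-term need not be integral, the integrality step in Lemma~\ref{lem:lex-increase} does not apply, and one only gets that the count can \emph{increase} by up to roughly $1/(2s)$ per step. Over a long block this slack accumulates, so neither the transfer of an excess from $i_{k+1}$ to $i_k+1$ nor the sandwiching of the interior machines is airtight as you state it; a rigorous version would pin interior machines by a further conditioning argument (condition on a maximal positive set of $J_q$-pairs at an interior machine and compare locally integral configurations). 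To be fair, the paper's own proof is exactly as terse at this point, so your write-up is faithful to the intended argument while making its weakest step more visible.
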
 

We see how to conclude the lemma and then we show check the claim.
Since we chose machine $i_{k+1}$ to be the rightmost machine with
the claimed properties, 
$\pseudo^{(k+1)}$ satisfies the induction hypothesis for $k+1$.
Finally, we define $\pseudo^{\mathsf{stab}}=\pseudo^{(1/\varepsilon)}$,
which yields that $\pseudo^{\mathsf{stab}}$ is a pseudoexpectation
of degree $r-2/\varepsilon^{2}$, since $\sum_{\ell=1}^{1/\varepsilon}2(1/\varepsilon+1-\ell)\le2~/\varepsilon^{2}.$
That concludes the lemma.

\begin{proof}[Claim~\ref{claim:induction}]
On machine $w=i_{k}+1$ we conditioned
on the set $B_{k}$ due to the previous claim with $|B_{k}|=1/\varepsilon-k-1$.
Therefore $\sum_{j\in J_{q}}\pseudo^{(k+1)}(x_{wj})\ge1/\varepsilon-k-1$.
On the other hand, if $\sum_{j\in J_{q}}\pseudo^{(k+1)}(x_{wj})>1/\varepsilon-k-1$
then there must be a pair $(w,j)\notin B_{k}$ with $\pseudo^{(k+1)}(x_{wj})>0$
and $j\in J_{q}$. But this implies that $\pseudo^{(k)}(x_{B_{k}\cup\{(w,j)\}})>0$
which contradicts the induction hypothesis. With the same reasoning
we argue that $\sum_{j\in J_{q}}\pseudo^{(k+1)}(x_{i_{k+1}j})=1/\varepsilon-k-1$.
The claim for the machines in $\{i_{k}+2,...,i_{k+1}-1\}$ then follows
from the symmetry breaking constraints in $\ass(J_{\largo},B,T)$
that enforce the lexicographic over the machines.  
\end{proof}
 %
\end{proof}


\subsection{A faster LP based approximation scheme}

\label{sec:faster-ptas}

In Section~\ref{sec:Integrality-gap-SA} we proved that after applying~$\nrounds$
rounds of Sherali-Adams to $\assign(B,T)$ we obtain a linear relaxation
with an integrality gap of at most $1+\varepsilon$. In this section,
we add to $\assign(B,T)$ a set of constraints that we refer to
as the \emph{ordering constraints}, obtaining a linear program that we refer
to as $\order(B,T)$. Intutitively, we prove that if we apply only
$\mathrm{poly(1/\varepsilon)}$ rounds of Sherali-Adams to this new
program then its integrality gap drops to $1+\varepsilon$. On the
other hand, it might be that there is no optimal solution (i.e.,
a solution with makespan $\mathrm{OPT}$) that satisfies the ordering
constraints and in particular it might be that $\order(B,\mathrm{OPT})$
does not have a feasible solution (in contrast to $\assign(B,\mathrm{OPT})$
which is always feasible). However, we can guarantee that
$\order(B,(1+\varepsilon)\mathrm{OPT})$ is always feasible.\\

\noindent{\it Ordering Constraints.}
Roughly speaking, we use a new set of constraints that allow
us to break symmetries due to permutations of jobs in the same class
$J_{q}$ (and not only the symmetries corresponding to permutations
of machines), which is a key difference to the approach used for the
approximation scheme of Section~\ref{sec:Integrality-gap-SA}. For
each $q\in\{1,\ldots,s\}$ assume that $J_{q}=\{j_{q,1},j_{q,2},...,j_{q,|J_{q}|}\}$
and we impose that the jobs in $J_{q}$ are scheduled in this order,
i.e., if jobs $j_{q,\ell}$ and $j_{q,\ell+1}$ are scheduled on machines
$i\in [m]$ and $h\in [m]$ for some $\ell\in\{1,\ldots,|J_{q}|-1\}$, then $i\le h$.
To enforce this, for each $q\in\{1,\ldots,s\}$, each $\ell\in\{1,...,|J_{q}|-1\}$,
and for each $h\in [m]$ we add to $\assign(B,T)$ the constraint 
\begin{equation}
\sum_{i=1}^{h}x_{ij_{q,\ell}}\ge\sum_{i=1}^{h}x_{ij_{q,\ell+1}}.\label{eq:order-constraints}
\end{equation}
Denote by $\order(B,T)$ the LP obtained by adding the above set of
constraints to $\assign(B,T)$. It might be that there is no feasible
solution to $\order(B,\mathrm{OPT})$. However, in the following lemma
we show that there exists always a solution to $\order(B,(1+\varepsilon)\mathrm{OPT})$.
\begin{lemma} There exists a feasible integral solution to $\order(B,(1+\varepsilon)\mathrm{OPT})$.
\end{lemma}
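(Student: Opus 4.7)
The plan is to start from an arbitrary optimal schedule of makespan $\mathrm{OPT}$ and show it can be transformed, while losing only a $(1+\varepsilon)$ factor in makespan, into one that simultaneously respects the lexicographic machine-symmetry inequalities of $\ass(B,\cdot)$ and the job-ordering inequalities~\eqref{eq:order-constraints}. The key observation that makes this possible is that after rounding up each long job to the top of its class window, all jobs inside the same class $J_q$ become indistinguishable, so they can be freely relabelled across the machines without changing any load.

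First I would round: replace each $p_j$ for $j\in J_q$ by $p'_j:=(1/\varepsilon+q)\varepsilon^{2}T$; since $p_j\ge (1/\varepsilon+q-1)\varepsilon^{2}T$, the ratio $p'_j/p_j$ is at most $1+\varepsilon$, so any integral schedule of makespan $\mathrm{OPT}$ under the original sizes has makespan at most $(1+\varepsilon)\mathrm{OPT}$ under the rounded sizes. Short jobs keep their original size. Next, I would invoke Lemma~\ref{lem:valid-form}: by applying a permutation $\sigma\in S_m$ to the rounded schedule, the configurations of consecutive machines can be made lexicographically non-increasing, so that the symmetry-breaking inequalities of $\ass(B,(1+\varepsilon)\mathrm{OPT})$ are satisfied. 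This relabelling does not change any load.

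Finally, I would reassign long jobs inside each class. For every $q\in\{1,\dots,s\}$, let $m_q(i)$ denote the number of $J_q$-jobs on machine $i$ in the permuted schedule, and build a new assignment $\bar x$ by placing $j_{q,1},\dots,j_{q,|J_q|}$ in their given order on machines $1,2,\dots,m$, so that the first $m_q(1)$ go to machine~$1$, the next $m_q(2)$ go to machine~$2$, and so on. Short jobs stay put. Because all $J_q$-jobs share the same rounded size, the load of each machine is unchanged; because the counts $m_q(i)$ are also unchanged, the symmetry-breaking inequalities continue to hold. By construction, within each class a job with smaller label is placed on a machine with index at most that of any job with larger label, which is precisely~\eqref{eq:order-constraints}. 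Since $p_j\le p'_j$, the resulting integral assignment satisfies every load constraint of $\order(B,(1+\varepsilon)\mathrm{OPT})$.

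The main obstacle, and the reason the detour through rounded sizes is necessary, is that one cannot in general permute long jobs of \emph{different} sizes between machines without perturbing the loads; only after collapsing each $J_q$ to a single representative size does the relabelling in the last step preserve both the symmetry-breaking counts and the machine loads. This is exactly what forces the $(1+\varepsilon)$ slack in the makespan and explains why $\order(B,\mathrm{OPT})$ itself might be infeasible.
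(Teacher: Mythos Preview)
Your proof is correct and follows essentially the same approach as the paper: start from an optimal solution, permute the machines to satisfy the lexicographic symmetry-breaking constraints (the paper does this by directly invoking a solution in $\ass(B,\mathrm{OPT})$), then relabel the jobs inside each class $J_q$ so that the ordering constraints~\eqref{eq:order-constraints} hold while keeping the per-machine counts $m_q(i)$ fixed. The only cosmetic difference is your bookkeeping device of rounding each $p_j$ up to the top of its class window; the paper instead bounds the makespan change directly by noting that any two jobs in $J_q$ differ by at most $\varepsilon^2\mathrm{OPT}$ and each machine carries at most $1/\varepsilon$ long jobs, giving an additive increase of at most $\varepsilon\,\mathrm{OPT}$---this is exactly what your rounding encodes.
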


\begin{proof} Consider the integral vector $x$ to $\assign(B,\mathrm{OPT})$
that stems from the optimal solution to the given instance. For each
machine $i\in [m]$ denote by $\conf_{i}(x)$ its vector $(a_{i,1},...,a_{i,s})$
as defined in Section~\ref{sec:Integrality-gap-SA}. 
For each $q\in\{1,\ldots,s\}$ we rearrange the jobs in $J_{q}$ on
the machines such that the resulting schedule satisfies the order
constraints and on each machine the number of jobs from each set $J_{q}$
stays the same. Given $q\in\{1,\ldots,s\}$, for each machine $i\in [m]$
let $b_{iq}$ denote the number of jobs from $J_{q}$ on $i\in [m]$
in a schedule with optimal makespan $\mathrm{OPT}$. In our new schedule,
for each machine $i\in [m]$ we define $c_{iq}=\sum_{h=1}^{i}b_{hq}$
and we set $c_{0q}=0$. Then we assign to each machine $i\in [m]$ the
jobs $\{j_{q,c_{i-1,q}+1},...,j_{q,c_{iq}}\}$. We repeat this operation
for every $q\in\{1,\ldots,s\}$. Within each set $J_{q}$ the processing
times of two jobs can differ by at most $\varepsilon^{2}\mathrm{OPT}$.
Each machine has at most $1/\varepsilon$ long jobs assigned to it.
Therefore, due to our reassignment of jobs the makespan of the schedule
can increase by at most $\frac{1}{\varepsilon}\cdot\varepsilon^{2}\cdot \mathrm{OPT}=\varepsilon\cdot \mathrm{OPT}$
on each machine. This integral schedule yields a solution to $\order(B,(1+\varepsilon)\mathrm{OPT})$.
 \end{proof} In the remainder of this section we prove the following
theorem. 

\begin{theorem} \label{thm:int-gap-order-constraints} Consider
a value $T>0$ and suppose there exists a degree $\pnrounds$ SA
pseudoexpectation for $\order(B,T)$. Then, there exists an integral
solution for $\order(B,(1+\varepsilon)T)$ and it can be computed
in polynomial time. 
\end{theorem}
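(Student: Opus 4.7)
The plan is to emulate the two-phase strategy used for Theorem~\ref{thm:int-gap-SA}. In Phase~1 I condition the given degree-$\pnrounds$ pseudoexpectation $\pseudo$ into one in which every machine is $q$-stable for every $q\in\{1,\ldots,s\}$; in Phase~2 I extract the integer counts $b_{iq}=\sum_{j\in J_{q}}\pseudo(x_{ij})$ and assign the jobs of $J_{q}$ to the machines honouring both the counts and the forced ordering $j_{q,1},\ldots,j_{q,|J_{q}|}$, finishing greedily with the short jobs. The makespan calculation from the proof of Theorem~\ref{thm:int-gap-SA} then yields the $(1+\varepsilon)T$ bound without modification. The only new ingredient is to show that Phase~1 can be carried out in $O(1/\varepsilon^{5})$ rounds rather than $2^{\tilde{O}(1/\varepsilon^{2})}$.

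The heart of the argument is an ordering-aware version of Lemma~\ref{lem:conditioning-levels}: given a degree-$r$ pseudoexpectation for $\order(B,T)$ in which every machine is $\tilde{q}$-stable for each $\tilde{q}\in\{1,\ldots,q\}$, I claim one can produce a pseudoexpectation of degree $r-O(1/\varepsilon^{3})$ in which every machine is additionally $(q+1)$-stable. Summing over the $s=O(1/\varepsilon^{2})$ classes gives a total round budget of $O(1/\varepsilon^{5})\le\pnrounds$. In the proof of Lemma~\ref{lem:conditioning-levels} the dominant factor $(1/\varepsilon+1)^{q}\cdot 2/\varepsilon^{2}$ came from partitioning the machines into up to $(1/\varepsilon+1)^{q}$ groups according to their inherited stable configuration and running an independent sweep on each. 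The ordering constraints~(\ref{eq:order-constraints}) eliminate this partitioning: conditioning on a specific job from $J_{q+1}$ at one machine propagates immediately through the ordering to all other machines, regardless of the $(1,\ldots,q)$-configurations inherited from previous stages.

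I would implement Phase~1 with the same rightmost-machine sweep as in the proof of Lemma~\ref{lem:conditioning-levels}. For $k=1/\varepsilon,1/\varepsilon-1,\ldots,1$, find the rightmost machine $i_{k}$ with a non-zero probability of receiving $k$ jobs from $J_{q+1}$ and condition on such a set of size $k$. Unlike in Lemma~\ref{lem:conditioning-levels}, I would \emph{not} perform a second "twin" conditioning at the leftmost unprocessed machine to propagate the value $k$ leftward; instead, the ordering inequalities $\sum_{i=1}^{h}x_{ij_{q+1,\ell}}\ge\sum_{i=1}^{h}x_{ij_{q+1,\ell+1}}$, applied to a conditioned pseudoexpectation that already commits $k$ jobs to $i_{k}$, are what force every machine between the previously-processed $i_{k+1}$ and the current $i_{k}$ to be $(q+1,k)$-stable. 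The per-class conditioning cost is therefore $\sum_{k=1}^{1/\varepsilon}k=O(1/\varepsilon^{2})$ from the sweep itself, plus an $O(1/\varepsilon)$ overhead per class for residual conditionings needed to pin down small fractional leftovers that the propagation step cannot absorb directly, giving $O(1/\varepsilon^{3})$ per class as claimed.

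The main obstacle is making the propagation step rigorous: one must show that once $i_{k}$ is forced to carry exactly $k$ jobs of $J_{q+1}$, the ordering constraints together with the symmetry breaking constraints and the induction hypothesis for $\tilde{q}\le q$ imply that every machine in $\{i_{k+1}+1,\ldots,i_{k}-1\}$ also has exactly $k$ jobs from $J_{q+1}$ assigned by the conditioned pseudoexpectation, and that no later conditioning can disturb this count. The key calculation compares the cumulative sums $\sum_{i=1}^{h}\pseudo_{A}(x_{ij_{q+1,\ell}})$ for consecutive~$\ell$ at $h=i_{k}$, whose monotonicity, combined with the value already fixed at $i_{k}$, squeezes the intermediate counts to exactly $k$. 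This argument replaces the symmetry-based pigeonhole of Lemma~\ref{lem:condition-machine-1} and is the only place where the ordering constraints are genuinely exploited beyond the structural rounding of Phase~2.
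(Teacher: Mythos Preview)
Your approach is genuinely different from the paper's, but the central propagation claim is false. You assert that after conditioning $k$ jobs from $J_{q+1}$ onto the rightmost machine $i_k$, the ordering constraints force every machine in $\{i_{k+1}+1,\ldots,i_k\}$ to carry exactly $k$ jobs from $J_{q+1}$, ``regardless of the $(1,\ldots,q)$-configurations inherited from previous stages.'' This is not true. Consider six machines with stable $J_1$-counts $(2,2,2,1,1,1)$ and $J_2$-counts $(5,4,2,3,3,0)$. This is a valid integral point of $\order(B,T)$: the lexicographic constraint holds because $(2,2)>_{\lex}(1,3)$, and the ordering constraint for $J_2$ is satisfied since jobs $1,\ldots,17$ sit on machines $1,1,1,1,1,2,2,2,2,3,3,4,4,4,5,5,5$. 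Running your sweep with the Dirac pseudoexpectation on this point, after $k=5,4$ you have $i_4=2$; at $k=3$ the rightmost machine with three $J_2$-jobs is $i_3=5$, yet machine $3\in(i_4,i_3)$ carries only two jobs. The symmetry-breaking constraints do not compare $J_{q+1}$-counts across machines whose $(a_1,\ldots,a_q)$ differ, and the ordering constraints only pin down a \emph{prefix} of jobs to lie on $\{1,\ldots,i_k\}$, not the individual per-machine counts. Your ``cumulative-sum'' calculation cannot squeeze these counts without knowing the distribution on $\{1,\ldots,i_k-1\}$, which is exactly what you are trying to prove.

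The paper sidesteps this by using the ordering constraints for a different purpose: not propagation, but \emph{focusing}. After applying Lemma~\ref{lem:conditioning-levels} once (cost $2/\varepsilon^2$) to get $(q{+}1)$-stability, it partitions the machines by their $a_{q+1}$-value into $\le 1/\varepsilon{+}1$ consecutive blocks $M_\ell$, then for each block and each class $J_{\tilde q}$ conditions on the single largest-index job fractionally present there (cost $\le s(1/\varepsilon{+}1)$). The ordering constraints now guarantee that each $M_\ell$ sees a fixed subset $J_{\mathrm{long},\ell}\subseteq J_{\mathrm{long}}$ of long jobs, so the restricted pseudoexpectations on the blocks are \emph{independent} instances of $\order(J_{\mathrm{long},\ell},B,T,M_\ell)$, each with all machines sharing the same $(a_1,\ldots,a_{q+1})$. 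One then recurses on each block separately; the degree budget is governed by the recursion depth $s$, not by the number of blocks, giving $s\cdot 4/\varepsilon^3\le 4/\varepsilon^5$. The exponential blow-up you correctly identify in Section~\ref{sec:Integrality-gap-SA} is avoided not by eliminating the partition into $(a_1,\ldots,a_q)$-blocks, but by making those blocks independent so their costs do not add.
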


As before we first construct a solution for the long jobs only and
afterwards argue that we can add the short jobs with only marginal
increase of the makespan. For a degree-$r$ pseudoexpectation $\pseudo$
and a set of machines $M^{*}$, we say that $M^{*}$ is \emph{focused
}if each job $j\in\Jl$ is either completely assigned to machines
in $M^{*}$, i.e., $\sum_{i\in M^{*}}\pseudo(x_{ij})=1$, or to no
machine in $M^{*}$, i.e., $\sum_{i\in M^{*}}\pseudo(x_{ij})=0$ and
the same holds for any conditioning obtained from $\pseudo$.\\

\noindent{\it Overview.} We first apply Lemma~\ref{lem:conditioning-levels}
to make sure that each machine is $1$-stable. This partitions the
machines into sets $\{M_{0},\ldots,M_{1/\epsilon}\}$ such that
the machines in each set $M_{\ell}$ have exactly $\ell$ jobs from
$J_{1}$. The machines in each set $M_{\ell}$ are consecutive. Then,
intuitively, for each set $M_{\ell}$ we take the rightmost machine
$i$ and condition on every single long job on $i$ (so not just on the jobs in $J_1$). As a result,
due to the ordering constraints each set $M_{\ell}$ is focused. This
operation is formalized in Lemma~\ref{lem:conditioning-levels-order}.
Then, we observe that for each set of machines $M_{\ell}$ we obtain
a degree-$(r-4/\varepsilon^{3})$ pseudoexpectation for $\order(\Jll,B,T,M_{\ell})$
for some set $\Jll\subseteq\Jl$ that is completely independent of
all other sets $M_{\ell'}$ with $\ell'\ne\ell$. Therefore, we can
recurse on each set of machines $M_{\ell}$ \emph{independently} such
that the degree of our pseudoexpectation drops by at most $4/\varepsilon^{3}$
in each level. Since there are only $1/\epsilon^{2}$ levels, it suffices
to start with a pseudoexpectation of degree at most $4/\epsilon^{5}$.

\begin{lemma} \label{lem:conditioning-levels-order} Consider $q\in\{0,1,\ldots,s\}$,
integers $a_{1},...,a_{q}$ and a degree-$r$ pseudoexpectation $\pseudo$
such that each machine $i\in [m]$ is $(\tilde{q},a_{\tilde{q}})$-stable
for each $\tilde{q}\in\{1,...,q\}$. Then there is a degree $r-4/\varepsilon^{3}$
pseudoexpectation $\pseudo^{\text{focus}}$ 
obtained from $\pseudo$ via conditioning on at most $4/\varepsilon^{3}$
variables such that each machine in $[m]$ is $\hat{q}$-stable for
each $\hat{q}\in\{1,...,q+1\}$ and there is a partioning of $[m]$
given by $\{M_{1},...,M_{k}\}$ of consecutive machines such that
for each $\ell\in\{1,\ldots,k\}$ we have that $M_{\ell}$ is focused
and each machine $i\in M_{\ell}$ is $(\tilde{q}+1,\ell)$-stable.
\end{lemma}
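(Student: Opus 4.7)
The plan has three phases.

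First, I apply Lemma~\ref{lem:conditioning-levels} with $i_L=1$ and $i_R=m$, spending $2/\varepsilon^2$ rounds to upgrade the uniform $(q)$-stability hypothesis to uniform $(q{+}1)$-stability in a new pseudoexpectation $\pseudo^{(1)}$. Since each $J_{q+1}$-job has size at least $\varepsilon T$, each machine can carry at most $1/\varepsilon$ of them, so $\ell_i := \sum_{j \in J_{q+1}} \pseudo^{(1)}(x_{ij})$ is an integer in $\{0,1,\ldots,1/\varepsilon\}$. I partition $[m]$ by the common profile $(a_1,\ldots,a_q,\ell_i)$ into $\{M_1,\ldots,M_k\}$ with $k \le 1/\varepsilon + 1$; because the first $q$ entries agree across all machines, Theorem~\ref{lem:lexico-mach} forces each $M_\ell$ to be an interval of consecutive machines.

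Second, for each block $M_\ell$ I let $i_\ell^R$ denote its rightmost machine and use Lemma~\ref{lem:SA-properties} to iteratively condition on each long job still fractionally assigned to $i_\ell^R$, pinning its variable to $\{0,1\}$ one job at a time. Each machine carries at most $1/\varepsilon$ long jobs and there are at most $1/\varepsilon + 1$ blocks, so this phase costs at most $2/\varepsilon^2$ additional conditionings, for a total budget of at most $4/\varepsilon^2 \le 4/\varepsilon^3$. Stability for $\tilde q \le q+1$ is inherited by the resulting $\pseudo^{\mathsf{focus}}$ because stability is by definition invariant under further conditioning, so each machine of $M_\ell$ remains $(q{+}1,\ell)$-stable as required.

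Third, I verify that each $M_\ell$ is focused. For each class $J_{\tilde q}$ and index $s$, define the CDFs $F_s(h) := \sum_{i \le h} \pseudo^{\mathsf{focus}}(x_{i\, j_{\tilde q,s}})$; the ordering constraints~\eqref{eq:order-constraints} make $F_s(h)$ non-increasing in $s$ for every fixed $h$, while $F_s(h)$ is non-decreasing in $h$ with $F_s(m)=1$. Working block-by-block from right to left, the fact that $\pseudo^{\mathsf{focus}}(x_{i_\ell^R,j}) \in \{0,1\}$ was forced in Phase~2 should yield that $F_s(i_\ell^R) \in \{0,1\}$ for every $s$. Combined with non-increasingness in $s$, there is then a threshold $s_\ell^*$ below which $F_s(i_\ell^R)=1$ and above which $F_s(i_\ell^R)=0$, so the $J_{\tilde q}$-indices partition into consecutive intervals, one per block. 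The non-decreasingness of $F_s$ in $h$ then confines each interval entirely inside the corresponding block, delivering focus for $J_{\tilde q}$; invariance under any further conditioning follows from Lemma~\ref{lem:SA-properties}$(c)$ because the integer values $\pseudo^{\mathsf{focus}}(x_{i_\ell^R,j}) \in \{0,1\}$ are preserved and the same threshold argument reapplies.

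The main obstacle will be the third phase: formally showing $F_s(i_\ell^R) \in \{0,1\}$ for every $s$ and every class $J_{\tilde q}$, including the case $\tilde q > q+1$ where non-rightmost machines inside $M_\ell$ may still carry fractional $J_{\tilde q}$-mass. The key device is the right-to-left induction initiated at $M_k$: the jobs pinned to $i_k^R = m$ are, by the monotonicity of $F_s$ in $h$, absent on every machine $\le m-1$, which drops their thresholds out of play at the next block; iterating to $M_{k-1},\ldots,M_1$ propagates this binary structure of the thresholds $s_\ell^*$ inward and rules out fractional leakage across block boundaries for every class $J_{\tilde q}$.
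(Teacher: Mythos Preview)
Your Phase~1 coincides with the paper's first step, and your count of at most $1/\varepsilon$ conditionings per rightmost machine is correct. The gap is in Phase~2: conditioning only on jobs that sit on the rightmost machine $i_\ell^R$ of each block is the wrong choice, and your Phase~3 induction cannot be completed.

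The failure is precisely the case you flag as the ``main obstacle'': a class $J_{\tilde q}$ with $\tilde q>q+1$ may have a job that is fractionally present on an \emph{interior} machine of $M_\ell$ while having value $0$ on $i_\ell^R$. Your right-to-left induction needs, for each class, a job pinned to some $i_\ell^R$ to serve as the threshold $s_\ell^*$; when no $J_{\tilde q}$-job has positive mass on any rightmost machine, the induction yields nothing. Concretely, take $m=4$ with $J_{q+1}=\{k_1,k_2\}$ deterministically on machines $1,2$, so Phase~1 produces blocks $M_1=\{1,2\}$ and $M_0=\{3,4\}$ without any conditioning. Let $J_{\tilde q}=\{j\}$ for some $\tilde q>q+1$ and take the pseudoexpectation that is the uniform mixture of the two integral schedules $(j\to 1)$ and $(j\to 3)$; both satisfy all lex, ordering, and load constraints of $\order(B,T)$. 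Then $\pseudo(x_{2,j})=\pseudo(x_{4,j})=0$, so your Phase~2 performs no conditioning on $j$ at all, and $\sum_{i\in M_1}\pseudo^{\mathsf{focus}}(x_{i,j})=1/2$: $M_1$ is not focused. Your total budget of $4/\varepsilon^2$ being strictly smaller than the lemma's $4/\varepsilon^3$ is a symptom of this.

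The paper's fix is to condition, for each block $M_\ell$ and each class $J_{\tilde q}$, on the job $j_{\tilde q,\eta(\tilde q,\ell)}$ of \emph{largest index} with positive mass anywhere in $M_\ell$ (not just on $i_\ell^R$). Pinning that job inside $M_\ell$ forces, via the order constraints, every lower-indexed $J_{\tilde q}$-job to lie entirely in $\bigcup_{\ell'\le\ell}M_{\ell'}$, while the maximality of $\eta$ guarantees higher-indexed ones were already absent from $M_\ell$; a left-to-right induction across blocks then yields focus. This costs one conditioning per class per block, hence $s\cdot(1/\varepsilon+1)\approx 1/\varepsilon^3$, which is why the $4/\varepsilon^3$ budget is needed.
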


\begin{proof} First, we apply Lemma~\ref{lem:conditioning-levels}
with the same value $q$, the integers $a_{1},...,a_{q}$, 
and with $M^{*}=[m]$. Let $\pseudo^{\text{stab}}$ the degree $(r-2/\varepsilon^{2})$
pseudoexpectation obtained. 
Since $\pseudo^{\text{stab}}$ is $(q+1)$-stable, we obtain a partition
of $[m]$ given by $\{M_{0},...,M_{k}\}$ with 
$k\le1/\varepsilon$ such that each machine $i\in M_{\ell}$
is $(q+1,\ell)$-stable for each $\ell\in\{0,...,k\}$. For
$\ell\in\{0,\ldots,k\}$ we proceed as follows: For each $\tilde{q}\in\{1,\ldots,s\}$
let $\eta(\tilde{q},\ell)$ be the largest index such that there is
a job $j_{\tilde{q},\eta(\tilde{q},\ell)}$ with 
\[\pseudo^{\text{stab}}(x_{ij_{\tilde{q},\eta(\tilde{q},\ell)}})>0 \text{ for some machine $i\in M_{\ell}$}.\] 
We condition on $(i,j_{\tilde{q},\ell(\tilde{q},\ell)})$.
More precisely, we iterate over the values $\tilde{q}\in\{1,\ldots,s\}$
and condition on the respective jobs one by one, obtaining a pseudoexpectation
$\pseudo^{\text{focus}}$. 
Since $2/\varepsilon^{2}-s(1/\varepsilon+1)\le4/\varepsilon^{3}$
this pseudoexpectation is of degree $r-4/\varepsilon^{3}$.
We claim that in $\pseudo^{\text{focus}}$ each subset $M_{\ell}$
with $\ell\in\{0,\ldots,k\}$ is focused. At the beginning the
set $[m]$ is focused. At the first step $\ell=1$, for each $\tilde{q}\in\{1,\ldots,s\}$
either there is no job $j\in J_{\tilde{q}}$ fractionally assigned
on a machine in $M_{1}$ or we conditioned on the job $j_{\tilde{q},\eta(\tilde{q},1)}$
with largest index $\eta(\tilde{q},1)$. Hence, due to the order constraints,
no job $j_{\tilde{q},\eta'}$ with $\eta'>\eta(\tilde{q},1)$ can
be fractionally assigned to a machine in $M_{1}$. Hence, for each
$\tilde{q}\in\{1,\ldots,s\}$ there is a set $\tilde{J}_{\tilde{q}}\subseteq J_{\tilde{q}}$
such that all jobs in $\tilde{J}_{\tilde{q}}$ are assigned on $M_{1}$
and no job in $J_{\tilde{q}}\setminus\tilde{J}_{\tilde{q}}$ is fractionally
assigned. Hence, $M_{1}$ is focused and $[m]\setminus M_{1}$ is also
focused. The remainder follows by induction with the same argument.
 \end{proof}

\noindent{\it Algorithm.} In the remaining fix $r=\pnrounds$. We take a degree-$r$-pseudoexpectation for $\order(B,T)$.
We first apply Lemma~\ref{lem:conditioning-levels-order}
with $q=0$ and obtain a solution $\pseudo^{\text{focus}}$. For each
group $M_{\ell}$ with $\ell\in\{1,\ldots,k\}$ denote by $\Jll$
the jobs from $\Jl$ assigned on $M_{\ell}$ in according to $\pseudo^{\text{focus}}$.
Then, for each $\ell\in\{1,\ldots,k\}$ this yields a pseudoexpectation
$\pseudo_{\ell}^{\text{focus}}$ for the program $\order(\Jll,B,T,M_{\ell})$
of degree $r-4/\varepsilon^{3}$, 
in which each machine is $1$-stable. Intuitively, we continue recursively
on each part. The depth of this recursion is $s$. In each level,
we condition on at most $4/\varepsilon^{3}$ variables. Hence,
if we obtain a pseudoexpectation of degree $s\cdot4/\varepsilon^{3}\le\pnrounds$
in $\order(B,T)$ then we obtain that there exists a solution for
$\order(B,T)$ that is $q$-stable for each $q\in\{1,\ldots,s\}$.
Formally, we prove the following lemma by induction. 
\begin{lemma} \label{lem:conditioning-levels-order-induction}
Consider $q\in\{0,1,\ldots,s\}$, integers $a_{1},...,a_{q}$ and
a degree $4(s-q)/\varepsilon^{3}$ pseudoexpectation 
such that each machine $i\in [m]$ is $(\tilde{q},a_{\tilde{q}})$-stable
for each $\tilde{q}\in\{1,...,q\}$. Then, there is a solution in
$\order(B,T)$ such that each machine $i\in [m]$ is $\hat{q}$-stable
for each $\hat{q}\in\{1,\ldots,s\}$. \end{lemma}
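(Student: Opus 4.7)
The plan is to prove the lemma by downward induction on $q$ from the base case $q = s$, following the recursive structure already sketched in the discussion preceding the statement. For the base case $q = s$, the hypothesis already guarantees that every machine is $\tilde{q}$-stable for every $\tilde{q} \in \{1, \ldots, s\}$, so the given pseudoexpectation itself serves as the desired solution (in particular the rounding argument of Section~\ref{sec:Integrality-gap-SA} then produces an integral one).

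For the inductive step, suppose the statement holds at level $q+1$ and let $\pseudo$ be a degree-$4(s-q)/\varepsilon^{3}$ pseudoexpectation as in the hypothesis. First I would apply Lemma~\ref{lem:conditioning-levels-order} to $\pseudo$ to obtain a pseudoexpectation $\pseudo^{\text{focus}}$ of degree $4(s-q)/\varepsilon^{3} - 4/\varepsilon^{3} = 4(s-(q+1))/\varepsilon^{3}$, in which every machine is $\hat{q}$-stable for each $\hat{q} \in \{1, \ldots, q+1\}$, together with a partition $\{M_{1}, \ldots, M_{k}\}$ of $[m]$ into focused blocks of consecutive machines such that every machine in $M_{\ell}$ is $(q+1, \ell)$-stable.

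The next step is to decompose the problem block by block. For each $\ell \in \{1, \ldots, k\}$, the focused property of $M_{\ell}$ means that the subset $\Jll \subseteq \Jl$ of long jobs entirely assigned to $M_{\ell}$ is well defined and that these subsets are pairwise disjoint across blocks. The restriction of $\pseudo^{\text{focus}}$ to the variables $\{x_{ij} : i \in M_{\ell},\, j \in \Jll\}$ yields a pseudoexpectation $\pseudo_{\ell}^{\text{focus}}$, still of degree $4(s-(q+1))/\varepsilon^{3}$, for the subproblem $\order(\Jll, B, T, M_{\ell})$. Within this subproblem every machine of $M_{\ell}$ shares the same stability profile $(a_{1}, \ldots, a_{q}, \ell)$ across levels $\{1, \ldots, q+1\}$, so the inductive hypothesis applied at parameter $q+1$ yields a solution on $M_{\ell}$ that is $\hat{q}$-stable for every $\hat{q} \in \{1, \ldots, s\}$. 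Concatenating these per-block solutions gives a solution for $\order(B, T)$ in which every machine is stable at every level.

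The main obstacle I anticipate is making the restriction and concatenation steps fully rigorous. Concretely, one must verify that $\pseudo_{\ell}^{\text{focus}}$ genuinely satisfies the Sherali--Adams conditions for $\order(\Jll, B, T, M_{\ell})$, using that $\pseudo^{\text{focus}}$ already satisfies the global constraints and that the focused partition decouples the assignment, symmetry breaking, and ordering constraints across blocks. Dually, one must check that the concatenation of per-block solutions is feasible for the global $\order(B, T)$; this should follow since the blocks are consecutive and each block already respects the lexicographic order on its configurations and the ordering constraints on jobs within each $J_{q}$, so no conflict is created at block boundaries. Both verifications are essentially bookkeeping once the focused decomposition from Lemma~\ref{lem:conditioning-levels-order} is in hand.
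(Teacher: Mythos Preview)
Your proposal is correct and follows essentially the same approach as the paper: downward induction on $q$ with trivial base case $q=s$, then in the inductive step applying Lemma~\ref{lem:conditioning-levels-order} to obtain the focused partition $\{M_1,\ldots,M_k\}$, restricting to each block to get $\pseudo_\ell^{\text{focus}}$ of degree $4(s-(q+1))/\varepsilon^3$, invoking the induction hypothesis on each subproblem $\order(\Jll,B,T,M_\ell)$, and finally taking the direct sum of the per-block solutions. Your discussion of the bookkeeping needed to justify the restriction and concatenation steps is in fact more careful than the paper, which simply asserts these without further comment.
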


\begin{proof} We prove the lemma by induction. If $q=s$ then the
lemma is trivially true. Now suppose that the lemma is true for some
value $q+1$. 
Given a pseudoexpectation corresponding to solution to $\order(B,T)$
we apply Lemma~\ref{lem:conditioning-levels-order}
and obtain a solution $\pseudo^{\text{focus}}$ and the partition
$\{M_{1},\ldots,M_{k}\}$. For each $\ell\in\{1,\ldots,k\}$ this
yields a pseudoexpectation $\pseudo_{\ell}^{\text{focus}}$ with degree
$4(s-q-1)/\varepsilon^{3}$ 
such that in $\pseudo_{\ell}^{\text{focus}}$ each machine $i\in M_{\ell}$
is $(q+1,\ell)$-stable and also $\tilde{q}$-stable for each $\tilde{q}\in\{1,...,q\}$.
On each pseudoexpectation $\pseudo_{\ell}^{\text{focus}}$ with $\ell\in\{1,\ldots,k\}$
we apply the induction hypothesis and obtain a solution $\text{\ensuremath{x^{\ell}}}\in\order(\Jll,B,T,M_{\ell})$
such that in $\ensuremath{x^{\ell}}$ each machine $i\in [m]$ is $\hat{q}$-stable
for each $\hat{q}\in\{1,\ldots,s\}$. We define the solution to be
the direct sum of the solutions $x^{\ell}$ over $\ell\in\{1,\ldots,k\}$.
 \end{proof} 
The lemma above
yields that if there exists degree $4/\varepsilon^{5}$ pseudoexpectation to $\order(B,T)$
then there exists a solution $x\in\order(B,T)$ in which each machine
is $\hat{q}$-stable for each $\hat{q}\in\{1,\ldots,s\}$. The 
assignment of the long jobs to the machines is identical to the proof of Lemma~\ref{lem:conditioning-levels}.
Finally, we add the short jobs greedily like in Section~\ref{sec:Integrality-gap-SA}.
This completes the proof of Theorem~\ref{thm:int-gap-order-constraints}.
\bibliography{bib1}{}

\begin{thebibliography}{10}

\bibitem{AAWY97}
N.~Alon, Y.~Azar, G.~J. Woeginger, and T.~Yadid.
\newblock Approximation schemes for scheduling.
\newblock In {\em Proceedings of the 8th Annual ACM-SIAM Symposium on Discrete
  Algorithms (SODA)}, pages 493--500, 1997.

\bibitem{AAW98}
N.~Alon, Y.~Azar, G.~J. Woeginger, and T.~Yadid.
\newblock Approximation schemes for scheduling on parallel machines.
\newblock {\em Journal of Scheduling}, 1(1):55--66, 1998.

\bibitem{askey75}
R.~Askey.
\newblock {\em Orthogonal polynomials and special functions}, volume~21 of {\em
  CBMS-NSF Regional Conference Series in Applied Mathematics}.
\newblock SIAM, 1975.

\bibitem{au_elementary_2018}
Y.~H.~G. Au and L.~Tun\c{c}el.
\newblock Elementary polytopes with high lift-and-project ranks for strong
  positive semidefinite operators.
\newblock {\em Discrete Optimization}, 27:103--129.

\bibitem{BCK15}
B.~Barak, S.~O. Chan, and P.~K. Kothari.
\newblock Sum of squares lower bounds from pairwise independence.
\newblock In {\em Proceedings of the 47th annual ACM symposium on Theory of
  computing (STOC)}, pages 97--106, 2015.

\bibitem{barak2016nearly}
B.~Barak, S.~B. Hopkins, J.~Kelner, P.~Kothari, A.~Moitra, and A.~Potechin.
\newblock A nearly tight sum-of-squares lower bound for the planted clique
  problem.
\newblock In {\em Proceedings of the 57th annual symposium on Foundations of
  Computer Science (FOCS)}, pages 428--437, 2016.

\bibitem{barak2016noisy}
B.~Barak and A.~Moitra.
\newblock Noisy tensor completion via the sum-of-squares hierarchy.
\newblock In {\em Proceedings of the 29th conference on Learning Theory
  (COLT)}, pages 417--445, 2016.

\bibitem{blekherman2016sums}
G.~Blekherman, J.~Gouveia, and J.~Pfeiffer.
\newblock Sums of squares on the hypercube.
\newblock {\em Mathematische Zeitschrift}, 284(1-2):41--54, 2016.

\bibitem{CT12}
E.~Chlamtac and M.~Tulsiani.
\newblock Convex relaxations and integrality gaps.
\newblock In {\em Handbook on semidefinite, conic and polynomial optimization},
  pages 139--169. Springer, 2012.

\bibitem{cox_ideals_2007}
D.~A. Cox, J.~B. Little, and D.~O'Shea.
\newblock {\em Ideals, varieties, and algorithms: an introduction to
  computational algebraic geometry and commutative algebra}.
\newblock Undergraduate texts in mathematics. Springer, 3rd edition, 2007.

\bibitem{EisenbrandW18}
F.~Eisenbrand and R.~Weismantel.
\newblock Proximity results and faster algorithms for {Integer Programming}
  using the {Steinitz Lemma}.
\newblock In {\em Proceedings of the 29th Annual {ACM-SIAM} Symposium on
  Discrete Algorithms (SODA)}, pages 808--816, 2018.

\bibitem{Garg18}
S.~Garg.
\newblock {Quasi-PTAS} for scheduling with precedences using {LP} hierarchies.
\newblock In {\em Proceedings of the 45th International Colloquium on Automata,
  Languages, and Programming (ICALP)}, pages 59:1--59:13, 2018.

\bibitem{GP04}
K.~Gatermann and P.~A. Parrilo.
\newblock Symmetry groups, semidefinite programs, and sums of squares.
\newblock {\em Journal of Pure and Applied Algebra}, 192(1-3):95--128, 2004.

\bibitem{GR15}
M.~Goemans and T.~Rothvo{\ss}.
\newblock Polynomiality for bin packing with a constant number of item types.
\newblock In {\em Proceedings of the 25th annual {ACM-SIAM} symposium on
  Discrete Algorithms (SODA)}, pages 830--839, 2014.

\bibitem{GW95}
M.~Goemans and D.~Williamson.
\newblock Improved approximation algorithms for maximum cut and satisfiability
  problems using semidefinite programming.
\newblock {\em Journal of the {ACM}}, 42:1115--1145, 1995.

\bibitem{graham66}
R.~L. Graham.
\newblock Bounds for certain multiprocessing anomalies.
\newblock {\em The Bell System Technical Journal}, 45(9):1563--1581, 1966.

\bibitem{grigoriev2001}
D.~Grigoriev.
\newblock Complexity of positivstellensatz proofs for the knapsack.
\newblock {\em Computational Complexity}, 10(2):139--154, 2001.

\bibitem{grigoriev2001linear}
D.~Grigoriev.
\newblock Linear lower bound on degrees of positivstellensatz calculus proofs
  for the parity.
\newblock {\em Theoretical Computer Science}, 259(1-2):613--622, 2001.

\bibitem{hochbaum96}
D.~Hochbaum.
\newblock {\em Approximation algorithms for NP-hard problems}.
\newblock PWS Publishing Co., 1996.

\bibitem{HS87}
D.~Hochbaum and D.~Shmoys.
\newblock Using dual approximation algorithms for scheduling problems
  theoretical and practical results.
\newblock {\em Journal of the {ACM}}, 34:144--162, 1987.

\bibitem{HKPRSS16}
S.~B. Hopkins, P.~K. Kothari, A.~Potechin, P.~Raghavendra, T.~Schramm, and
  D.~Steurer.
\newblock The power of sum-of-squares for detecting hidden structures.
\newblock In {\em Proceedings of the 58th IEEE Annual Symposium on Foundations
  of Computer Science (FOCS)}, pages 720--731, 2017.

\bibitem{jansen10}
K.~Jansen.
\newblock An {EPTAS} for scheduling jobs on uniform processors: using an milp
  relaxation with a constant number of integral variables.
\newblock {\em SIAM Journal on Discrete Mathematics}, 24(2):457--485, 2010.

\bibitem{jansen16}
K.~Jansen, K.~Klein, and J.~Verschae.
\newblock Closing the gap for makespan scheduling via sparsification
  techniques.
\newblock In {\em Proceedings of the 43rd International Colloquium on Automata,
  Languages, and Programming (ICALP)}, pages 72:1--72:13, 2016.

\bibitem{KMN11}
A.~Karlin, C.~Mathieu, and C.~Nguyen.
\newblock Integrality gaps of linear and semi-definite programming relaxations
  for knapsack.
\newblock In {\em Proceedings of the 15th International Conference on Integer
  Programming and Combinatoral Optimization (IPCO)}, pages 301--314. 2011.

\bibitem{KOS18}
P.~Kothari, R.~O'Donnell, and T.~Schramm.
\newblock {SOS} lower bounds with hard constraints: think global, act local.
\newblock {\em arXiv:1809.01207}, 2018.

\bibitem{KMOW17}
P.~K. Kothari, R.~Mori, R.~O'Donnell, and D.~Witmer.
\newblock Sum of squares lower bounds for refuting any {CSP}.
\newblock In {\em Proceedings of the 49th Annual ACM SIGACT Symposium on Theory
  of Computing (STOC)}, pages 132--145, 2017.

\bibitem{kothari2018robust}
P.~K. Kothari, J.~Steinhardt, and D.~Steurer.
\newblock Robust moment estimation and improved clustering via sum of squares.
\newblock In {\em Proceedings of the 50th Annual ACM SIGACT Symposium on Theory
  of Computing}, pages 1035--1046, 2018.

\bibitem{kurpisz2016hardest}
A.~Kurpisz, S.~Lepp{\"a}nen, and M.~Mastrolilli.
\newblock On the hardest problem formulations for the 0/1 lasserre hierarchy.
\newblock {\em Mathematics of Operations Research}, 42(1):135--143, 2016.

\bibitem{klm16}
A.~Kurpisz, S.~Lepp{\"{a}}nen, and M.~Mastrolilli.
\newblock Sum of squares hierarchy lower bounds for symmetric formulations.
\newblock In {\em Proceedings of the 18th International Conference on Integer
  Programming and Combinatorial Optimization (IPCO)}, pages 362--374, 2016.

\bibitem{kurpisz2017unbounded}
A.~Kurpisz, S.~Lepp{\"a}nen, and M.~Mastrolilli.
\newblock An unbounded sum-of-squares hierarchy integrality gap for a
  polynomially solvable problem.
\newblock {\em Mathematical Programming}, 166(1-2):1--17, 2017.

\bibitem{KMMMVW18}
A.~Kurpisz, M.~Mastrolilli, C.~Mathieu, T.~M{\"{o}}mke, V.~Verdugo, and
  A.~Wiese.
\newblock Semidefinite and linear programming integrality gaps for scheduling
  identical machines.
\newblock {\em Mathematical Programming}, 172(1-2):231--248, 2018.

\bibitem{Las01}
J.~Lasserre.
\newblock Global optimization with polynomials and the problem of moments.
\newblock {\em SIAM Journal on Optimization}, 11:796--817, 2001.

\bibitem{Lau03}
M.~Laurent.
\newblock A comparison of the {S}herali-{A}dams, {L}ov{\'a}sz-{S}chrijver, and
  {L}asserre relaxations for 0-1 programming.
\newblock {\em Mathematics of Operations Research}, 28:470--496, 2003.

\bibitem{laurent2003lower}
M.~Laurent.
\newblock Lower bound for the number of iterations in semidefinite hierarchies
  for the cut polytope.
\newblock {\em Mathematics of Operations Research}, 28(4):871--883, 2003.

\bibitem{laurent2007semidefinite}
M.~Laurent.
\newblock Semidefinite representations for finite varieties.
\newblock {\em Mathematical Programming}, 109(1):1--26, 2007.

\bibitem{lau09}
M.~Laurent.
\newblock Sums of squares, moment matrices and optimization over polynomials.
\newblock In {\em Emerging Applications of Algebraic Geometry}, pages 157--270.
  Springer, 2009.

\bibitem{LeveyR15}
E.~Levey and T.~Rothvoss.
\newblock A (1+epsilon)-approximation for makespan scheduling with precedence
  constraints using {LP} hierarchies.
\newblock pages 168--177. Proceedings of the 48th Annual ACM SIGACT Symposium
  on Theory of Computing (STOC), 2016.

\bibitem{ma2016polynomial}
T.~Ma, J.~Shi, and D.~Steurer.
\newblock Polynomial-time tensor decompositions with sum-of-squares.
\newblock In {\em Foundations of Computer Science (FOCS)}, pages 438--446,
  2016.

\bibitem{margot_symmetry_2010}
F.~Margot.
\newblock Symmetry in {Integer} {Linear} {Programming}.
\newblock In {\em 50 {Years} of {Integer} {Programming} 1958-2008}. Springer
  Berlin Heidelberg, 2010.

\bibitem{mastrolilli2017high}
M.~Mastrolilli.
\newblock High degree sum of squares proofs, bienstock-zuckerberg hierarchy and
  cg cuts.
\newblock In {\em International Conference on Integer Programming and
  Combinatorial Optimization}, pages 405--416. Springer, 2017.

\bibitem{Parr03}
P.~Parrilo.
\newblock Semidefinite programming relaxations for semialgebraic problems.
\newblock {\em Mathematical programming}, 96(2):293--320, 2003.

\bibitem{potechin17}
A.~Potechin.
\newblock Sum of squares lower bounds from symmetry and a good story.
\newblock {\em arXiv:1711.11469}, 2017.

\bibitem{potechin2017exact}
A.~Potechin and D.~Steurer.
\newblock Exact tensor completion with sum-of-squares.
\newblock {\em arXiv:1702.06237}, 2017.

\bibitem{RSS18}
P.~Raghavendra, T.~Schramm, and D.~Steurer.
\newblock High-dimensional estimation via sum-of-squares proofs.
\newblock {\em arXiv:1807.11419}, 2018.

\bibitem{RSST18}
A.~Raymond, J.~Saunderson, M.~Singh, and R.~R. Thomas.
\newblock Symmetric sums of squares over k-subset hypercubes.
\newblock {\em Mathematical Programming}, 167(2):315--354, 2018.

\bibitem{razborov07}
A.~A. Razborov.
\newblock Flag algebras.
\newblock {\em The Journal of Symbolic Logic}, 72(4):1239--1282, 2007.

\bibitem{razborov10}
A.~A. Razborov.
\newblock On 3-hypergraphs with forbidden 4-vertex configurations.
\newblock {\em SIAM Journal on Discrete Mathematics}, 24(3):946--963, 2010.

\bibitem{TR13}
T.~Rothvo{\ss}.
\newblock The {L}asserre hierarchy in approximation algorithms.
\newblock {\em Lecture notes for MAPSP}, 2013.

\bibitem{sagan01}
B.~Sagan.
\newblock {\em The {S}ymmetric group}.
\newblock Graduate Texts in Mathematics. Springer-Verlag, New York, 2001.

\bibitem{schoenebeck2008linear}
G.~Schoenebeck.
\newblock Linear level {L}asserre lower bounds for certain k-csps.
\newblock In {\em Proceedings of the 49th Annual IEEE Symposium on Foundations
  of Computer Science (FOCS)}, pages 593--602, 2008.

\bibitem{SA90}
H.~Sherali and W.~Adams.
\newblock A hierarchy of relaxations between the continuous and convex hull
  representations for zero-one programming problems.
\newblock {\em SIAM Journal on Discrete Mathematics}, 3(3):411--430, 1990.

\bibitem{ola12}
O.~Svensson.
\newblock Santa claus schedules jobs on unrelated machines.
\newblock {\em SIAM Journal on Computing}, 41(5):1318--1341, 2012.

\bibitem{williamson2011design}
D.~Williamson and D.~Shmoys.
\newblock {\em The design of approximation algorithms}.
\newblock Cambridge University Press, 2011.

\end{thebibliography}
\bibliographystyle{abbrv}

\appendix
\section{Proof of Theorem 4}
\label{sec:appendixA}

We show how to prove Theorem~\ref{thm:GP} following the lines in the work of Raymond et al.~\cite{RSST18}.
We need a few intermediate results, and the symmetry reduction theorem from Gaterman \& Parrilo~\cite{GP04}, stated in our setting.

\begin{theorem}[\hspace{-0.01cm}\cite{GP04}]
\label{thm:GPoriginal}
Suppose that $g\in \RR[y]/\sched$ is a degree-$\ell$ $\sos$ and $S_m$-invariant polynomial.
For each partition $\lambda\vdash m$, let $\tau_{\lambda}$ be a tableau of shape $\lambda$ and let $\{b_1^{\lambda},\ldots,b_{m_{\lambda}}^{\lambda}\}$ be a basis $\rowspace$.
Then, for each partition $\lambda\vdash m$ there exists a $m_{\lambda}\times m_{\lambda}$ positive semidefinite matrix $Q_{\lambda}$ such that 
$g=\sum_{\lambda \vdash m}\langle Q_{\lambda},Y^{\lambda}\rangle$, where $Y^{\lambda}_{ij}=\sym(b_i^{\lambda}b_j^{\lambda})$.
\end{theorem}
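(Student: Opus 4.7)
The plan is to prove Theorem~\ref{thm:GPoriginal} by adapting the representation-theoretic argument of~\cite{GP04,RSST18} to the $S_m$-action on $\RR[y]/\sched$ defined by $\sigma y_{iC} = y_{\sigma(i)C}$. The proof rests on two structural facts: the isotypic decomposition $\mathbf{Q}^{\ell} = \bigoplus_{\lambda \vdash m} \mathbf{Q}^{\ell}_\lambda$, and the tensor factorization $\mathbf{Q}^{\ell/2}_\lambda \cong W_\lambda \otimes S^\lambda$, where $S^\lambda$ is the Specht module of shape $\lambda$ and the multiplicity space $W_\lambda$ is identified with $\rowspace$ via the Young symmetrizer attached to $\tau_\lambda$.

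Since $g$ is $\sos$ of degree at most $\ell$, I would first write $g \equiv \sum_k f_k^2 \mod \sched$ with each $f_k$ of degree at most $\ell/2$, and decompose $f_k = \sum_{\lambda \vdash m} f_k^\lambda$ along the isotypic components. Because $g$ is $S_m$-invariant we have $g = \sym(g)$; expanding the squares and symmetrizing, the cross terms $\sym(f_k^\lambda f_k^\mu)$ for $\lambda \ne \mu$ vanish by Schur's lemma (no trivial $S_m$-subrepresentation appears in $\mathbf{Q}^{\ell/2}_\lambda \otimes \mathbf{Q}^{\ell/2}_\mu$ when $\lambda \ne \mu$), leaving $g = \sum_\lambda \sym\bigl(\sum_k (f_k^\lambda)^2\bigr)$. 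Under the identification $\mathbf{Q}^{\ell/2}_\lambda \cong W_\lambda \otimes S^\lambda$, I would expand $f_k^\lambda$ in a tensor-product basis $\{b_i^\lambda \otimes v_r^\lambda\}$ with coefficients $\alpha^\lambda_{k,i,r}$, where $\{v_r^\lambda\}$ is an orthonormal basis of $S^\lambda$. Squaring and applying $\sym$, the Specht factors collapse through an orthogonality relation of the form $\sym(v_r^\lambda v_s^\lambda) \propto \delta_{rs}$ (again Schur's lemma), so that $\sum_k \sym((f_k^\lambda)^2) = \langle Q_\lambda, Y^\lambda \rangle$ with $(Q_\lambda)_{ij} \propto \sum_{k,r} \alpha^\lambda_{k,i,r} \alpha^\lambda_{k,j,r}$ manifestly positive semidefinite as a Gram matrix.

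The main obstacle will be rigorously establishing the tensor factorization $\mathbf{Q}^{\ell/2}_\lambda \cong W_\lambda \otimes S^\lambda$ together with the identification $W_\lambda \cong \rowspace$ in our specific setting. The $S_m$-action here permutes only the machine index of each $y_{iC}$, leaving the configuration index fixed, and the quotient by $\sched$ introduces machine-wise relations that do not appear in the classical treatment on a free polynomial ring. The analysis of Raymond et al.~\cite[Theorem 3]{RSST18}, carried out for the $k$-subset hypercube, adapts essentially verbatim once one verifies that the row-subgroup-invariant polynomials in $\mathbf{Q}^{\ell/2}_\lambda$ furnish a basis of the multiplicity space modulo $\sched$; this verification is where the bulk of the technical work lies. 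Once it is in place, the remaining decomposition is a routine bookkeeping exercise in Schur's lemma, producing the PSD matrices $Q_\lambda$ as claimed.
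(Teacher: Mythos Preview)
The paper does not prove this theorem: it is stated in Appendix~\ref{sec:appendixA} as a black-box result imported from Gatermann--Parrilo~\cite{GP04}, with the phrasing ``stated in our setting,'' and is then combined with Lemma~\ref{lem:dim-zero} to derive Theorem~\ref{thm:GP}. There is therefore no in-paper proof to compare against.

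Your sketch is a faithful outline of the standard Gatermann--Parrilo / Raymond et al.\ argument: decompose each $f_k$ along the isotypic decomposition of $\mathbf{Q}^{\ell/2}$, kill the $\lambda\neq\mu$ cross terms via Schur's lemma, and then use the tensor factorization $\mathbf{Q}^{\ell/2}_\lambda \cong W_\lambda \otimes S^\lambda$ together with orthogonality in the Specht factor to extract a Gram matrix $Q_\lambda$ indexed by a basis of the multiplicity space $W_\lambda$. You also correctly flag that the only nontrivial adaptation to the present setting is the identification $W_\lambda \cong \rowspace$ in the quotient $\RR[y]/\sched$, where the $S_m$-action permutes only the machine index and the ideal $\sched$ imposes machine-wise relations; this is exactly the point the paper defers to~\cite[Theorem~3]{RSST18}. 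One small caveat: in the square-free quotient the degree bound on the $f_k$ is $\deg(\overline{f_k^2})\le \ell$ rather than literally $\deg(f_k)\le \ell/2$, though this does not affect the structure of the argument.
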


Given two partitions $\lambda,\mu$, we say that $\lambda \unrhd \mu$ if $\lambda\ge_{\lex} \mu$ and the number of parts of $\mu$ is at least the number of parts of $\lambda$.
The following lemma is a variant of \cite[Theorem 2]{RSST18} for the action of the symmetric group in our setting.
Together with the theorem of Gatermann \& Parrilo this yields Theorem~\ref{thm:GP}.

\begin{lemma}
\label{lem:dim-zero}
The dimension $m_{\lambda}$ of ${\bf{Q}}^{\ell}_{\lambda}$ in the isotypic decomposition of $\bf{Q}^{\ell}$ is zero unless $\lambda\ge _{\lex} (m-\ell,1,\ldots,1)$.
\end{lemma}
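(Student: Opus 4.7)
The plan is to present ${\bf Q}^{\ell}$ as an $S_m$-equivariant quotient of a concrete permutation representation and then invoke Young's rule to identify the admissible irreducible constituents. By combining Propositions~\ref{lem:not-partial-vanish} and~\ref{lem:deg-blowup}, every element of ${\bf Q}^{\ell}$ can be written as a linear combination of monomials $y_S$ with $S$ a partial schedule of size exactly $\ell$. Let $M$ denote the formal $\mathbb{R}$-span of these monomials; since the relations defining $\sched$ are $S_m$-stable, ${\bf Q}^{\ell}$ is an $S_m$-equivariant quotient of $M$, so irreducibility of Specht modules in characteristic zero guarantees that every irreducible constituent of ${\bf Q}^{\ell}$ already appears in $M$.

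Next, I would decompose $M$ according to profile: $M=\bigoplus_{\gamma:\|\gamma\|=\ell}M_\gamma$, where $M_\gamma$ is spanned by the monomials whose underlying partial schedules have profile $\gamma$. The $S_m$-action preserves the profile, so each $M_\gamma$ is invariant. Fixing a representative $y_S$ of profile $\gamma$, its stabilizer in $S_m$ is the Young subgroup $S_{m-\ell}\times\prod_{C\in\text{supp}(\gamma)}S_{\gamma(C)}$, where $S_{m-\ell}$ permutes $[m]\setminus\mathcal{M}(S)$ and each $S_{\gamma(C)}$ permutes the machines assigned to $C$. Hence $M_\gamma$ is isomorphic to the permutation module $M^{\mu(\gamma)}$ associated to the composition $\mu(\gamma)=(m-\ell,\gamma(C_{i_1}),\gamma(C_{i_2}),\ldots)$, which always contains $m-\ell$ as one of its parts. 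By Young's rule (see~\cite{sagan01}), $M^{\mu(\gamma)}\cong\bigoplus_\lambda K_{\lambda,\mu(\gamma)}\,S^\lambda$, where $\lambda$ ranges over partitions dominating the partition obtained by sorting $\mu(\gamma)$ in decreasing order. Since $m-\ell$ is a part of $\mu(\gamma)$, the largest entry of the sorted composition is at least $m-\ell$, and the dominance condition forces $\lambda_1\ge m-\ell$.

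It remains to check that $\lambda_1\ge m-\ell$ is equivalent to $\lambda\ge_{\lex}(m-\ell,1,\ldots,1)$ for any partition $\lambda$ of $m$. If $\lambda_1>m-\ell$ this is immediate. If $\lambda_1=m-\ell$, then the remaining parts form a partition of $\ell$: either all remaining parts equal $1$, in which case $\lambda=(m-\ell,1,\ldots,1)$, or $\lambda_2\ge 2>1$ and thus $\lambda>_{\lex}(m-\ell,1,\ldots,1)$. The main point requiring care is the transfer of isotypic information from $M$ to ${\bf Q}^{\ell}$; this is handled cleanly by the irreducibility of Specht modules over $\mathbb{R}$ in characteristic zero, which ensures that passing to an $S_m$-equivariant quotient can only decrease multiplicities and never introduce new isotypic components.
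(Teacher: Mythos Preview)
Your proof is correct and follows a somewhat different route from the paper. The paper observes directly that each monomial $y_S$ (with at most $\ell$ machine indices appearing) is fixed by the row group of some hook tableau of shape $(m-\ell,1,\ldots,1)$, hence lies in the corresponding row-invariant subspace; it then invokes \cite[Lemma~1]{RSST18} to conclude that only partitions $\lambda\unrhd(m-\ell,1,\ldots,1)$ can occur, and finishes by checking that for the hook $(m-\ell,1,\ldots,1)$ dominance coincides with lexicographic order. You instead realize ${\bf Q}^{\ell}$ as an $S_m$-equivariant quotient of a direct sum of explicit permutation modules $M^{\mu(\gamma)}$ (one per profile $\gamma$), compute the stabilizer of a representative monomial as a Young subgroup, and apply Young's rule directly to read off the constraint $\lambda_1\ge m-\ell$. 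Both arguments ultimately rest on the same fact---that the Specht module $S^\lambda$ appears in $M^\mu$ only when $\lambda$ dominates the sorted $\mu$---but your version makes the permutation-module structure explicit and is self-contained modulo Young's rule, whereas the paper's version is shorter but outsources the key step to~\cite{RSST18}. Your profile decomposition also yields slightly finer information (the precise Young subgroup for each $\gamma$), though this extra precision is not needed for the lemma.
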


\begin{proof}
Let $y_S$ be a monomial of degree at most $\ell$ with $S=\{(i_k,C_k):k\in [\ell]\}$.
In particular, $|\{i_k:k\in [\ell]\}|\le \ell$.
Let $\tau$ be any tableau with shape $(m-\ell,1,\ldots,1)$, where the tail of $\tau$ contains every elements of $\{i_k:k\in [\ell]\}$.
The subgroup $\mathcal{R}_{\tau}$ fixes $S$, therefore $y_S\in {\bf{W}}^{\ell}_{\tau}$, and we have then 
\[{\bf{Q}}^{\ell}\subseteq \bigoplus_{\tau:\text{shape}(\tau)=(m-\ell,1,\ldots,1)}{\bf{W}}^{\ell}_{\tau}\subseteq  \bigoplus_{\lambda \unrhd (m-\ell,1,\ldots,1)}{\bf{Q}}^{\ell}_{\lambda},\]
where the second containment holds by~\cite[Lemma 1]{RSST18}.
To conclude, observe that if $\lambda \unrhd (m-\ell,1,\ldots,1)$ then $\lambda_1\ge m-\ell$.
Since $\lambda\vdash m$, the maximum number of parts for $\lambda$ is $m-\lambda_1\le \ell$, that is, $\lambda$ has at most $\ell+1$ parts. 
Therefore, $\lambda \unrhd (m-\ell,1,\ldots,1)$ if and only if $\lambda \ge_{\lex} (m-\ell,1,\ldots,1)$.
\end{proof}

\begin{proof}[Proof of Theorem~\ref{thm:GP}]
Let $g\in \RR[y]/\sched$ be a degree-$\ell$ $\sos$ and $S_m$-invariant polynomial.
By Theorem~\ref{thm:GPoriginal} and Lemma~\ref{lem:dim-zero}, for each $\lambda\in \Lambda_{\ell}$ there exists a positive semidefinite matrix $Y^{\lambda}$ such that 
$g=\sum_{\lambda \in \Lambda_{\ell}}\langle Q_{\lambda},Y^{\lambda}\rangle$.
Since $\{b_1^{\lambda},\ldots,b_{m_{\lambda}}^{\lambda}\}\subseteq \text{span}(\mathcal{P}^{\lambda})$, there exists a real matrix $\mathcal{T}_{\lambda}$ such that $\mathcal{T}_{\lambda}(p_1^{\lambda},\ldots,p_{\ell_{\lambda}}^{\lambda})=(b_1^{\lambda},\ldots,b_{m_{\lambda}}^{\lambda})$.
Consider the congruent transformation $M_{\lambda}=\mathcal{T}_{\lambda}^{\top}Q_{\lambda}\mathcal{T}_{\lambda}$.
In particular, $M_{\lambda}$ is also positive semidefinite.
Furthermore, 
\[{\bf b}^{\top}Q_{\lambda}{\bf b}=(\mathcal{T}_{\lambda}{\bf p})^{\top}Q_{\lambda}(\mathcal{T}_{\lambda}{\bf p})={\bf p}^{\top}M_{\lambda}{\bf p},\]
where ${\bf b}=(b_1^{\lambda},\ldots,b_{m_{\lambda}}^{\lambda})$ and ${\bf p}=(p_1^{\lambda},\ldots,p_{\ell_{\lambda}}^{\lambda})$.
That is, $g=\sum_{\lambda \in \Lambda_{\ell}}\langle Q_{\lambda},Y^{\lambda}\rangle=\sum_{\lambda \in \Lambda_{\ell}}\langle M_{\lambda},Z^{\lambda}\rangle$.
\end{proof}

\section{\sos{} Lower Bound for the Assignment Linear Program}
\label{sec:appendixB}

We now show that the lower bound of Theorem~\ref{thm:negative} translates to the assignment linear program.
Recall that the $r$-th level of the \sos{} hierarchy corresponds to a semidefinite program with variables $y_S$ for any subset $S\subseteq E$ with $|S|\le r$. The inequalities defining this program can be obtained by considering properties \ref{prop:normal}-\ref{prop:SA} in the definition of degree-$r$ \sos{} pseudoexpectations and identifying $\widetilde{\mathbb{E}}(x_S)=y_S$; see for example~\cite{mastrolilli2017high} for details. For any polytope $P\subseteq [0,1]^E$, we denote by $\sos{}_r(P)$ the projection of the $r$-th level of the \sos{} hierarchy over $y_i=y_{\{i\}}$ for each $i\in E$.
Au and Tun\c{c}el~\cite[Proposition 1]{au_elementary_2018} showed that for any polytope $P\subseteq [0,1]^E$, if $L:\mathbb{R}^E\rightarrow \mathbb{R}^E$ is an affine transformation such that $L(x)\in[0,1]^E$ for all elements in the unit hypercube $x\in[0,1]^E$, then $\sos{}_r(L(P))= L(\sos{}_r(P))$. In our case, we consider the configuration linear program and the assignment linear program within the same space. Let $T$ be a target makespan and consider 
 \[
  P = \mathbb{R}^{[m]\times J}\times\clp{}(T)=  \{(x,y)\in \mathbb{R}^{[m]\times J}\times\mathbb{R}^{[m]\times\mathcal{C}}: y \in \clp{}(T)\}.
 \]
 We define the projection $L(x,y)=(x',0)$ where $x'$ is defined as
\[x'_{ij} = \frac{1}{n_{p_j}}\sum_{C\in \mathcal{C}} m(C,p_j)\cdot y_{iC}  \qquad \text{for all } i\in [m]\text{ and for all }j\in J.\] Notice that $x'$ belongs to the assignment linear program, and hence $L(P)\subseteq \text{assign}(T)\times [0,1]^{[m]\times\mathcal{C}}$ is within the unit hypercube and the result by Au and Tun\c{c}el can be applied. 
Therefore,  \[L(\sos{}_{r+1}(P))= \sos{}_{r+1}(L(P))\subseteq \sos{}_r(\text{assign}(T)\times [0,1]^{[m]\times\mathcal{C}}),\] where the last inclusion follows since $L(P)\subseteq \text{assign}(T)\times [0,1]^{[m]\times\mathcal{C}}$ and the general property of the next lemma.
We remark that this is enough to get an integrality gap of $1.0009$ for $\Omega(n)$ rounds of the \sos{} hierarchy applied to the assignment linear program.

\begin{lemma}If $P$ and $Q$ are two polytopes with $P\subseteq Q$, then $\sos_{r+1}{}(P)\subseteq \sos_r{}(Q)$.\end{lemma}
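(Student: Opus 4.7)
My plan is to start from a degree-$(r+1)$ \sos{} pseudoexpectation $\pseudo$ for $P$ that witnesses $y \in \sos_{r+1}(P)$ (so $\pseudo(x_e)=y_e$ for every $e\in E$), and show that its restriction $\pseudo'$ to polynomials of degree at most $r$ is a valid degree-$r$ \sos{} pseudoexpectation for $Q$. Since $\pseudo'$ inherits all low-degree moments of $\pseudo$, in particular $\pseudo'(x_e)=y_e$, this will give $y\in\sos_r(Q)$ and hence the lemma.

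Three of the four defining pseudoexpectation conditions for $Q$ will transfer essentially for free from $\pseudo$ to $\pseudo'$: normalization \ref{prop:normal}, the hypercube relations (automatic through working in $\RR[x]/\I_E$), and the generic SoS positivity \ref{prop:sos1}. They hold because they involve only constraints common to $P$ and $Q$, and the degree-$(r+1)$ versions that $\pseudo$ satisfies on $P$ readily imply the degree-$r$ versions for $\pseudo'$. The genuine content lies in verifying \ref{prop:sos2} (and \ref{prop:SA} for any defining equalities) for the constraints of $Q$, and this step is where I expect the main obstacle.

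The key idea is to turn the polyhedral inclusion $P\subseteq Q$ into an algebraic certificate via Farkas' lemma / LP duality. For each defining inequality $h_j(x)\ge 0$ of $Q$, since $h_j$ is affine and valid on the polytope $P$, LP duality produces a constant $c_0\ge 0$ and nonnegative multipliers $\lambda_i\ge 0$ such that
\[
h_j \;=\; c_0 \;+\; \sum_{i} \lambda_i\, g_i \;+\; \sum_{l} \mu_l\, r_l,
\]
where $\{g_i\ge 0\}$ and $\{r_l=0\}$ are the inequality and equality constraints of $P$ (and $\mu_l\in\RR$ are unrestricted). Applying $\pseudo$ to $f^2 h_j$, the equality contribution vanishes by \ref{prop:SA} for $\pseudo$ on $P$, giving $\pseudo(f^2 h_j)=c_0\,\pseudo(f^2)+\sum_i \lambda_i\,\pseudo(f^2 g_i)\ge 0$. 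The degree bookkeeping is where the one round of slack is consumed: $g_i$ and $h_j$ are both affine, so $\deg(f^2 g_i)=\deg(f^2)+1$, and the hypothesis $\deg(f^2 h_j)\le r$ keeps each $\deg(f^2 g_i)\le r\le r+1$, which is exactly what the level of $\pseudo$ on $P$ guarantees. Defining equalities of $Q$ are handled analogously by running the decomposition for both $h\ge 0$ and $-h\ge 0$ on $P$ and combining the two sign conditions.

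In summary, the proof has three moving pieces: defining $\pseudo'$ as the restriction of $\pseudo$, extracting the Farkas certificate for each defining constraint of $Q$, and verifying the degree bounds. The Farkas step is the substantive ingredient; once it is in place, $\pseudo'$ is a valid degree-$r$ \sos{} pseudoexpectation for $Q$, so $y\in\sos_r(Q)$ and the inclusion follows.
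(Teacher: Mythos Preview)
Your proposal is correct and follows essentially the same route as the paper: both arguments show that a degree-$(r+1)$ \sos{} pseudoexpectation for $P$ serves as a degree-$r$ \sos{} pseudoexpectation for $Q$ by expressing each defining inequality of $Q$ as a nonnegative constant plus a nonnegative combination of the defining inequalities of $P$ via Farkas' lemma, and then pushing this identity through $\pseudo(f^2\,\cdot\,)$. The only cosmetic difference is that you explicitly allow equality constraints $r_l=0$ in the Farkas certificate and phrase things in terms of a ``restriction'' $\pseudo'$, whereas the paper works directly with $\pseudo$ and inequality-only descriptions; neither changes the substance of the argument.
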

\begin{proof}
Let us assume that $P =\{x\in \RR^n:Ax\le b\}$ and $Q=\{x\in \RR^n:Cx\le d\}$ for some $A\in\mathbb{R}^{m\times n},b\in \mathbb{R}^m$, $C\in \mathbb{R}^{p\times n}$ and $d\in\mathbb{R}^{p}$. Let $a_i^{\top}$ be the $i$-th row of $A$ and $c_i^{\top}$ the $i$-th row of $C$. We will show that a degree-$(r+1)$ \sos{} pseudoexpectation for $P$ is also a degree-$r$ \sos{} pseudoexpectation for $Q$. Indeed, recall that if $P\subseteq Q$, then every inequality $c_i^{\top} x\le d_i$, where $c_i$ is the $i$-th row of $C$, is a valid inequality for $P$. Hence, by Farkas lemma, for each row $i\in [p]$ there exists a non-negative vector $\gamma\in \mathbb{R}^m$ such that $c_i=\gamma^{\top} A$ and $\gamma^{\top} b \le d_i$. Let $\widetilde{\mathbb{E}}$ be a degree-$(r+1)$ \sos{} pseudoexpectation for $P$. We need to show that property \ref{prop:sos2} is satisfied for every inequality $(d_i-c_i^{\top}x)\ge 0$, with $i\in [p]$. Let $f\in \RR[x]/{\I}_n$ with $\deg\left(\overline{f^2(d_i-c_i^{\top}x)}\right)\le r$. By basic algebraic manipulation it holds that
\begin{equation*}
 \widetilde{\mathbb{E}}(f^2(d_i-c_i^{\top}x)) = (d_i - \gamma^{\top}b)\widetilde{\mathbb{E}}(f^2)+\sum_{j=1}^m\gamma_j \widetilde{\mathbb{E}}(f^2(b_j-a_j^{\top}x)) \ge 0,
\end{equation*} 
where the last inequality follows from the construction of $\gamma$, the fact that for each $j\in [m]$ we have $\deg\left(\overline{f^2(b_j-a_j^{\top}x)}\right)\le r+1$,  and hence $\widetilde{\mathbb{E}}(f^2(b_j-a_j^{\top}x))\ge 0$ and $\widetilde{\mathbb{E}}(f^2)\ge 0$.

\end{proof}

\end{document}